\documentclass[a4paper, USenglish, 11pt, cleveref, autoref, thm-restate]{article}

\title{Computing Tarski Fixed Points in Financial Networks}

\author{Leander Besting\thanks{Faculty of Computer Science, RWTH Aachen University, Germany. \texttt{leander.besting@rwth-aachen.de}}%
\and Martin Hoefer\thanks{Faculty of Computer Science, RWTH Aachen University, Germany. \texttt{mhoefer@cs.rwth-aachen.de}}%
\and Lars Huth\thanks{Faculty of Computer Science, RWTH Aachen University, Germany. \texttt{huth@algo.rwth-aachen.de}}}

\date{}

\usepackage[dvipsnames]{xcolor}
\usepackage[margin=0.9in]{geometry}
\usepackage{verbatim}
\usepackage{amssymb}
\usepackage{amsmath}
\usepackage{graphicx}
\graphicspath{{./images/}}
\usepackage{hyperref}
\usepackage{amsthm,mathabx}
\usepackage{subcaption}
\usepackage{cite}

\usepackage{pgfplots}
\pgfplotsset{compat=1.18}
\usepgfplotslibrary{fillbetween}

\usepackage{tikz}
\usetikzlibrary{shapes, arrows, arrows.meta, positioning, patterns, decorations.pathreplacing, decorations.pathmorphing, calc}

\usepackage[ruled,vlined,linesnumbered]{algorithm2e}

\usepackage{todonotes}

\newcommand{\R}{\mathbb{R}}

\newcommand{\veca}{\mathbf{a}}
\newcommand{\vecb}{\mathbf{b}}

\newcommand{\vecp}{\mathbf{p}}

\newcommand{\vecr}{\mathbf{r}}

\newcommand{\calF}{\mathcal{F}}

\newcommand{\calA}{\mathcal{A}}

\newcommand{\classNP}{\textsf{NP}}

\newcommand{\classPPAD}{\textsf{PPAD}}
\newcommand{\classFIXP}{\textsf{FIXP}}

\renewcommand{\epsilon}{\varepsilon}
\renewcommand{\leq}{\leqslant}
\renewcommand{\geq}{\geqslant}

\newtheorem{definition}{Definition}
\newtheorem{theorem}{Theorem}
\newtheorem{lemma}{Lemma}
\newtheorem{corollary}{Corollary}
\newtheorem{example}{Example}

\begin{document}

%
%

\maketitle

\begin{abstract}
Modern financial networks are highly connected and result in complex interdependencies of the involved institutions. In the prominent Eisenberg-Noe model~\cite{EisenbergN01}, a fundamental aspect is \emph{clearing} -- to determine the amount of assets available to each financial institution in the presence of potential defaults and bankruptcy. A clearing state represents a fixed point that satisfies a set of natural axioms. Existence can be established (even in broad generalizations of the model) using Tarski's theorem. 

While a \emph{maximal} fixed point can be computed in polynomial time, the complexity of computing other fixed points is open. In this paper, we provide an efficient algorithm to compute a \emph{minimal} fixed point that runs in strongly polynomial time. It applies in a broad generalization of the Eisenberg-Noe model with any monotone, piecewise-linear payment functions and default costs. Moreover, in this scenario we provide a polynomial-time algorithm to compute a \emph{maximal} fixed point. For networks without default costs, we can efficiently decide the existence of fixed points in a given range.

We also study claims trading, a local network adjustment to improve clearing, when networks are evaluated with minimal clearing. We provide an efficient algorithm to decide existence of Pareto-improving trades and compute optimal ones if they exist.
\end{abstract}

\clearpage

\section{Introduction}

Modern financial systems exhibit highly complex debt relationships between their constituents. An important concern in these networks is systemic risk -- after a shock, financial institutions become pressured to pay back (or \emph{clear}) their debt. This leaves some of them in default. Consequently, creditors receiving little or no payments from their defaulting debtors might in turn be unable to meet their own obligations. As such, default can quickly propagate throughout the whole network. This is a realistic concern with the most well-known occurrence being the financial crisis of 2008 (and many other, less severe episodes since then). 

The canonical framework to understand properties of debt clearing in financial networks is the Eisenberg-Noe model~\cite{EisenbergN01}. It has, for example, been used by the European Central Bank in its STAMP\texteuro \, framework for financial stress-testing \cite{stampe}. There are $n$ financial institutions (termed ``banks'' throughout), which are represented by nodes in an edge-weighted, directed graph. There are $m$ edges, each representing a debt claim, with an edge weight expressing the liability of the claim. Banks have (usually non-negative) external assets, which capture funds available for clearing that are not part of the claim network. The basic solution concept in the Eisenberg-Noe model is a \emph{clearing state}, which yields an assignment of assets of banks and payments on each edge that satisfies a set of natural axioms. When a creditor bank is in default, its claims will not be valued by the liability but only the amount that the creditor will pay off in accordance with their legal requirements\footnote{In the United States, the legal framework for this is given by Chapter 11 bankruptcy \cite{chapter11}.}. As one of the axioms, the Eisenberg-Noe model assumes \emph{proportional} debt clearing, i.e., for a bank in default there is a \emph{recovery rate} given by the ratio of total assets available to the total liabilities of the bank. The recovery rate translates directly into proportional payments for all claims where the bank is the creditor. For a more formal discussion of the model, see Section~\ref{sec:model} below.

Clearing states in the Eisenberg-Noe model are Tarski fixed points. The Knaster-Tarski theorem states that for any monotone function mapping a complete lattice to itself, the set of fixed points of the function constitutes a complete lattice. This implies, in particular, that at least one such fixed point exists and that fixed points might not be unique. The complexity of computing Tarski fixed points has been of significant interest recently, especially on the $k$-dimensional grid (see, e.g., \cite{ChenLY23, BranzeiPR25} and the references therein). 

While it is known how to compute the \emph{maximal} fixed point in the Eisenberg-Noe model in polynomial time~\cite{EisenbergN01,RogersV13}, we concentrate on finding every other fixed point (most notably, the \emph{minimal} one). A maximal clearing state is desirable from a centralized perspective, since it yields pointwise maximal assets and payments for each bank and claim, respectively. However, it is unlikely that in the present multi-national financial system, a central coordination agency can dictate such assets and payments in times of crises. As such, arguably, the emergence of other clearing states is more realistic. The \emph{minimal} one has been shown to emerge as the limit of a natural sequential clearing process~\cite{CsokaH18}. More generally, other clearing states can also emerge if agents update their recovery rates in a sequential fashion~\cite{PappW21sequential}. However, to our knowledge, finding efficient algorithms to compute such clearing states or characterizing their computational complexity are important open problems. 

\paragraph{Our Results and Techniques}
We show that a minimal clearing state in the Eisenberg-Noe model can be computed in polynomial time. In contrast to computing the maximal one, it is not possible to formulate the problem directly as an LP. Our approach in Section~\ref{sec:computeMin} can be seen as a careful adjustment of the bottom iteration for Tarski fixed points. A naive implementation of the bottom iteration (which results, e.g., from distributed clearing processes~\cite{CsokaH18}) would start with the external assets at each bank and then propagate payments and assets into the system. This can take infinite time to converge to the fixed point. Instead of propagating all external assets directly, we inject external assets sequentially and simulate their (infinite) propagation by solving several carefully designed LPs. In particular, upon injection of additional assets at a node $v$, our algorithm distinguishes two cases: (1) all additional assets circulate and gradually reach a ``sink'' bank without any solvency event, and (2) some assets keep circulating until another bank becomes solvent. Case (1) yields a linear increase in the payments of every claim, and we compute the slopes by solving a polynomial-sized system of linear equations. Then we can inject the assets at $v$ and compute the increase in assets and payments using the slopes. Case (2) arises if any positive portion of the assets injected at $v$ eventually reaches a part $C$ of the network that we term \emph{flooded} -- where all outgoing paths of any node eventually return to that node. More formally, consider the strongly connected components (SCCs) of the remaining network of open claims. The graph of SCCs is a DAG, and a flooded part $C$ is exactly a sink component of the SCC graph, which contains more than one bank. In this case, we do not inject any additional assets at $v$ but instead increase the payments within the sink component $C$ by a circulation (consistent with proportional payments) until a bank becomes solvent. This can be computed using a suitable LP. We then update the SCC graph and again attempt to inject additional assets at $v$. Overall, we see that in each iteration, we either fully inject the external assets of a bank or create a solvent bank. Thus, at most $2n$ iterations are needed.

Our algorithmic ideas turn out to be very powerful to address many generalizations of the problem. First, we relax the condition of proportional payments to the class of arbitrary \emph{monotone and piecewise-linear} payment functions. This class includes many important examples considered in the literature recently, including edge ranking~\cite{BertschingerHS25,HoeferW22,FroeseHW25} (or singleton liability priorities~\cite{IoannidisKV23}), unit-ranking~\cite{CsokaH18,BertschingerHS25,HoeferW22}, or priority-proportional~\cite{KanellopoulosKZ24, Elsinger09}. By using appropriate granularity, we can handle arbitrary monotone payment functions via approximation with piecewise-linear ones. The extension requires applying our algorithm and the analysis in phases, where each phase ends with reaching an interval border for the payment function of any edge. In each phase, all payments behave linearly, so slopes in case (1) and circulation in case (2) can be obtained by setting up and solving appropriate systems of linear equations. The interval borders of the payment functions limit the increase in external assets of $v$ in case (1), as well as the amount of the circulation in case (2).

Second, we show how to handle an extension of the model to \emph{default costs} for insolvent banks~\cite{RogersV13}. Here, the assets of each defaulting bank are additionally reduced by a constant factor. This adjustment represents a linear decrease in the assets, which can be integrated into the network. Since we monotonically increase payments and assets throughout, we (only) need to be careful when a bank $v$ becomes solvent. The challenge is that the fixed point function ceases to be continuous from below. Thus, a standard bottom iteration might fail to guarantee convergence to a minimal clearing state in the limit. 

For each bank $v$, we can represent default cost using an auxiliary claim to an auxiliary individual ``sink bank'' such that this bank receives the default cost. When $v$ becomes solvent, the default cost vanishes, we remove the auxiliary sink bank, and inject the remaining assets as additional external assets at the out-neighbors of $v$. This view directly indicates how our approach of sequential injection of external assets can be extended to handle default cost. 

Third, for networks without default cost, we can also compute \emph{arbitrary} clearing states. When we consider any clearing state, the difference flow between the clearing state and a minimal clearing state is a circulation. As such, by applying case (2) of the algorithm above and iteratively ``flooding'' sink components (partially, in any order), we can produce any possible clearing state. We use this insight to tackle a \emph{range clearing problem}: For a subset of relevant banks $S \subseteq V$, there is a closed target interval $I_v \subset \R$, for each $v \in S$. The question is whether there exists a clearing state $\veca$ such that $a_v \in I_v$ for all $v \in S$. We show that this problem can be solved in polynomial time.

More fundamentally, in Section~\ref{sec:computeMax} we show an interesting structural equivalence. Every financial network with piecewise-linear payment functions can be transformed into an equivalent network with priority-proportional payments. The transformation increases the representation by at most a polynomial factor. This shows the generality of priority-proportional payment functions. Moreover, it allows applying an existing algorithm for the computation of \emph{maximal} clearing states~\cite{KanellopoulosKZ24} to networks with arbitrary monotone, piecewise-linear payments and default cost. However, the algorithm in~\cite{KanellopoulosKZ24} is technically not polynomial-time since it relies on pushing adjustment steps to the limit to compute the payments for each iteration. Instead, using our approach we can indeed implement the procedure in polynomial time.

Finally, in Section~\ref{sec:trade} we study \emph{claims trades} as a network adjustment to influence the minimal clearing state. The operation was recently formalized in~\cite{HoeferVW24}. In a claims trade, a given bank $w$ strives to buy a given claim $e=(u,v)$ from a creditor bank $v$. Formally, $w$ should pay a return $\rho$ from its external assets to $v$ and become creditor of claim $e$. The goal is to give liquidity to $v$ and raise the assets of $v$ in order to mitigate contagion effects as much as possible. In contrast to previous work, we focus on the scenario when the network is evaluated with a minimal clearing state. 

Ideally, one would want to compute a return $\rho$ such that \emph{both} $v$ and $w$ \emph{strictly} benefit from the trade (w.r.t.\ the minimal clearing state). We show that this is impossible for all networks with strictly monotone payment functions. We show a novel characterization for banks, for which the clearing state is not unique (Lemma~\ref{lem:characterize}). We then focus on creditor-positive trades, in which $v$ strictly profits but $w$ stays at least indifferent. For networks with piecewise-linear payment functions, we show how to decide in polynomial time if a creditor-positive claims trade exists, and how to compute one that maximizes the post-trade assets of $v$ if it exists. Our proof shows that the set of returns that yield creditor-positive trades forms a consecutive interval. Indeed, the largest such return maximizes the assets of $v$, and we can find it by a combination of binary search and maximization of suitable LPs.


\paragraph{Further Related Work}
Algorithmic aspects of the Eisenberg-Noe model for financial networks have become a popular topic in recent years. A \emph{maximal} clearing state can be computed in polynomial time for proportional payments~\cite{EisenbergN01}, even more generally in networks with default cost~\cite{RogersV13} and with priority-proportional payments~\cite{KanellopoulosKZ24}. It is known that decentralized update procedures (which essentially implement a bottom-iteration) converge to the \emph{minimal} clearing state~\cite{CsokaH18}, but these approaches do not necessarily run in polynomial time. Up to our knowledge, an efficient algorithm for minimal clearing has been derived only for networks with edge-ranking payments without default cost~\cite{BertschingerHS25}.

The complexity of computing clearing states has also been considered in an extension of the model to credit-default swaps~\cite{SchuldenzuckerSB17,SchuldenzuckerSB20}, where clearing states exist but their entries are not necessarily rational. Some notions of approximation yield \classPPAD-hardness results~\cite{SchuldenzuckerSB17}, even for constant approximation factors~\cite{DohnHK25,IoannidisKV23PPAD}. Stronger notions of approximation even give rise to \classFIXP-hardness~\cite{IoannidisKV22,IoannidisKV23}. 

Claims trades have recently been studied in the Eisenberg-Noe model without default cost and with \emph{maximal} clearing~\cite{HoeferVW24}. Since a claims trade can be interpreted as a debt swap operation~\cite{FroeseHW25,PappW21EC}, it is impossible that both creditor and buyer banks profit strictly. An optimal creditor-positive trade can be computed in polynomial time for proportional and edge-ranking payment functions. For general functions, it is shown how to compute approximately optimal trades. When trading either multiple incoming or multiple outgoing claims of a single bank, finding optimal trades becomes \classNP-hard. For incoming claims, there is a bicriteria approximation for all monotone payment functions, for outgoing the problem is \classNP-hard to approximate within polynomial factors even for edge-ranking functions. These results were extended and improved in a model with \emph{fractional} claims trades, for networks with proportional payments and default cost~\cite{HoeferHW25}.

Our paper is related to a growing body of work that studies structural and algorithmic properties of game-theoretic scenarios based on the Eisenberg-Noe model. Aspects that have received attention include, for example, strategic payment allocation~\cite{BertschingerHS25,KanellopoulosKZ24,HoeferW22,PappW20}, forgiving, canceling or forwarding debt~\cite{KanellopoulosKZ25,KanellopoulosKZ23,TongKV24Cancel}, donations~\cite{TongKV24}, prepayments~\cite{ZhouWVBSCW24}, lending~\cite{EgressyPW24}, and more~\cite{BertschingerHKL23}. With the exception of~\cite{HoeferW22}, all these works consider only maximal clearing states.

More generally, there is work on the complexity of improvement measures for the clearing properties in Eisenberg-Noe financial networks, such as debt swapping~\cite{PappW21EC,FroeseHW25} or portfolio compression. In portfolio compression, a debt cycle is eliminated from the network. For proportional payments, this can have counter-intuitive effects~\cite{SchuldenzuckerS20,Veraart22}, and many optimization questions surrounding this operation are \classNP-hard~\cite{AminiF23}.

\section{Model and Preliminaries}
\label{sec:model}
\paragraph{Network Model}
We define a financial network \(\calF = (V,E,\mathbf{\ell}, \mathbf{a^x})\). There is a set $V$ of $n$ banks, and a set $E$ of $m$ directed edges. Each edge $e = (u,v) \in E$ has a non-negative edge weight $\ell_e$ that represents the \emph{liability} of a claim with debtor $u$ and creditor $v$. Each bank $v$ has non-negative \emph{external assets} $a^{(x)}_v \ge 0$. For simplicity, we assume that the network has no self-loops or multi-edges\footnote{These aspects can be incorporated by increasing the notational overhead. We leave the straightforward extension to the interested reader.}. We define the set of \textit{outgoing} and \textit{incoming} claims of $v$ by $E^+(v) = \{ e=(v,u) \in E \mid u\in V\}$ and $E^-(v) := \{ e=(u,v) \in E \mid u\in V\}$, respectively. The \emph{total outgoing} and \textit{incoming liabilities} of bank $v$ are  $L^+(v) = \sum_{e \in E^+(v)} \ell_e$ and $L^-(v) = \sum_{e \in E^-(v)} \ell_e$, respectively.

\paragraph{Payment Functions}
The basic solution concept in the Eisenberg-Noe model is a \emph{clearing state}, which defines a consistent set of assets $a_u \ge 0$ for each bank $u \in V$. In the standard model, each bank is assumed to clear debt \emph{proportionally}. Thus, with assets $a_u$, we have a \emph{recovery rate} $\min(1, a_v / L^+(v))$, such that each claim $e =(u,v)$ is cleared to the same fractional extent, i.e., $p_e(a_u) = \min(1, a_v / L^+(v)) \cdot \ell_e$. Using these payments, the assets have to satisfy the natural asset axioms: The \emph{(total) assets} of each bank are given by the external assets plus the incoming payments of other banks, $a_v^{(x)} + \sum_{(u,v) \in E^-(v)} p_{(u,v)}(a_u)$ for each $v \in V$. In particular, this implies that no bank will conduct fraud by generating money or holding back assets from paying its open claims.

We consider several extensions of the model. First, we consider the model with \emph{default cost}~\cite{RogersV13}. Each bank $v$ has a \emph{default rates} $\alpha_v, \beta_v \in [0,1]$. If the bank is insolvent, the available assets $a_v$ that can be used to clear debt are reduced to $a_v^{(\alpha,\beta)}(\veca) = \alpha_v a_v^{(x)} + \beta_v \sum_{(u,v) \in E^-(v)} p_{(u,v)}(a_u)$. The asset axioms for the vector of total assets $\veca = (a_v)_{v \in V}$ of the banks become
\begin{equation}
    \label{eqn:assetAx}
    a_v = \begin{cases} a_v^-(\veca) & \text{if $a_v^-(\veca) \ge L^+(v)$ (i.e., $v$ solvent), and}\\
                        a_v^{(\alpha,\beta)}(\veca) & \text{otherwise.}\end{cases}
\end{equation}
with the \emph{incoming assets} of $v$ (before default cost reduction) given by
\[
    a_v^{-}(\veca) = a^{(x)}_v + \sum_{(u,v) \in E^-(v)} p_{(u,v)}(a_u). 
\]
and the reduced assets due to default cost given by
\[
    a_v^{(\alpha,\beta)}(\veca) = \alpha_v a^{(x)}_v + \beta_v \sum_{(u,v) \in E^-(v)} p_{(u,v)}(a_u). 
\]
We recover the standard model without default cost when $\alpha_v = \beta_v = 1$ for all $v \in V$.

\newcommand{\RR}{\mathbb{R}}
Second, we address a general class of monotone payments~\cite{BertschingerHS25}. Each bank $u$ has a \emph{payment function} $p_e : \RR \to \RR$ for each claim $e = (u,v)$. A payment function satisfies, for each $e \in E$, $u \in U$ and $a_u,\varepsilon > 0$
\begin{equation}
\label{eqn:payAx}
    \begin{aligned}
        p_e(a_u) &\in [0,\ell_e]  && \text{(no edge under- or overpaid)}\\
        p_e(a_u) &\le p_e(a_u+\epsilon) && \text{(monotonicity)}\\
        \sum_{e \in E^+(u)} p_e(a_u) &= \min\{a_u, L^+(u)\}  && \text{(no fraud)}
    \end{aligned}
\end{equation}
These constraints imply that all $p_e$ must be continuous (since $p_e(a + \delta) - p_e(a) \leq \delta$ for any $\delta > 0$). The axioms are trivially fulfilled for proportional payments. Many other natural examples of payment functions from this class have been considered, e.g., priority/edge-ranking~\cite{BertschingerHS25,CsokaH18} (rank edges in an order and pay them sequentially until running out of assets), constrained equal awards or losses~\cite{CsokaH24} (all claims receive the same payment or the same non-payment, up their liability) or priority-proportional~\cite{KanellopoulosKZ24} (partition edges into sets, rank sets in an order, pay edges proportionally within each set, and sequentially over sets until running out of assets). All these examples share a natural property -- they are \emph{piecewise-linear}. In this paper, we concentrate on the class of monotone, piecewise-linear payment functions.

\begin{definition}
    A \emph{piecewise-linear} payment function $p_e : \RR \to \RR$ for an edge $e \in E^+(v)$ is given by $k_e \ge 1$ interval borders $0 = x_{e,0} < x_{e,1} < \ldots < x_{e,k_e} = L^+(v) < x_{e,k_e+1} = \infty$ and slopes $m_{e,i} \ge 0$ for each $i=1, \ldots, k_e+1$ such that
    $$  p_e(a) = m_{e,i} \cdot (a - x_{e,i-1}) + p_e(x_{e,i-1}) \qquad \text{ for any } a \in [x_{e,i-1}, x_{e,i}).$$
    Moreover, $p_e$ adheres to the three axioms in~\eqref{eqn:payAx}. 
\end{definition}
Note that $p_e(a) = \ell_e$ when $a \ge L^+(v) = x_{e,k_e}$. Thus $m_{e,k_e+1} = 0$. We further define $m_e(a)$ as the slope $m_{e,i}$ that applies for argument $a$, i.e., $m_e(a) = m_{e,i}$ such that $a \in [x_{e,i},x_{e,i+1})$. This implies that, for every $v \in V$,
\begin{equation}
    \label{eq:rowStochastic}
    \sum_{e \in E^+(v)} m_{e}(a) = \begin{cases} 1 & \text{ for all } 0 \le a < L^+(v) \\
    0 & \text{ for all } a \ge L^+(v). \end{cases}
\end{equation}
For each bank's available assets, we define the additional amount of assets until a new interval is reached by $\delta_e(a) = \min \{x_{e,i} - a \mid x_{e,i} > a\}$. We denote the total number of interval borders in $\calF$ by $k = \sum_{e \in E} k_{e} \ge m$. Finally, for a given vector of total assets $\veca$, we call an edge $e = (u,v)$ \emph{active} if $m_e(a_u) > 0$. More generally, the \emph{set of active edges} for $\veca$ is $E_\veca = \{ e \in E \mid e$ is active for $\veca \}$, and the \emph{active graph} is $G_\veca = (V,E_\veca)$. For each active edge $e \in E_\veca$, the \emph{active interval} is the index $i$ such that $p_e(a_v) \in [x_{e,i},x_{e,i+1})$.

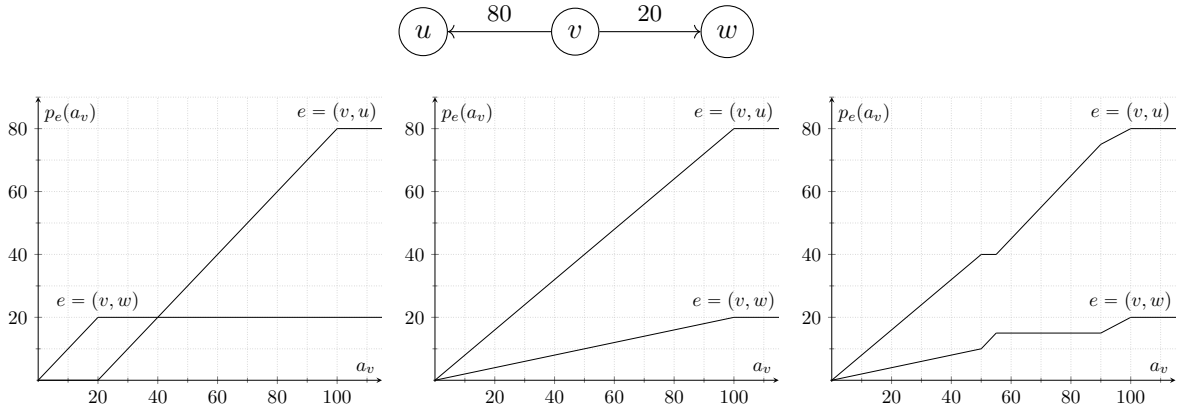
\begin{figure}
    \begin{subfigure}{.31\linewidth}
        \hspace{10pt}
    \end{subfigure}
    \begin{subfigure}{.22\linewidth}
        \begin{tikzpicture}
            \node[draw, circle] (u) at (-2,0) {\(u\)};
            \node[draw, circle] (v) at (0,0) {\(v\)};
            \node[draw, circle] (w) at (2,0) {\(w\)};
        
            \path[->]
                (v) edge node[pos=0.5,above] {\footnotesize\(80\)} (u)
                (v) edge node[pos=0.5,above] {\footnotesize\(20\)} (w);
        \end{tikzpicture}
    \end{subfigure}
    
    \begin{subfigure}{.31\linewidth}
        \vspace{10pt}
    \end{subfigure}
    
    \begin{subfigure}{.31\linewidth}
        \resizebox{\linewidth}{!}{%
            \begin{tikzpicture}
                \begin{axis}[
                    axis lines=middle,
                    axis line style={-stealth},
                    xmin=-.5, xmax=115, ymin=-.5, ymax=90,
                    xtick distance=20, 
                    ytick distance=20,
                    minor x tick num=1, 
                    minor y tick num=1,
                    xlabel=$a_v$,
                    ylabel=$p_e(a_v)$,
                    grid=both, 
                    grid style={thin, densely dotted, black!20},
                    minor grid style={thin, densely dotted, black!20} 
                        ]
                    \addplot[domain=0:20, name path=emptyduv] {0};
                    \addplot[domain=20:100, name path=inclineuv] {x - 20} node[above]{$e = (v,u)$};
                    \addplot[domain=0:20, name path=inclineuw] {x} node[above]{$e = (v,w)$};
                    \addplot[domain=100:120, name path=filleduv] {80};
                    \addplot[domain=20:120, name path=filledeuw] {20};
                \end{axis}
            \end{tikzpicture}
        }
    \end{subfigure}
    \begin{subfigure}{.31\linewidth}
        \resizebox{\linewidth}{!}{%
            \begin{tikzpicture}
                \begin{axis}[
                    axis lines=middle,
                    axis line style={-stealth},
                    xmin=-.5, xmax=115, ymin=-.5, ymax=90,
                    xtick distance=20, 
                    ytick distance=20,
                    minor x tick num=1, 
                    minor y tick num=1,
                    xlabel=$a_v$,
                    ylabel=$p_e(a_v)$,
                    grid=both, 
                    grid style={thin, densely dotted, black!20},
                    minor grid style={thin, densely dotted, black!20} 
                        ]
                    \addplot[domain=0:100, name path=inclineuv] {0.8*x} node[above]{$e = (v,u)$};
                    \addplot[domain=0:100, name path=inclineuw] {0.2*x} node[above]{$e = (v,w)$};
                    \addplot[domain=100:120, name path=filleduv] {80};
                    \addplot[domain=100:120, name path=filledeuw] {20};
                \end{axis}
            \end{tikzpicture}
        }
    \end{subfigure}
    \begin{subfigure}{.31\linewidth}
        \resizebox{\linewidth}{!}{%
            \begin{tikzpicture}
                \begin{axis}[
                    axis lines=middle,
                    axis line style={-stealth},
                    xmin=-.5, xmax=115, ymin=-.5, ymax=90,
                    xtick distance=20, 
                    ytick distance=20,
                    minor x tick num=1, 
                    minor y tick num=1,
                    xlabel=$a_v$,
                    ylabel=$p_e(a_v)$,
                    grid=both, 
                    grid style={thin, densely dotted, black!20},
                    minor grid style={thin, densely dotted, black!20} 
                        ]
                    \addplot[domain=0:50, name path=uv1] {0.8*x};
                    \addplot[domain=0:50, name path=uw1] {0.2*x};
                    \addplot[domain=50:55, name path=uv2] {x-40};
                    \addplot[domain=50:55, name path=uw2] {40};
                    \addplot[domain=55:90, name path=uv3] {15};
                    \addplot[domain=55:90, name path=uw3] {x-15};
                    \addplot[domain=90:100, name path=uv4] {(0.5*x) + 30} node[above]{$e = (v,u)$};
                    \addplot[domain=90:100, name path=uw4] {(0.5*x) - 30} node[above]{$e = (v,w)$};
                    \addplot[domain=100:120, name path=filleduv] {80};
                    \addplot[domain=100:120, name path=filledeuw] {20};
                \end{axis}
            \end{tikzpicture}
        }
    \end{subfigure}
    \caption{\label{fig:example} Consider the financial network displayed at the top. Payments of bank $v$ on each edge for edge-ranking payment functions with $(v,w)$ ranked first (left); proportional payment functions (middle); piecewise-linear functions with interval borders $0, 50,55,90,100,\infty$ (right).}
\end{figure}

\paragraph{Clearing States}
Given a financial network $\calF$ and piecewise-linear payment functions $\vecp = (p_e)_{e \in E}$, a \emph{clearing state} is simply a vector of assets $\veca = (a_v)_{v \in V}$ such that all asset axioms~\eqref{eqn:assetAx} are fulfilled. These axioms are fixed point conditions, since the value of $a_v$ depends on other $a_u$ (and potentially vice versa). Consider the space of all potential asset vectors $\mathcal{A} = \bigtimes_{v \in V} [0,L^-(v)+a_v^{(x)}]$. We define a function $\Phi_{\veca^{(x)}}(\veca) : \mathcal{A} \to \mathcal{A}$ resulting from applying the map defined by the asset axioms~\eqref{eqn:assetAx}. $\Phi_{\veca^{(x)}}$ is monotone since all $p_e$ are monotone. More formally, if $\veca' \ge \veca$ coordinate-wise, then $\Phi_{\veca^{(x)}}(\veca') \ge \Phi_{\veca^{(x)}}(\veca)$ coordinate-wise. The fixed points of $\Phi$ are the clearing states. Applying the Knaster-Tarski theorem, we see that the set $\mathcal{A}_f$ of fixed points and the coordinate-wise $\ge$-operation form a complete lattice.

We denote by $\hat{\veca}$ the minimal fixed point and by $\check{\veca}$ the maximal one. A natural attempt towards computation of $\hat{\veca}$ is a \emph{bottom iteration}: Starting from any $\veca^0 \le \hat{\veca}$ (say, $a_v^0 = a_v^{(x)}$ for all $v \in V$) we iteratively compute $\veca^{i+1} = \Phi_{\veca^{(x)}}(\veca^{i}) \le \hat{\veca}$. Monotonicity directly implies that $\veca^{i+1} \ge \veca^{i}$. In the model without default cost (all default rates $\alpha_v = \beta_v = 1$) and arbitrary monotone payment functions, it is easy to see that $\lim_{i\to \infty} \veca^i = \hat{\veca}$, but there are simple instances where $\veca^i \neq \hat{\veca}$ for every $i \ge 0$. More generally, when $\alpha_v$ or $\beta_v < 1$ for some $v \in V$, there are simple instances where $\lim_{i\to \infty} \veca^i \neq \hat{\veca}$. For example, bank $v$ can be solvent in $\hat{\veca}$ but insolvent in $\veca^i$, for every $i \ge 0$. Then the limit of the total assets of $v$ is $\lim_{i \to \infty} a_v^i < L^+(v) \le \hat{a}_v$. For small examples of the bottom iteration, see Examples~\ref{exmpl:bottom iteration} and~\ref{exmpl:bottom iteration} in Appendix~\ref{app:examples}.

\paragraph{Claims Trades}
An operation to improve clearing states are claims trades. In a claims trade, we are given a claim $e = (u,v)$ and a potential buyer bank $w$. The bank $w$ buys the claim by paying an amount of external assets $\rho$ to $v$. We call $\rho$ the \emph{return}. In turn, the creditor of the claim $e$ is changed from $v$ to $w$. We assume that the return is upper bounded by $\rho \le \min\{a^{(x)}_w, \ell_{(u,v)}\}$. After the trade, the external assets of $w$ are $a_w^{(x)} - \rho$ and the ones of $v$ are $a_v^{(x)} + \rho$. A claims trade can represent a \emph{donation}, in which $w$ only transfers $\rho$ external assets without changing any edges in the network\footnote{To formulate donations as special cases of claims trades, we may simply assume that $w$ trades an auxiliary claim $(u,v)$ with liability $a^{(x)}_w$ and no existing payments from an auxiliary debtor bank $u$ of $v$.}.

We consider claims trades when the network is evaluated with $\hat{\veca}$. A claims trade is called \emph{creditor-positive} if there exists a return $\rho$ such that the assets of creditor bank $v$ are strictly improved and the assets of buyer bank $w$ remain at least the same. We call $\rho$ a \emph{creditor-positive return}. If it exists, we look for an \emph{optimal} one, i.e., a creditor-positive return $\rho^*$ that maximizes the post-trade assets of $v$. Note that any creditor-positive trade Pareto-improves the assets in the entire network.

\section{Computing Clearing States}
\label{sec:computeMin}


\newcommand{\vecZero}{\mathbf{0}}
\newcommand{\vecs}{\mathbf{s}}
\newcommand{\vecd}{\mathbf{d}}
\newcommand{\vece}{\mathbf{e}}
\newcommand{\matM}{\mathbf{M}}
\newcommand{\matI}{\mathbf{I}}
\newcommand{\calP}{\mathcal{P}}

\begin{algorithm}[t]
\DontPrintSemicolon
\SetKwInOut{Input}{Input}\SetKwInOut{Output}{Output}

\Input{Financial network $\calF$}
\Output{Minimal clearing state $\hat{\veca}$} \medskip
    $\vecb^{(x)} \gets \vecZero$, $\vecb \gets \vecZero$ \tcp*{External and outgoing assets}
    $D \gets \{u \in V \mid \alpha_u < 1 \text{ or } \beta_u < 1 \}$ \label{line:D}\;
    Adjust each bank $u \in D$ with auxiliary banks, reduced external assets, redirected edges \label{line:adjust}\;
    \While{there is bank $v \in V$ with $b^{(x)}_v < a^{(x)}_v$}{
        \smallskip
        \tcp{Repeated flooding of reachable sink-SCCs}
        $\mathcal{C} \gets$ strongly connected components (SCC) of the active graph $G_\vecb$\;
        $G_\mathcal{C} \gets$ directed acyclic graph of SCCs of $G_\vecb$\;
        \While{there is non-singleton $C \in \mathcal{C}$ that is reachable from $v$ and a sink in $G_\mathcal{C}$}{
            Solve Flood-LP~\eqref{eq:FloodLP}, let $\vecd^*$ be the optimal solution\;
            $\vecb \gets \vecb + \vecd^*$\;
            Update $G_\vecb$, $\mathcal{C}$, and $G_\mathcal{C}$\;
        }
        \smallskip
        \tcp{Raise external assets of $v$}
        Solve Increase-LP~\eqref{eq:IncreaseLP}, let $(\delta^*,\vecd^*)$ be an optimal solution\;
        $b^{(x)}_v \gets b^{(x)}_v + \delta^*$ and $\vecb \gets \vecb + \vecd^*$\; 
        \smallskip
        \tcp{Adjust the network for banks with default cost}
        \ForAll{banks $u \in D$ that became solvent\label{line:ifStart}}{
            $b_w^{(x)} \gets b_w^{(x)} + p_e(b_u)$, for each $e = (u,w) \in E$\;
            $a^{(x)}_w \gets a^{(x)}_w + \ell_e$, for each $e = (u,w) \in E$ \;
            Remove all edges $e = (u,w)$ from $\calF$\;
            \label{line:ifEnd}
        }
    }
    \Return{$\vecb$}
    \caption{\label{alg:minClear} Computation of a minimal clearing state}
\end{algorithm}

\subsection{Minimal Clearing State}
In this section, we explain our algorithm to compute a minimal clearing state. The standard approach to computing minimal Tarski fixed points is a bottom iteration, but implementing this directly in financial networks is not effective. Instead of a standard bottom iteration, Algorithm~\ref{alg:minClear} maintains and increases a vector $\vecb^{(x)}$ of \emph{reduced external assets}. It starts at $\vecb^{(x)} = \vecZero$ and approaches the vector of actual external assets from below, i.e., $\vecb^{(x)} \le \veca^{(x)}$. In each iteration of the main while-loop, the algorithm computes an increase of reduced external assets at one vertex $v$. It also maintains a vector $\vecb$ of total assets. For a small example run of the algorithm, see Example~\ref{exmpl:runthrough} in Appendix~\ref{app:examples}.

For the analysis, we will maintain the invariant that at the beginning of an iteration of the main while-loop, $\vecb$ is a minimal clearing state of the network $\calF$ with the reduced external assets $\vecb^{(x)}$.
\begin{lemma}\label{lem:invariant}
    At the beginning of each iteration of the main while-loop, $\vecb^{(x)} \le \veca^{(x)}$ and $\vecb$ is a minimal clearing state of the financial network $\calF$ with external assets $\vecb^{(x)}$.
\end{lemma}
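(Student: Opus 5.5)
We prove Lemma~\ref{lem:invariant} by induction over the iterations of the main while-loop, in the adjusted network produced in Line~\ref{line:adjust}. In that network the default cost of each $u \in D$ is modelled by auxiliary sink banks and edges, so as long as $u$ is insolvent all payment functions are monotone, piecewise-linear and free of default cost. For the base case, $\vecb^{(x)} = \vecZero \le \veca^{(x)}$. The vector $\vecb = \vecZero$ is a fixed point, since with zero external assets every payment is $0$. It is trivially minimal, because every clearing state is non-negative.

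For the induction step, let $\vecb$ be the minimal clearing state for the reduced external assets $\vecb^{(x)}$ at the start of an iteration. We show that each of the three phases keeps $\vecb$ equal to the minimal fixed point for the current external assets. Throughout we use the characterization of the minimal fixed point as the limit of the bottom iteration. Without default cost this limit is the least fixed point, as stated in the preliminaries. Equivalently, $\vecb$ is the least point with $\Phi(\vecb) \le \vecb$.

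\emph{Flooding phase.} Let $C$ be a non-singleton sink SCC of $G_\vecb$ that is reachable from $v$. Here the external assets are unchanged, so $\vecb + \vecd^*$ is \emph{not} the minimal fixed point for $\vecb^{(x)}$. Instead we maintain a modified claim: $\vecb$ is the minimal fixed point among those that are at least $\vecb$ and are reached by the bottom iteration once an arbitrarily small extra amount $\epsilon$ is injected at $v$. Formally, we prove that the bottom iterations for $\vecb^{(x)} + \epsilon\vece_v$ converge, as $\epsilon \to 0$, to a point that is at least $\vecb + \vecd^*$. The argument is that any positive mass entering $C$ never leaves it, since no active edges leave $C$. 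By linearity on the current pieces, this mass circulates proportionally to the stationary circulation of $C$, and it accumulates until some edge reaches an interval border or some bank becomes solvent. This is exactly the optimum of Flood-LP~\eqref{eq:FloodLP}. So the flooding steps are justified as part of the injection at $v$, and the invariant only needs to hold again at the end of the iteration.

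\emph{Increase phase.} Once no reachable non-singleton sink SCC exists, every injected unit at $v$ eventually drains into singleton sinks. Along the current linear pieces the map $\vecb \mapsto \Phi(\vecb)$ restricted to the transient part has substochastic matrix $\matM$ with $\matI - \matM$ invertible. The unique solution $\vecd = \delta (\matI-\matM)^{-1}\vece_v$ is therefore the limit of the bottom iteration started from $\vecb$ with external assets $\vecb^{(x)} + \delta \vece_v$. Increase-LP~\eqref{eq:IncreaseLP} chooses $\delta^*$ maximal such that no interval border or solvency threshold is crossed and $b^{(x)}_v + \delta^* \le a^{(x)}_v$. This gives $\vecb^{(x)} \le \veca^{(x)}$ after the update.

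Minimality is preserved by sandwiching. The new $\vecb$ is obtained as a limit of bottom-iteration points starting from below the old minimal state, so it is at most the new minimal state. It is also a fixed point, so it is at least the new minimal state.

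\emph{Solvency adjustment.} When $u \in D$ becomes solvent, the default cost disappears. Redirecting $p_e(b_u)$ as external assets at the out-neighbours $w$ and raising $a^{(x)}_w$ by $\ell_e$ encodes the remaining future payments $\ell_e - p_e(b_u)$ as not-yet-injected external assets. We check that the fixed points of the modified network with $\vecb^{(x)}$ correspond exactly to the fixed points of the original network in which $u$ is solvent and has assets at least $b_u$. This correspondence is order-preserving, so minimality transfers. Since $u$ can never become insolvent again along the monotone process, it remains solvent in every later minimal state.

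\emph{Main obstacle.} I expect the hardest part to be the flooding phase, because the intermediate $\vecb$ is not a fixed point for $\vecb^{(x)}$ there. The invariant must be phrased and proved as a right-limit statement in $\epsilon$, and the key step is showing that $\vecd^*$ is forced by the injection rather than being an arbitrary circulation.
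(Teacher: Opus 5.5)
Your skeleton matches the paper's proof. It runs by induction over iterations. Flooding keeps $\vecb$ a clearing state for $\vecb^{(x)}$ that lies below the minimal clearing state for every strictly larger $b^{(x)}_v$. The Increase-LP step is then a linear bottom iteration whose limit is sandwiched into the new minimal state. Your sandwich needs exactly this flooding claim. The iteration starts at the post-flood vector, which lies \emph{above} the old minimal state, so what you must know is that it lies below the \emph{new} one.

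The gap is that your argument for the flooding claim does not prove it. The bottom iteration does not move along the Perron direction. Mass entering $C$ at one node spreads according to powers of $\matM$ restricted to $C$, so in general the first border is reached at a point that does not dominate $\vecb+\vecd^*$. After that crossing the linear dynamics you invoke no longer applies: slopes change, and mass may start leaving $C$.

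For instance, take $\vecb=\vecb^{(x)}=\vecZero$, a 2-cycle $a\to b\to a$ with $\ell_{(a,b)}=1$ and $\ell_{(b,a)}=10$, a lower-ranked edge $(a,z)$, and $v=a$. Flood-LP gives $\vecd^*=(1,1)$ on $C$. The iterates $(\epsilon,0),(\epsilon,\epsilon),(2\epsilon,\epsilon),\ldots$ reach $a$'s border while $b$ is still below $1$, and from then on $a$'s surplus goes to $z$. The limit does dominate $(1,1)$ here, but showing this in general requires controlling the behaviour after the crossing, which your sketch does not do. The paper's proof only cites Lemma~\ref{lem:flood} at this point, and that lemma merely says the perturbed minimal state leaves the current phase, which is also weaker than what is needed.

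The missing idea is to compare fixed points rather than follow dynamics.
\begin{itemize}
\item Let $\vecb'$ be the minimal clearing state for $\vecb^{(x)}+\epsilon\vece_v$. Then $\vecb'\ge\vecb$, and the asset axiom gives $b'_v\ge b_v+\epsilon$. Hence every bank reachable from $v$ in $G_\vecb$ strictly gains.
\item Put $t=\min_{w\in C}(b'_w-b_w)/d^*_w$, which is well defined because the Perron vector is positive on $C$. Suppose $t<1$ and take a minimizer $w_0$.
\item Each active edge $(v',w_0)$ with $v'\in C$ carries at least $m_{(v',w_0)}(b_{v'})\,t\,d^*_{v'}$ in additional payment. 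It carries strictly more if $v'$ has ratio above $t$, because $t\,d^*_{v'}<d^*_{v'}\le\delta_{(v',w_0)}(b_{v'})$ keeps the slope positive.
\item With the Flood-LP equation this gives $b'_{w_0}-b_{w_0}\ge t\,d^*_{w_0}$. Equality holds by the choice of $w_0$, so all these in-neighbours also have ratio $t$, and nothing else enters $w_0$.
\item Strong connectivity spreads this to all of $C$. That contradicts the strictly positive inflow into $C$: the $\epsilon$ itself if $v\in C$, and otherwise the payment on the active last edge of a shortest $v$--$C$ path.
\end{itemize}
Hence $\vecb'\ge\vecb+\vecd^*$ for every $\epsilon>0$. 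Repeating the argument with the post-flood clearing state as base handles later floods, and then your sandwich goes through with $\epsilon=\delta^*$.

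Separately, your default-cost correspondence is misstated. With reduced assets, the modified network gives $w$ only $p_e(b_u)$ rather than $\ell_e$, so its fixed points are not those of the original network with $u$ solvent. The paper's proof of this lemma covers only the case without default cost.
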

%
For clarity of exposition, we first prove correctness of our algorithm when there is no default cost, i.e., all default rates are $\alpha_v = \beta_v = 1$. We then outline the adjustments for general default rates in Section~\ref{sec:defaultCost} below. In particular, without default cost, $D = \emptyset$ in line~\ref{line:D}, we make no adjustments in lines~\ref{line:adjust}, and we never execute lines \ref{line:ifStart}-\ref{line:ifEnd}. 

For the proof of Lemma~\ref{lem:invariant}, the properties clearly hold in the beginning since $\vecb^{(x)} = \vecb = \vecZero$. Suppose they are true at the beginning of iteration $i$. We argue that the properties hold in the end of iteration $i$, i.e., at the beginning of iteration $i+1$. Our first insight shows that the minimal clearing state is non-decreasing in the external assets of each bank.
\begin{lemma}
    \label{lem:monotone}
    Consider a financial network $\calF$. Suppose all banks have reduced external assets $\vecZero \le \vecb^{(x)} \le \veca^{(x)}$. If $a^{(x)}_u > b^{(x)}_u$, then in the minimal clearing state $\hat{\vecb}$ resulting from $\vecb^{(x)}$ we have $\hat{a}_u > \hat{b}_u$.
\end{lemma}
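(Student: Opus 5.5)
Here $\hat{\veca}$ denotes the minimal clearing state with the true external assets $\veca^{(x)}$, and $\hat{\vecb}$ the one with $\vecb^{(x)}$. Following the surrounding text, we work without default cost ($\alpha_v=\beta_v=1$). The plan is to compare the two minimal fixed points via the Tarski characterization and then extract strictness at $u$ directly from the asset axioms.

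First we establish $\hat{\vecb} \le \hat{\veca}$. Write $\Phi_{\vecb^{(x)}}$ and $\Phi_{\veca^{(x)}}$ for the two fixed-point maps on $\mathcal{A}$. Without default cost, $\Phi_{\veca^{(x)}}(\mathbf{x})_v = a^{(x)}_v + \sum_{(w,v)\in E^-(v)} p_{(w,v)}(x_w)$, so the external assets enter additively. Hence $\Phi_{\vecb^{(x)}}(\mathbf{x}) \le \Phi_{\veca^{(x)}}(\mathbf{x})$ pointwise for every $\mathbf{x}$. By Knaster--Tarski, the minimal fixed point is
\[
\hat{\vecb} = \inf\{\mathbf{x} : \Phi_{\vecb^{(x)}}(\mathbf{x}) \le \mathbf{x}\}.
\]
Since $\Phi_{\vecb^{(x)}}(\hat{\veca}) \le \Phi_{\veca^{(x)}}(\hat{\veca}) = \hat{\veca}$, the point $\hat{\veca}$ lies in this set, so $\hat{\vecb} \le \hat{\veca}$. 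If one prefers to avoid the infimum formula, the same follows from the bottom iteration for $\vecb^{(x)}$ started at $\vecZero$. By monotonicity of $\Phi$ every iterate stays $\le \hat{\veca}$, and the limit equals $\hat{\vecb}$ by the continuity remark in the preliminaries.

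For strictness at $u$, write out the fixed-point equation coordinatewise:
\[
\hat{a}_u = a^{(x)}_u + \sum_{(w,u)\in E^-(u)} p_{(w,u)}(\hat{a}_w), \qquad
\hat{b}_u = b^{(x)}_u + \sum_{(w,u)\in E^-(u)} p_{(w,u)}(\hat{b}_w).
\]
Each $p_e$ is monotone and $\hat{\vecb} \le \hat{\veca}$, so the incoming-payment sum for $\hat{\veca}$ is at least that for $\hat{\vecb}$. Therefore
\[
\hat{a}_u - \hat{b}_u \ \ge\ a^{(x)}_u - b^{(x)}_u \ >\ 0 .
\]

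The only delicate point is the reference to the domain $\mathcal{A}$, which depends on the external assets. We handle it by taking the larger box $\bigtimes_v [0, L^-(v)+a^{(x)}_v]$ as a common lattice for both maps. $\Phi_{\vecb^{(x)}}$ maps this box into itself, and enlarging the box does not change its minimal fixed point, since all fixed points lie in the smaller box anyway. With default costs the additive structure fails, but the lemma is only needed here in the cost-free setting; for the general case the paper's reduction in line~\ref{line:adjust} would apply.
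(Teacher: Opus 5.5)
Your proof is correct, and it takes a more direct route than the paper's.

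The paper argues by contradiction. Assuming $\hat{a}_u \le \hat{b}_u$, it takes the meet $\vecb' = \hat{\veca} \land \hat{\vecb}$ and derives $\hat{a}_u = \Phi_{\veca^{(x)}}(\hat{\veca})_u > \Phi_{\vecb^{(x)}}(\hat{\veca})_u \ge \Phi_{\vecb^{(x)}}(\vecb')_u \ge b'_u = \hat{a}_u$. It justifies the last step only by ``$\vecb' \le \hat{\vecb}$ and the bottom iteration.'' That reason is not enough on its own, since a point below the least fixed point need not satisfy $\Phi(\mathbf{x}) \ge \mathbf{x}$ coordinatewise. What actually makes the step hold is that both $\hat{\veca}$ and $\hat{\vecb}$ dominate $\Phi_{\vecb^{(x)}}(\vecb')$. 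So $\vecb'$ is a pre-fixed point, and Tarski gives $\hat{\vecb} \le \vecb'$, i.e.\ $\hat{\vecb} \le \hat{\veca}$ and $\vecb' = \hat{\vecb}$.

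That comparison is exactly your first step, and you prove it explicitly. You also handle the dependence of the domain $\mathcal{A}$ on the external assets, which the paper ignores. Your second step is the same strict-monotonicity-in-external-assets observation as in the paper's chain. You apply it directly at $u$ rather than inside a contradiction, and it yields the quantitative bound $\hat{a}_u - \hat{b}_u \ge a^{(x)}_u - b^{(x)}_u$.

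Restricting to the setting without default cost is appropriate. Both arguments rely on the additive dependence on $a^{(x)}_u$, and the statement can fail otherwise. For example, an insolvent $u$ with $\alpha_u = 0$ and no incoming claims has $\hat{a}_u = \hat{b}_u = 0$.
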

\begin{proof}
    For contradiction, assume that $\hat{a}_u \le \hat{b}_u$. Consider the pointwise minimum $\vecb' = \hat{\veca} \land \hat{\vecb}$, i.e., $b'_v = \min\{\hat{a}_v, \hat{b}_v\}$ for all $v \in V$. Since $\vecb' \le  \hat{\vecb}$, the bottom iteration shows that $\Phi_{\vecb^{(x)}}(\vecb') \ge \vecb'$. Now $\hat{\veca} \ge \vecb'$, $a^{(x)}_u > b^{(x)}_u$ and $\Phi$ is strictly monotone with respect to external assets and weakly monotone with respect to total assets. This implies that 
    \[ 
        \Phi_{\veca^{(x)}}(\hat{\veca})_u > \Phi_{\vecb^{(x)}}(\hat{\veca})_u \ge \Phi_{\vecb^{(x)}}(\vecb')_u \ge \vecb'_u = \hat{a}_u.
    \]
    Thus, $\hat{\veca}$ is not a clearing state -- a contradiction.
\end{proof}
\paragraph{Flooding SCCs}
Now consider the bank $v$ with $b^{(x)}_v < a^{(x)}_v$ chosen for the increase in iteration $i$ of the main while-loop. Let us concentrate on regions of the active graph $G_\vecb$ that are reachable from $v$ and strongly connected. 
\begin{definition} 
    We define a \emph{phase} as a subset of asset vectors $\calP \subseteq \calA$ such that for all $\veca,\veca' \in \calP$ we have the same active edges $E_\veca = E_{\veca'}$, and for each $e \in E_\veca$, the same active interval in $\veca$ and $\veca'$.
\end{definition}
\begin{definition}
    We say \emph{$v$ causes a flood in $G_\vecb$} if there is a strongly connected component $C \subseteq V$ of $G_\vecb$ that is (1) reachable from $v$, (2) non-singleton, and (3) a sink-component, i.e., does not have outgoing edges to banks outside $C$.
\end{definition}
\begin{lemma}
    \label{lem:flood}
    Suppose we increase the external assets of bank $v$ from $b^{(x)}_v$ to $b^{(x)}_v + \varepsilon$ for some positive amount $0 < \varepsilon \le a^{(x)}_v - b^{(x)}_v$. Let $\vecb'$ be the new minimal clearing state after the increase. If $v$ causes a flood in $G_\vecb$, then $\vecb'$ cannot be in the same phase as $\vecb$, for any $\varepsilon > 0$.
\end{lemma}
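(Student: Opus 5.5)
Suppose for contradiction that $\vecb'$ lies in the same phase $\calP$ as $\vecb$. Then every edge has the same active/inactive status and the same active interval in $\vecb$ and $\vecb'$, so each $p_e$ is affine on the segment between $b_u$ and $b'_u$ with the common slope $m_e := m_e(b_u)$. Write $\vecd = \vecb' - \vecb$. By Lemma~\ref{lem:monotone}, applied with the smaller external assets $\vecb^{(x)}$ in the role of $\vecb^{(x)}$ and $\vecb^{(x)} + \varepsilon \mathbf{e}_v$ in the role of $\veca^{(x)}$, we get $d_v > 0$. Monotonicity of the minimal fixed point in the external assets gives $\vecd \ge \vecZero$. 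Subtracting the two fixed-point equations (there is no default cost, so the asset axiom is linear within the phase) yields the linear system
\[
  d_w \;=\; \varepsilon\,[w=v] \;+\; \sum_{e=(u,w)\in E_\vecb} m_e\, d_u \qquad \text{for all } w\in V .
\]

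The next step propagates positivity to the sink component $C$. If $d_u>0$ and $e=(u,w)$ is active, then $d_w \ge m_e d_u >0$. Since $C$ is reachable from $v$ in $G_\vecb$ and $d_v>0$, an induction along a path from $v$ to $C$ gives $d_w>0$ for some $w\in C$. Strong connectivity of $C$ inside $G_\vecb$ then gives $d_w>0$ for all $w\in C$.

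Finally, I will sum the system over $C$ and derive a contradiction via mass conservation. Every $u\in C$ has at least one outgoing active edge because $C$ is non-singleton and strongly connected. Hence $m_e(b_u)$ is not identically zero on $E^+(u)$, so $b_u<L^+(u)$, and by \eqref{eq:rowStochastic} $\sum_{e\in E^+(u)} m_e = 1$. Because $C$ is a sink component, all of these active edges stay inside $C$. Summing the equations over $w\in C$ gives
\[
  \sum_{w\in C} d_w \;=\; \varepsilon[v\in C] + \sum_{u\in C} d_u + \sum_{e=(u,w)\in E_\vecb,\,u\notin C,\,w\in C} m_e d_u ,
\]
so the inflow term $\varepsilon[v\in C] + \sum_{u\notin C} m_e d_u$ must vanish.

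This inflow is strictly positive. If $v\in C$, the term $\varepsilon>0$ does it. Otherwise the path from $v$ to $C$ enters $C$ through some active edge $(u,w)$ with $u\notin C$ and $d_u>0$ by the previous step, giving a positive term $m_e d_u$. This is the desired contradiction: the phase cannot be preserved, because positive mass would pile up forever in a closed stochastic component. The main care point is to verify that the within-phase linearization is exact, i.e.\ that the common active interval makes $p_e(b'_u)-p_e(b_u)=m_e d_u$ even when $b'_u$ touches an interval border. Since the intervals are half-open and the phase fixes the interval, this should be immediate.
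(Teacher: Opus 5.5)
Your proof is correct and follows essentially the same route as the paper. Both arguments use Lemma~\ref{lem:monotone} to get strict positivity of $\vecd=\vecb'-\vecb$ at $v$ and propagate it along active edges into the reachable non-singleton sink component $C$; within the phase $C$ is then a closed row-stochastic system, so the strictly positive inflow (the $\varepsilon$ term if $v\in C$, otherwise the entry edge) violates conservation of $\sum_{w\in C} d_w$. One small point that you (like the paper) gloss over is exact linearity for inactive edges: the phase only fixes the intervals of active edges, but since the active slopes already sum to $1$ on $[b_u,b'_u]$ with $b'_u<L^+(u)$, the no-fraud axiom forces inactive payments to stay constant.
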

\begin{proof}
    By Lemma~\ref{lem:monotone}, we know that in the minimal clearing states $b'_v > b_v$. As a consequence, each active outgoing edge $e = (v,w)$ must have strictly higher payments in $\vecb'$, i.e., $p_e(b'_v) > p_e(b_v)$ because the slope $m_e(b_v) > 0$. This shows
    \begin{align*}
        b'_w &= b^{(x)}_w + p_e(b'_v) + \sum_{(u,w) \in E} p_{(u,w)}(b'_u) \ge b^{(x)}_w + p_e(b'_v) + \sum_{(u,w) \in E} p_{(u,w)}(b_u) \\
        &> b^{(x)}_w + p_e(b_v) + \sum_{(u,w) \in E} p_{(u,w)}(b_u) = b_w
    \end{align*}
    Consider the sets of banks and active edges reachable from $v$ in $G_\vecb$. Applying the insight inductively shows that every reachable bank has strictly higher total assets, and every reachable active edge strictly higher payments in $\vecb'$ than in $\vecb$, respectively.

    Now, suppose for contradiction that $\vecb' > \vecb$ and both minimal clearing states are in the same phase. Consider a non-singleton sink-SCC $C$ reachable from $v$. For every bank $w \in C$, we consider $d_w = b'_w - b_w > 0$. $C$ is not a singleton, so all banks $w$ in $C$ are insolvent. $\vecb'$ and $\vecb$ are in the same phase, so \eqref{eq:rowStochastic} implies that all of $d_w$ gets paid to out-neighbors in $C$. In turn, these payments become additional incoming assets at other nodes of $C$. Summing over all additional incoming assets of $w \in C$ from neighbors of $C$, we see that
    \begin{equation}
        \label{eqn:outIn}
        \sum_{w \in C} d_w \le \sum_{w \in C} \sum_{\substack{v'\in C\\(v',w) \in E_\vecb^-(w)}} m_{(v',w)}(b'_{v'}) \cdot d_{v'}
    \end{equation}
    Now let $w$ be a node of $C$ that is closest to $v$, and let $(u,w)$ be an edge on a shortest $v$-$w$-path in $G_\vecb$. Since $u$ is reachable from $v$, we know that $b'_u > b_u$. If $v \in C$, then we can assume $u = v$, and since $C$ is non-singleton, an out-neighbor $w \in C$ with $(v,w) \in E$ must exist. Let $d_{(u,w)} = p_{(u,w)}(b'_u) - p_{(u,w)}(b_u)$, then $d_{(u,w)} > 0$. Since $\vecb'$ is a clearing state, the asset axiom holds for $w$. As such, the additional assets $d_w$ are lower bounded by the sum of additional incoming assets over $(u,w)$ and from in-neighbors of $C$, i.e.,
    \begin{align*}
        d_w \ge d_{(u,w)} + \sum_{\substack{v'\in C\\(v',w) \in E_\vecb^-(w)}} m_{(v',w)}(b'_{v'}) \cdot d_{v'} 
        \; > \; \sum_{\substack{v'\in C\\(v',w) \in E_\vecb^-(w)}} m_{(v',w)}(b'_{v'}) \cdot d_{v'}         
    \end{align*}
    With \eqref{eqn:outIn} this proves that there must be some bank $w' \in C$ such that 
    \[
        d_{w'} < \sum_{\substack{v'\in C\\(v',w') \in E_\vecb^-(w')}} m_{(v',w')}(b'_{v'}) \cdot d_{v'},
    \]
    i.e., the additional assets of $w'$ are \emph{strictly less} than its additional incoming assets from $C$. Thus, $\vecb'$ is not a clearing state -- a contradiction.
\end{proof}

$\vecb'$ cannot be in the same phase as $\vecb$, since raising the external assets at $v$ would leave only inconsistent assignments of assets for every reachable, non-singleton, sink-SCC $C$. Then again, $C$ is reachable and external assets of $v$ must be increased to obtain $\hat{\veca}$, so the assets in $C$ must also grow. Therefore, we have to raise the clearing state $\vecb$ within $C$ to escape the current phase and eventually enable an increase of external assets at $v$. In Algorithm~\ref{alg:minClear}, we raise the assets in $C$ by a minimal circulation that maintains the properties of a clearing state (i.e., additional incoming assets are additional outgoing assets) and suffices to advance to the next phase. This is achieved by solving the following LP. We term this a \emph{flooding} of $C$ with assets. Recall that $\delta_{e}(a)$ is the smallest amount of additional assets required to advance the payment function $p_{e}$ to the next interval. 
\begin{equation}
    \label{eq:FloodLP}
    \begin{aligned}
        \text{Max.\ } \; &\displaystyle \sum_{w \in C} d_w &\\
        \text{s.t.\ } \; & d_w = \sum_{\substack{v'\in C\\(v',w) \in E_\vecb^-(w)}} m_{(v',w)}(b_{v'}) \cdot d_{v'} & \forall w\in C\\
        & d_w \le \delta_e(b_w) & \forall e\in E_\vecb^+(w)\\
        & d_w \ge 0 & \forall w\in C
    \end{aligned}
\end{equation}

\paragraph{Solving LP~\eqref{eq:FloodLP}} 
Let us define the weighted $|C| \times |C|$ adjacency matrix $\matM$ for component $C$ with entries $m_{v',w} = m_{(v',w)}(b_{v'})$. $C$ being a SCC and property \eqref{eq:rowStochastic} imply that $\matM$ is row-stochastic and irreducible. The first set of constraints can be written as $\vecd = \vecd \matM$. Thus, $\vecd$ is an eigenvector of $\matM$ with eigenvalue 1. The Perron-Frobenius theorem implies $\vecd$ is non-negative and unique. Thus, we can construct $\matM$, compute $\vecd$ and scale it to the largest multiple such that all constraints $d_w \le \delta_e(b_w)$ are satisfied. Note that power iteration methods for approximating $\vecd$ might not be applicable since $\matM$ is not necessarily aperiodic.

\paragraph{Increasing External Assets}
Flooding of components monotonically increases the assets and changes the phase, but leaves the external assets of $v$ untouched. Thus, after a finite number of repetitions, we must reach a clearing state $\vecb < \hat{\veca}$ such that all sink-SCCs reachable from $v$ are singletons, i.e., solvent banks. In this case, the next lemma shows that there exists a sufficiently small $\delta > 0$ such that for any increase of external assets of $v$ to less than $b^{(x)}_v + \delta$, the resulting minimal clearing state $\vecb'$ remains in the phase of $\vecb$.
 
\begin{lemma} 
    \label{lem:increase}
    Suppose we increase the external assets of bank $v$ from $b^{(x)}_v$ to $b^{(x)}_v + \varepsilon$ for some positive amount $0 < \varepsilon \le a^{(x)}_v - b^{(x)}_v$. Let $\vecb'$ be the new minimal clearing state after the increase. If $v$ does not cause a flood in $G_\vecb$, there is a $\delta > 0$ such that $\vecb'$ is in the same phase as $\vecb$, for every $\varepsilon < \delta$.
\end{lemma}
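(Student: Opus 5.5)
Within the phase of $\vecb$ all payment functions are affine, so I will build an explicit candidate clearing state in that phase. Then I will argue that it is the minimal one. Throughout I work without default cost, as the surrounding text does.

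\emph{Step 1: linear system.} Let $R$ be the set of banks reachable from $v$ in $G_\vecb$. Since $v$ causes no flood, every sink-SCC of $G_\vecb$ reachable from $v$ is a singleton. A bank with no active outgoing edge is either solvent or has $L^+ = 0$, and by~\eqref{eq:rowStochastic} it passes on nothing at the margin. Let $\matM$ be the matrix with entries $m_{u,w} = m_{(u,w)}(b_u)$ for active edges; it is row-substochastic. Restricted to the non-sink part $T \subseteq R$, it has row sums $1$. From every node of $T$ there is a path to a sink singleton, where mass leaves $T$. Hence the restriction $\matM_T$ has spectral radius $<1$, and $\matI - \matM_T$ is invertible with nonnegative inverse $\sum_k \matM_T^k$. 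I define the slope vector $\vecs$ by $\vecs_T = \vece_v^T(\matI-\matM_T)^{-1}$ when $v \in T$. For the sink banks of $R$, $s_w$ is the total inflow from $T$. All banks outside $R$ get $s_w = 0$. If $v$ itself is a sink singleton, simply $\vecs = \vece_v$. Set $\vecb'(\varepsilon) = \vecb + \varepsilon \vecs$.

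\emph{Step 2: the candidate is a clearing state in the same phase.} Choose
\[ \delta = \min_{w \in T,\, e \in E^+_\vecb(w),\, s_w>0} \delta_e(b_w)/s_w > 0. \]
For sink banks, I also require that the assets stay $\ge L^+$. This holds automatically, since their payments are fixed at liability and adding assets keeps them solvent. For $\varepsilon < \delta$, no active payment function crosses an interval border, and inactive edges stay inactive: their slope is $0$, and an insolvent bank either keeps its current interval or the zero-slope edge is constant. So every $p_e$ evaluated along $\vecb'(\varepsilon)$ is affine with slope $m_e(b_u)$. The asset axioms then reduce to $\vecs = \vece_v + \vecs\matM$ on $R$, which holds by construction. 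The axioms hold trivially outside $R$, because nothing changes there. Thus $\vecb'(\varepsilon)$ is a clearing state for external assets $\vecb^{(x)} + \varepsilon\vece_v$, and it lies in the phase of $\vecb$.

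\emph{Step 3: minimality.} This is the main obstacle. Let $\hat\vecb'$ be the minimal clearing state; then $\hat\vecb' \le \vecb'(\varepsilon)$. By the monotonicity argument used in Lemma~\ref{lem:flood}, $\hat\vecb' \ge \vecb$. So $\hat\vecb'$ is sandwiched between two vectors of the same phase, and I will argue that it lies in that phase too. Indeed, each $b_w \le \hat b'_w \le b'_w$, and the interval structure is monotone in $a_w$, so the active interval of every edge is unchanged. Hence $\vecd = \hat\vecb' - \vecb \ge 0$ satisfies $\vecd = \varepsilon\vece_v + \vecd\matM$ with the same $\matM$. Outside $R$, $\vecd$ must vanish. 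On $T$, invertibility of $\matI - \matM_T$ gives $\vecd_T = \vecs_T$ uniquely, and this also forces the values at the sinks. Note that the delicate issue is exactly uniqueness of this solution: a non-singleton sink-SCC would make $\matI - \matM$ singular, which is precisely the excluded flood case. Therefore $\hat\vecb' = \vecb'(\varepsilon)$, which lies in the phase of $\vecb$ for every $\varepsilon < \delta$.
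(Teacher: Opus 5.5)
Your proposal is correct and follows the paper's route: the slope system $\vecd=\varepsilon\vece_v+\vecd\matM$ on the banks reachable from $v$, invertibility of $\matI-\matM$ because every reachable insolvent bank has an active path to a solvent one, and $\delta$ as the minimal distance to an interval border divided by the slope. You use a Neumann-series/spectral-radius argument, which also gives $\vecs\ge 0$, where the paper cites chained diagonal dominance. Your Step~3 sandwich $\vecb\le\hat\vecb'\le\vecb+\varepsilon\vecs$ usefully makes explicit the minimality step that the paper only assumes. Two small corrections: it is this upper bound, not the linear system, that forces $\vecd=0$ outside $R$; and uniqueness gives $\vecd_T=\varepsilon\vecs_T$, not $\vecs_T$.
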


\noindent {\color{darkgray}\textbf{\textsf{Proof.}}}
    Similar to our observation above, we consider the additional assets $d_w = b'_w - b_w \ge 0$. Since the external assets of $v$ are strictly increased, we see that $d_v > 0$. As $\vecb'$ is a minimal clearing state, we can assume all banks $u$ that are not reachable from $v$ maintain their current assets and have $d_u = 0$, as they remain unaffected from the increase in external assets at $v$. Assuming that $\vecb'$ and $\vecb$ are in the same phase, the largest increase in the external assets of $v$ is given by LP~\eqref{eq:IncreaseLP} below. The first set of equalities asserts that the increase in assets of $v$ is given by the additional external assets and the additional incoming assets over edges from $E_\vecb^-(v)$. Similarly, for banks that are reachable from $v$, the increase in assets is given by the increase in incoming assets. Clearly, $a_v^{(x)} - b_v^{(x)}$ is a trivial upper bound on the maximum increase. To stay in the same phase, we ensure that the (open) active interval on each edge remains the same, which yields a (strict) upper bound of $\delta_e(b_w)$ for each $e = (u,w) \in E_\vecb$. Using weak inequalities, the optimum solution represents the supremum $\delta$ as stated in the lemma. It represents the smallest increase to advance the minimal clearing state to the next phase when increasing the external assets of $v$.
    \begin{equation}
        \label{eq:IncreaseLP}
        \begin{aligned}
           \text{Max.\ } \; &\delta &\\
           \text{s.t.\ } \; & d_v = \delta + \sum_{(u,v) \in E^-_\vecb(v)} m_{(u,v)}(b_u) \cdot d_u \\
           & d_w = \sum_{(u,w) \in E^-_\vecb(w)} m_{(u,v)}(b_u) \cdot d_w & \text{for all $w \neq v$ reachable from $v$} \\
           & d_w = 0 & \text{for all $w \neq v$ unreachable from $v$}\\           
           & d_w \le \delta_e(b_w) & \forall e\in E_\vecb^+(w)\\
           & \delta \le a^{(x)}_v - b^{(x)}_v \\
           & d_w \ge 0 & \forall w\in V
        \end{aligned}
    \end{equation}

    To show correctness, we argue that the LP indeed allows a unique optimal solution $(\delta^*,\vecd^*)$. We can restrict attention to the set $U \subseteq V$ of banks that are reachable from $v$ in $G_\vecb$. Then the first two constraints in \eqref{eq:IncreaseLP} above compose a system of linear equations describing the asset increase that can be expressed by
    \begin{equation}
        \label{eq:linearIncrease}
        \vecd = \delta \cdot \vece_v + \vecd \; \matM.
    \end{equation}
    Here $\vecd$ is an $|U|$-dimensional row vector with entries $d_u$, $\vece_v$ is an $|U|$-dimensional unit row vector with entry $1$ for $v$ and 0 otherwise, and $\matM$ is an $|U|$-dimensional square matrix with entries $m_{u,w} = m_{(u,w)}(b_u)$ for all $(u,w) \in E_\vecb, u \in U$, and 0 otherwise. The vector $\vecd$ can be given by
    \[
        \vecd = (\matI - \matM)^{-1} \; \delta \cdot \vece_v\enspace.
    \]
    Let us observe that the inverse exists. By \eqref{eq:rowStochastic}, $\matI - \matM$ is weakly diagonally dominant. Since $v$ does not cause a flood, every bank $u \in U$ has a path to a solvent bank in $G_\vecb$. Rows corresponding to solvent banks are strictly diagonally dominant. These properties give rise to a chained variant of diagonal dominance~\cite{AzimradehF16} and imply that $\matI-\matM$ is invertible~\cite{ShivakumarC74}. 

    Thus, the increase in the assets of the minimal clearing state scales linearly in the increase of $b_v^{(x)}$. Since all active intervals of $\vecb$ are open, for every sufficiently small value of $\delta > 0$, the resulting asset vector $\vecb+\vecd$ indeed remains in the same phase. In particular, let 
    \[
        \vecs = (\matI - \matM)^{-1} \cdot \vece_v 
    \]
    be the vector of slopes. For the supremum for all increases that keep $\vecb+\vecd$ in the same phase, we require $d_u = s_u \delta \le \delta_{(u,w)}(b_u)$ for all $(u,w) \in E_\vecb$, which are the fourth set of constraints in~\eqref{eq:IncreaseLP}. Clearly, we also require $\delta \le a_v^{(x)} - b_v^{(x)}$, the maximal increase in external assets of $v$. The lemma follows using 
    \begin{equation}
        \label{eq:min}
        \delta = \min\{a_v^{(x)} - b_v^{(x)}, \min_{(u,w) \in E_\vecb} \delta_{(u,w)}(b_u)/s_u\} > 0. \qed
    \end{equation}

\paragraph{Solving LP~\eqref{eq:IncreaseLP}}
The proof of the previous lemma shows that to solve the LP, we can compute the set $U$ of reachable banks and set up the matrix $(\matI-\matM)$. We then solve $ (\matI - \matM) \vecs = \vece_v$ (e.g., by Gaussian elimination) to obtain the slopes $\vecs$, and determine $\delta$ by computing the minima in~\eqref{eq:min}.

\paragraph{Correctness and Polynomial Time}
We are now ready to prove Lemma~\ref{lem:invariant} as the main invariant of the algorithm.

\begin{proof}[Proof of Lemma~\ref{lem:invariant}]
    Consider round $i$ of the while-loop and the vertex $v$ chosen for the increase in external assets $b_v^{(x)}$. Suppose $v$ causes a flood. Lemma~\ref{lem:flood} shows that by repeatedly flooding the corresponding sink-SCCs, we maintain a clearing state that remains below any minimal clearing state resulting from any strict increase in $b_v^{(x)}$.

    There can only be a finite number of flooding operations. Afterwards, the proof of Lemma~\ref{lem:increase} reveals that the increase in the minimal clearing state is linear in the increase in external assets of $v$. We execute the smallest increase $\delta$ that either yields $b_v^{(x)} + \delta = a_v^{(x)}$ or changes the phase. In any case, in the beginning of iteration $i+1$, the vector $\vecb$ is again the minimal clearing state for $\calF$ with the larger vector of external assets $\vecb^{(x)}$.
\end{proof}

\begin{theorem}
    \label{thm:alg}
    Algorithm~\ref{alg:minClear} computes the minimal clearing state of a given financial network without default cost in time $O( (n+k) \cdot (n^3 + m))$.
\end{theorem}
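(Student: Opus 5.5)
Correctness follows from Lemma~\ref{lem:invariant}. The outer loop stops only when $\vecb^{(x)} = \veca^{(x)}$, and at that point the invariant says $\vecb$ is a minimal clearing state of $\calF$ with external assets $\veca^{(x)}$, i.e., $\vecb = \hat{\veca}$. What remains is termination and the running-time bound. I plan to bound the total number of elementary steps, meaning flooding operations plus Increase-LP solves, by $O(n+k)$, and then bound the cost of each step by $O(n^3+m)$.

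\textbf{Counting steps.} The key observation is that every step except the last Increase step of an outer iteration strictly advances the phase monotonically. A flooding step solves LP~\eqref{eq:FloodLP}, scaling the Perron eigenvector until some constraint $d_w \le \delta_e(b_w)$ is tight. Since $\vecd>0$ on all of $C$ by irreducibility, this makes some edge $e$ reach its next interval border $x_{e,i}$. An Increase step either exhausts $a^{(x)}_v - b^{(x)}_v$, which happens at most once per bank and so at most $n$ times in total, or it makes some edge reach its next interval border. Assets only increase along the run (invariant plus monotonicity), so each border $x_{e,i}$ is crossed at most once. There are $k$ borders in total, counting the final border $L^+(u)$ at which the edge becomes inactive (solvency). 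Hence the total number of phase-changing steps is at most $k$, and the number of steps ending an outer iteration is at most $n$, giving $O(n+k)$ steps overall.

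I expect the main obstacle to be a degenerate case: a step could land exactly at a border that is already reached, with $\delta^*=0$, or several borders could be hit at once. I would handle this by showing that after each step the set of \emph{crossed} borders grows strictly. For flooding, the sink SCC $C$ loses an active interval, which forces $G_\vecb$ or an interval to change. For Increase, Lemma~\ref{lem:increase} guarantees $\delta^*>0$ whenever no flood exists, and the flooding loop is run until no flood exists.

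\textbf{Cost per step.} Recomputing $G_\vecb$, its SCCs and the condensation DAG, and checking reachability of a non-singleton sink from $v$, takes $O(n+m)$ with Tarjan's algorithm and a BFS. Solving the eigenvector system $\vecd = \vecd\matM$ for LP~\eqref{eq:FloodLP}, or the system $(\matI-\matM)\vecs = \vece_v$ for LP~\eqref{eq:IncreaseLP}, takes $O(n^3)$ by Gaussian elimination. Invertibility of $\matI-\matM$ is given in the proof of Lemma~\ref{lem:increase}. For the eigenvector, uniqueness up to scaling comes from Perron--Frobenius; I would replace one equation by a normalization to obtain a nonsingular system. Computing the minima in~\eqref{eq:min} and the scaling factor, and updating $\vecb$ and the slopes $m_e(b_u)$, takes $O(m)$ time, provided each edge keeps a pointer to its current interval that only moves forward. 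The pointer moves are amortized over all $k$ borders, so they cost $O(k)$ in total. Multiplying $O(n+k)$ steps by $O(n^3+m)$ per step gives the claimed bound $O((n+k)(n^3+m))$.
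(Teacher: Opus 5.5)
Your proposal is correct and follows the same route as the paper. Correctness comes from Lemma~\ref{lem:invariant}, and the number of LP solves is $O(n+k)$ because each flooding or Increase step either brings some bank's assets to a new interval border (at most $k$ times, since $\vecb$ only grows) or exhausts $a^{(x)}_v - b^{(x)}_v$ (at most $n$ times), with $O(n^3+m)$ work per step. Your extra details are sound: $\delta_e>0$ by definition, which rules out the degenerate $\delta^*=0$ case you worried about, and the normalization makes the eigenvector system nonsingular. One small slip is that there can be up to $n+k$ outer iterations rather than $n$; your count already covers this, since Increase steps that stop at a border are among the at most $k$ phase-changing steps.
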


\begin{proof}
    Lemma~\ref{lem:invariant} shows that Algorithm~\ref{alg:minClear} computes the correct minimal clearing state. Regarding the running time, computing the active graph can be done in time $O(n + m)$. Finding a reachable, non-singleton, sink-SCC (or verifying that none exists) can be done in time $O(n + m)$ using standard depth-first-search methods. Solving LPs~\eqref{eq:FloodLP} and~\eqref{eq:IncreaseLP} requires polynomial time. The running times for this are dominated by the computation of the eigenvector $\vecd^*$ and solving linear equations for vector $\vecs$, respectively. Each of these requires at most time $O(n^3)$. Upon solution of any of these two LPs, we advance to a new phase or meet the desired external assets of a bank. Thus, the number of times we need to solve an LP is upper bounded by $O(n + k)$. 
\end{proof} 

\subsection{Default Cost}
\label{sec:defaultCost}

We now turn to the extension of Algorithm~\ref{alg:minClear} to banks with default cost. We explain and justify the adjustments made in lines~\ref{line:D}-\ref{line:adjust} and lines~\ref{line:ifStart}-\ref{line:ifEnd} of the Algorithm~\ref{alg:minClear}.

\begin{lemma}
    \label{lem:adjust}
    For each financial network $\calF$ with minimal clearing state $\hat{\veca}$ and default cost, there is a network $\calF'$ without default cost with a minimal clearing state $\hat{\veca}'$ equivalent to $\hat{\veca}$.
\end{lemma}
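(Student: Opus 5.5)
The construction follows the hint in the introduction: default cost is modelled by auxiliary sink banks. For every bank $u \in D$ we build $\calF'$ as follows.

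\begin{itemize}
\item \textbf{Split the external assets.} Replace $a_u^{(x)}$ by $\alpha_u a_u^{(x)}$. Add an auxiliary sink bank $s_u^\alpha$ (no outgoing edges) with external assets $(1-\alpha_u)a_u^{(x)}$, so that this part never reaches $u$'s creditors while $u$ is insolvent.
\item \textbf{Divert part of the incoming payments.} Redirect each incoming edge $e=(w,u)$ through a new intermediary bank $t_e$. The edge $(w,t_e)$ keeps liability $\ell_e$ and $w$'s original payment function. Bank $t_e$ pays with edge-ranking-like proportional splitting: a fraction $\beta_u$ of its assets goes to $u$ and $1-\beta_u$ to an auxiliary sink $s_u^\beta$. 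The edge liabilities are $\beta_u\ell_e$ and $(1-\beta_u)\ell_e$, so this is a proportional, hence piecewise-linear, payment function.
\item \textbf{Resulting asset identity.} In $\calF'$, as long as $t_e$ forwards everything, the assets of $u$ are $\alpha_u a_u^{(x)} + \beta_u\sum_e p_e(a_w) = a_u^{(\alpha,\beta)}(\veca)$. These are exactly the insolvent-branch assets in~\eqref{eqn:assetAx}. Banks outside $D$ and all other edges are unchanged.
\end{itemize}

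The equivalence has to be stated carefully, because $\calF'$ alone only reproduces the insolvent branch; an insolvent-mode network never "switches back". I would therefore phrase the lemma as follows.
\begin{itemize}
\item \textbf{Insolvent regime.} On the region where the set of solvent banks of $D$ is fixed, clearing states of $\calF$ restricted to $V$ correspond one-to-one to clearing states of the adjusted network. Here each bank in $D$ that is already solvent has been restored: its auxiliary sinks are removed and its redirected edges reattached.
\item \textbf{Switching a bank to solvent.} When $u$ becomes solvent, we may instead treat its outgoing payments as already fully determined. Its creditors $w$ receive $\ell_e$ as a fixed external injection, and the edge $(u,w)$ is deleted. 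This matches lines~\ref{line:ifStart}-\ref{line:ifEnd}: assets of $u$ at or above $L^+(u)$ no longer affect anyone, since $p_e(a)=\ell_e$ there.
\end{itemize}
The main step is to show that the minimal fixed point of $\calF$ equals the minimal fixed point of $\calF'_S$, where $S$ is the set of $D$-banks solvent in $\hat\veca$ and $\calF'_S$ is the network built with the gadgets for $D\setminus S$ and the "paid-out" modification for $S$.

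\begin{itemize}
\item \textbf{Fixed points of $\calF$ give fixed points of $\calF'_S$.} Take $\hat\veca$ and extend it to the auxiliary banks by the forced values. Solvent banks in $S$ pay in full and insolvent ones pay via the gadget, so every axiom of $\calF'_S$ holds.
\item \textbf{Fixed points of $\calF'_S$ below $\hat\veca$ give fixed points of $\calF$.} Conversely, let $\veca'$ be the minimal fixed point of $\calF'_S$; then $\veca'\le\hat\veca$ on $V$. I must show it is a clearing state of $\calF$, and this is the main obstacle. Two things need checking:
  \begin{itemize}
  \item Banks in $D\setminus S$ remain insolvent in $\veca'$. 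This follows from monotonicity, since their incoming assets are at most those under $\hat\veca$.
  \item Banks in $S$ really are solvent in $\veca'$. If this failed, $\veca'$ would not certify the solvency that $\calF'_S$ assumes.
  \end{itemize}
\end{itemize}

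For the second point I would argue by minimality inside $\calF$. Consider the least fixed point of $\Phi$ on the lattice $\{\veca\le\hat\veca\}$; by Tarski, $\hat\veca$ itself is that least fixed point. Compare it with $\veca'$. If some $u\in S$ had $a'_u<L^+(u)$, construct a fixed point of $\calF$ strictly below $\hat\veca$ by iterating $\Phi$ from $\veca'$ with $u$ treated insolvent. The iteration is monotone and bounded above by $\hat\veca$. Its limit exists, and it is a fixed point as long as discontinuities occur only upward at solvency thresholds. This contradicts the minimality of $\hat\veca$. The delicate part is exactly this discontinuity of $\Phi$ from below at $L^+(u)$. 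I would handle it by choosing $S$ inductively in the order in which banks become solvent, which is the order Algorithm~\ref{alg:minClear} uses. At each step, Lemma~\ref{lem:invariant} applies to a network without default cost, and Lemma~\ref{lem:monotone} gives the strictness needed to exclude a spurious solvent state.
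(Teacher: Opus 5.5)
Your construction and overall architecture match the paper's. The banks of $D$ that are insolvent in $\hat\veca$ get gadgets with auxiliary sinks, and the banks of $D$ solvent in $\hat\veca$ have their full payments turned into external injections, so the adjusted network depends on $\hat\veca$. Your first direction (extending $\hat\veca$ to a fixed point of $\calF'_S$, hence $\veca'|_V\le\hat\veca$) is fine. One condition: an injection $\ell_e$ aimed at a creditor in $D\setminus S$ must enter at the intermediary $t_e$, not at the creditor itself, so that it still undergoes the $\beta$-reduction.

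\textbf{The gap} is in the step you single out yourself: showing that the banks of $S$ are solvent in the least fixed point $\veca'$ of $\calF'_S$. Your argument does not establish this.
\begin{itemize}
\item You give no reason why iterating $\Phi$ from $\veca'$ should be monotone.
\item You treat the iterates as rising towards $\hat\veca$, which is why you worry about the jump of $\Phi$ from below at $L^+(u)$.
\item Your fallback, building $S$ in the order in which Algorithm~\ref{alg:minClear} makes banks solvent, is circular. Transferring Lemma~\ref{lem:invariant} from the intermediate default-free networks back to $\calF$ requires knowing that switching a bank to paid-out mode preserves the least fixed point, which is exactly the claim at issue. Lemma~\ref{lem:monotone} concerns strict growth under larger external assets and does not exclude a spurious solvency.
\end{itemize}

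\textbf{The missing observation} is that $\veca'|_V$ is a pre-fixed point of the original map, i.e.\ $\Phi(\veca'|_V)\le\veca'|_V$. This holds bank by bank:
\begin{itemize}
\item In $\calF'_S$ every bank of $S$ pays its full liabilities $\ell_e\ge p_e(a'_u)$.
\item The value $\Phi$ assigns to a bank of $S$ is at most its incoming assets $a^-_u(\veca')$, and these are at most its assets in $\calF'_S$.
\item Every $u\in D\setminus S$ has $a^-_u(\veca')\le a^-_u(\hat\veca)<L^+(u)$. So $\Phi$ applies the same reduced formula as the gadget, only to weakly smaller incoming payments.
\end{itemize}
Since $\Phi$ is monotone on the complete lattice $\calA$, Knaster--Tarski gives $\hat\veca=\inf\{\veca\in\calA : \Phi(\veca)\le\veca\}$. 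Hence $\hat\veca\le\veca'|_V$, and therefore $\veca'|_V=\hat\veca$. In particular all banks of $S$ are solvent in $\veca'$, and no continuity is needed.

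If you prefer your iteration idea, run it downward. From a pre-fixed point the iterates of $\Phi$ are non-increasing. The solvency condition $a^-_u\ge L^+(u)$ is closed, so $\Phi$ is continuous along decreasing sequences. The limit is then a fixed point below $\veca'|_V\le\hat\veca$, which by minimality equals $\hat\veca$ and forces $\veca'|_V=\hat\veca$.

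The paper's own proof simply asserts this step (that smaller clearing states of $\calF'$ ``map back'' to clearing states of $\calF$). You were right to flag it, but your proposal does not yet close it.
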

\begin{proof}
    We implement default cost reductions of all banks in $D$ by adjusting the network $\calF$.

    A bank with $\alpha_u = \beta_u = 0$ makes no payments. Hence, we can w.l.o.g.\ assume it has no outgoing edges in $\calF$. Omitting the default cost reduction for the assets of $u$ is inconsequential for the remaining clearing state.
    
    Now consider bank $u$ with $\alpha_u \cdot \beta_u > 0$ that is insolvent in $\hat{\veca}$. We adjust the network $\calF$ to $\calF'$ as follows. We assume $u$ has external assets $\alpha_u a_u^{(x)} \le a_u^{(x)}$. We create two auxiliary banks $u_s$ and $u_t$. $u_s$ shall collect all incoming payments for $u$. It then directs a $(1-\beta_u)$-portion to an auxiliary sink bank $u_t$ and the remaining $\beta_v$-portion to $u$. Towards this end, all incoming edges $(v,u)$ are re-routed to $(v,u_s)$. There are new edges $e_1 = (u_s,u_t)$ and $e_2 = (u_s,u)$ with liability $\ell_{e_1} = (1-\beta_u) L^+_u$ and $\ell_{e_2} = \beta_u L^+_u$. $u_s$ has a proportional payment function. All other banks and payments remain as in $\calF$. We assume that $u$, $u_s$ and $u_t$ have no default cost in $\calF'$.
    
    In $\calF'$, a default cost reduction of $\alpha_u$ is directly incorporated into the external assets of $u$. For the reduction of $\beta_u$ in incoming payments, the auxiliary bank $u_s$ splits off the default cost and sends the corresponding assets to an auxiliary sink $u_t$. It is straightforward to verify that $\hat{\veca}'$ with $a_v' = a_v$ for each $v \in V$ and the direct extension to the assets of auxiliary banks $a'_{u_s} = \sum_{(v,u) \in E} p_e(a'_u)$ and $a'_{u_t} = (1-\beta_u) a'_{u_s}$ is a clearing state in $\calF'$. Any smaller clearing state $\vecb'$ in $\calF'$ can be mapped back to a smaller clearing state $\vecb = (b'_v)_{v \in V}$ for all $v \in V$ (excluding the auxiliary banks). This proves the statement for insolvent banks.

    For a bank $u \in D$ that is solvent in $\hat{\veca}$, we have $p_e(\hat{a}_u) = \ell_e$ for all $e=(u,v) \in E$. We can remove all outgoing edges $(u,v) \in E$ and raise the external assets of each out-neighbor $v$ by $\ell_{(u,v)}$. This creates an equivalent network $\calF'$ for which $\hat{\veca}' = \hat{\veca}$ remains the minimal clearing state. This proves the statement for solvent banks.
\end{proof}

\begin{corollary}
    Using the adjustment in Lemma~\ref{lem:adjust}, Algorithm~\ref{alg:minClear} computes the minimal clearing state of a given financial network with default cost in time $O((k+n+m)\cdot (n^3 + m))$.
\end{corollary}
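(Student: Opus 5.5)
The plan is to reduce the corollary to Theorem~\ref{thm:alg} by applying Lemma~\ref{lem:adjust} \emph{online}. The difficulty is that Lemma~\ref{lem:adjust} treats solvent and insolvent banks of $\hat{\veca}$ differently, and solvency is not known in advance. Algorithm~\ref{alg:minClear} avoids this by starting with the ``insolvent'' adjustment for every $u \in D$ (line~\ref{line:adjust}): reduced external assets $\alpha_u a^{(x)}_u$ and auxiliary banks $u_s,u_t$. It then switches to the ``solvent'' adjustment (lines~\ref{line:ifStart}--\ref{line:ifEnd}) at the moment $u$ becomes solvent. So I would strengthen the invariant of Lemma~\ref{lem:invariant} to read: at the start of each iteration, $\vecb$ is the minimal clearing state of the \emph{current} adjusted network $\calF'$ with external assets $\vecb^{(x)}$; the current $\calF'$ is the Lemma~\ref{lem:adjust} transform of $\calF$ with respect to the set $S$ of banks in $D$ already marked solvent; and $\vecb$ restricted to $V$ lies below $\hat{\veca}$.

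The key steps, in order, are as follows.

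First, between solvency events the current network has no default cost, so the flooding and increase steps are justified exactly as in Lemmas~\ref{lem:flood} and~\ref{lem:increase}. The invariant is preserved by the proof of Lemma~\ref{lem:invariant}.

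Second, suppose the increase step makes some $u \in D$ reach $L^+(u)$ in the adjusted network, i.e.\ $\alpha_u a^{(x)}_u + \beta_u(\text{incoming})$ hits the threshold. Here I would need to check the solvency condition as stated in \eqref{eqn:assetAx}, which tests the \emph{unreduced} $a^-_u$. So the correct trigger is that the unreduced incoming assets reach $L^+(u)$. I would handle this by adding an extra constraint to the Increase-LP, namely the bound on the unreduced assets of $u$, so that we stop exactly at that moment. This adds one breakpoint per bank.

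Third, at that moment no clearing state of $\calF$ below it can have $u$ insolvent, since the bottom-type monotonicity of Lemma~\ref{lem:monotone} shows the assets only grow. Hence $\hat{a}_u \ge L^+(u)$ and $u$ is solvent in $\hat{\veca}$. Switching to the solvent adjustment then follows Lemma~\ref{lem:adjust}: the already-paid amounts $p_e(b_u)$ become realized external assets $b^{(x)}_w$ of the out-neighbours, and the targets $a^{(x)}_w$ grow by $\ell_e$. The remaining $\ell_e - p_e(b_u)$ is injected later as ordinary external assets. I would verify that $\vecb$ is still a minimal clearing state of the new network with the new $\vecb^{(x)}$. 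This holds because the new network computes the same map $\Phi$ on all asset vectors that are at or above the current $\vecb$ where $u$ is solvent, and the discontinuity jump of $u$'s assets is exactly what the added external assets supply.

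Fourth, at termination $\vecb^{(x)} = \veca^{(x)}$ in the final adjusted network. Lemma~\ref{lem:adjust} then maps $\vecb$ back to a clearing state of $\calF$. Minimality follows from the invariant $\vecb \le \hat{\veca}$ on $V$.

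For the running time, each bank in $D$ switches at most once, giving $O(n)$ events, each costing $O(m)$ for the edge edits. The auxiliary banks add $O(n)$ nodes and $O(n)$ edges. Their proportional functions add no breakpoints beyond the liability thresholds, and the redirected edges carry over their own breakpoints. Iterations are bounded by the number of phase changes plus completed injections plus solvency events, which is $O(k+n+m)$, where the $m$ accounts for re-injection at out-neighbours after edge removals. Each iteration costs $O(n^3+m)$.

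The main obstacle is the third step: justifying that $\vecb$ remains minimal across the discontinuity. Since $\Phi$ is not continuous from below, I would argue directly. Any clearing state of $\calF$ dominating the pre-event state must have $u$ solvent. The adjusted network then agrees with $\calF$ on the set of vectors where $u$ is solvent, so its minimal fixed point over that set coincides with $\hat{\veca}$.
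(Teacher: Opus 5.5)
Your proposal is correct and follows the paper's proof. You apply the insolvent-type adjustment of Lemma~\ref{lem:adjust} to all of $D$ up front, switch a bank $u$ to the solvent-type adjustment once its unreduced assets reach $L^+(u)$ (turning the current payments $p_e(b_u)$ into external assets of its out-neighbours and raising their targets by $\ell_e$), and account for these re-injections with $O(m)$ extra iterations. Two minor points: your extra Increase-LP constraint that stops exactly at the solvency event is not needed, since the paper just checks after each increase and overshooting is harmless (the state stays below $\hat{\veca}$ and the unpaid remainder $\ell_e-p_e(b_u)$ is injected later); and minimality of $\vecb$ right after a switch is best argued directly (for any clearing state $\mathbf{c}\le\vecb$ of the new network the pre-switch map satisfies $\Phi(\mathbf{c})\le\mathbf{c}$ because $p_e(c_u)\le p_e(b_u)$, so Tarski gives $\vecb\le\mathbf{c}$), whereas your agreement-of-maps remark only compares the networks at the final target assets.
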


\begin{proof}
    In the algorithm, all banks are insolvent initially. Thus, we adjust the network in lines~\ref{line:D}-\ref{line:adjust} according to Lemma~\ref{lem:adjust} and apply the algorithm in network $\calF'$, where all banks of $D$ have been adjusted. The introduction of sink banks $u_t$ shows that, in particular, no (insolvent) bank $u \in D$ can be part of any non-singleton sink-SCC. Thus, the flooding step is relevant only for components of banks without default cost. 

    The representation becomes problematic when a bank $u \in D$ becomes solvent w.r.t.\ the liabilities in $\calF$. Then all outgoing edges $e \in E^+(u)$ must now become fully paid. By Lemma~\ref{lem:adjust} we can represent the payment of $\ell_{(u,v)}$ of $u$ towards edge $(u,v)$ by raising the external assets of out-neighbor $v$ by $\ell_{(u,v)}$. 
    
    This is implemented in lines~\ref{line:ifStart}-\ref{line:ifEnd} as follows. For each $u \in D$ we test if it becomes solvent w.r.t.\ unadjusted network $\calF$ (i.e., checking if $a_u^{(x)} + \sum_{(v,u)} p_e(b_v) \ge L^+_u$). If so, by Lemma~\ref{lem:invariant} it is solvent in $\hat{\veca}$. Thus, we adjust $\veca^{(x)}$ by removing all outgoing edges of $u$ and adding external assets of $\ell_{(u,v)}$ to $a^{(x)}_v$. We execute a similar adjustment for $\vecb$ and $\vecb^{(x)}$: Add the \emph{current payments} $p_{(u,v)}(b_u)$ to $b^{(x)}_v$ for each $(u,v) \in E$. This gives an equivalent representation of $\hat{\veca}$ and $\vecb$ after removal of the out-edges of $u$. The remaining increase in payments then gets executed via an increase in external assets of the (former) out-neighbors of $u$. This aligns directly with the invariant of Lemma~\ref{lem:invariant} and proves that the algorithm remains correct for networks with default cost.

    The asymptotic upper bound on the running time in Theorem~\ref{thm:alg} suffers by at most $m$ additional iterations due to an increase of some external assets by the liability of an incoming edge.
\end{proof}

\subsection{Arbitrary Clearing States}

In this section, we show how Algorithm~\ref{alg:minClear} can be extended to compute arbitrary clearing states for financial networks $\calF$ without default cost. More formally, starting from the minimal state $\hat{\veca}$, we show that any further application of the flooding step leads to larger clearing states. Conversely, for any given clearing state $\veca$, we observe that there exists a sequence of flooding steps starting from $\hat{\veca}$ that results in $\veca$. As such, the set of clearing states is \emph{exactly} the set of states reachable from $\hat{\veca}$ by flooding steps. We obtain an algorithmic framework that (by suitable choice of flooding steps) is capable of computing every clearing state of $\calF$.

As a potential application of this insight, we outline a \emph{range clearing problem}: There is a given financial network $\calF$ and a subset $S \subseteq V$ of banks. For each $v \in S$, there is a given interval $I_v \subseteq [0,a^{(x)}_v+L^-(v)]$ that specifies a desired range of total assets after clearing. The question is whether or not there exists a \emph{range clearing state} $\veca$, i.e., a clearing state such that $a_v \in I_v$ for all banks $v \in S$. We show that this problem can be solved in polynomial time in networks $\calF$ without default cost.

We first discuss the extension of Algorithm~\ref{alg:minClear} and then explain how it can be used to for the range clearing problem.
\begin{corollary}
    \label{cor:allClear}
    Algorithm~\ref{alg:minClear} can be extended to compute any clearing state of a given financial network $\calF$ without default cost in polynomial time.
\end{corollary}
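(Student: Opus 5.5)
The plan has two directions. First, starting from the minimal clearing state $\hat{\veca}$ computed by Algorithm~\ref{alg:minClear} (Theorem~\ref{thm:alg}), any flooding step yields again a clearing state that is at least $\hat{\veca}$. Second, every clearing state $\veca$ can be reached from $\hat{\veca}$ by a polynomial-length sequence of (possibly partial) flooding steps. Here "partial" means that in LP~\eqref{eq:FloodLP} we may scale the Perron eigenvector $\vecd$ by any factor between $0$ and the LP optimum.

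The first direction is a direct check. Let $C$ be a non-singleton sink-SCC of the active graph $G_\vecb$ at a clearing state $\vecb$. Every bank in $C$ is insolvent and pays only inside $C$. Adding $\vecd$ with $\vecd=\vecd\matM$ on $C$, while staying in the current phase, raises each $w\in C$'s outgoing payments by exactly $d_w$. Because $\vecd$ is a left eigenvector, incoming payments rise by the same amount. Banks outside $C$ receive nothing additional, since $C$ is a sink. Hence the asset axioms~\eqref{eqn:assetAx} remain satisfied for every scaling factor in the feasible range.

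For the converse, fix a target clearing state $\veca\ge\hat{\veca}$ and set $\vecd=\veca-\hat{\veca}\ge\vecZero$. Since both $\veca$ and $\hat{\veca}$ satisfy the axioms with the same external assets, the increase in outgoing payments of each bank equals its increase in incoming payments, so $\vecd$ behaves like a circulation. I would argue by induction, maintaining a clearing state $\vecb$ with $\hat{\veca}\le\vecb\le\veca$, and show that whenever $\vecb\ne\veca$ some flooding step at $\vecb$ moves toward $\veca$ without overshooting.

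Let $U=\{u: a_u>b_u\}$. Banks in $U$ are insolvent at $\vecb$, since if $b_u\ge L^+(u)$, then $u$ has no active out-edges and its assets depend only on its in-edges. Next, consider the out-edges of $U$ in $G_\vecb$. If $u\in U$ has an active edge to $w\notin U$, then $w$ receives strictly more payment, which contradicts $a_w=b_w$. Hence $U$ is closed in $G_\vecb$, and it contains a sink-SCC $C$. A mass-balance argument like the one in Lemma~\ref{lem:flood} shows that every such $C$ is non-singleton: a singleton insolvent bank with a positive increase must forward that increase somewhere. Moreover, $\vecd$ restricted to $C$ must be proportional to the Perron vector of $\matM_C$ as long as we stay in the phase, because inflow must equal outflow on a sink component. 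So flooding $C$ by $\min$ of the LP bound and the factor that reaches $\veca$ on $C$ either hits $\veca$ on $C$ or changes the phase.

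Each step either removes a bank from $U$ or crosses an interval border. The number of steps is therefore $O(n+k)$, each costing $O(n^3)$, which matches the bound of Theorem~\ref{thm:alg}.

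The main obstacle is making the converse rigorous across phase changes. The difference $\vecd$ is only piecewise proportional to Perron vectors, and I must verify that the scaled flood on $C$ never exceeds $a_w$ for $w\in C$. To handle this, I would use that the increase is restricted to the closed set $U$. On $C$ within one phase, $\veca$ restricted to $C$ lies on the ray $\vecb+t\vecd^*$ until the first border. Any component where $\veca$ lies beyond the border would contradict the fact that $\veca$ itself satisfies the axioms on the segment. Once this is in place, the corollary follows: Algorithm~\ref{alg:minClear} followed by a suitably chosen sequence of partial floods computes any prescribed clearing state in polynomial time.
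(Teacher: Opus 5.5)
Your proposal is correct and takes essentially the paper's route: floods of non-singleton sink-SCCs preserve the clearing property, and the difference $\veca-\hat{\veca}$ is removed by a sequence of partial floods, which you make more explicit than the paper through the closed set $U$ and the $O(n+k)$ step count. The phase-change issue you flag is harmless, because choosing the step as $\min\{t_{\mathrm{LP}},\min_{w\in C}(a_w-b_w)/d^*_w\}$ rules out overshooting by construction. This step is positive since the Perron vector is strictly positive on $C\subseteq U$, so you do not need the vaguer claim that $\veca$ lies on the ray.
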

\begin{proof}
    Suppose we are given a clearing state $\veca$. Consider any non-sink SCC $C$ of the active graph $G_\veca$ and the eigenvector $\vecd$ computed by LP~\eqref{eq:FloodLP}. Adding to $\veca$ any multiple $\gamma \vecd$ that is a feasible solution for LP~\eqref{eq:FloodLP}, we maintain the property that $\veca + \gamma \vecd$ is a clearing state. Thus, by repeatedly finding non-sink SCCs, computing the eigenvector $\vecd$, and adding a feasible multiple to $\veca$, we can compute an increasing sequence of clearing states.

    Conversely, consider any clearing state $\veca \ge \hat{\veca}$ and the difference $\vecd' = \veca - \hat{\veca}$. Clearly, $\vecd'$ represents a circulation flow in $\calF$. Moreover, for any bank with $d'_v > 0$, these additional assets must be paid to outgoing edges $e \in E^+(v)$. All payments must adhere to the payment functions $p_e$. As such, there must be payments $\vecd \le \vecd'$ that represent a circulation flow within some non-sink SCC $C$ of $G_{\hat{\veca}}$. Repeating this argument, we see that $\vecd'$ is decomposable into a sequence of SCCs $C$ and corresponding solutions of LP~\eqref{eq:FloodLP}. By a suitable choice of flooding steps, we can reach $\veca$.
\end{proof}

\begin{theorem}
    The range clearing problem can be solved in polynomial time in financial networks $\calF$ without default cost.
\end{theorem}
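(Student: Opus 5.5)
The plan is to exploit piecewise-linearity: the set of clearing states decomposes into finitely many polyhedral pieces, and inside each piece the range conditions are linear. First compute the minimal clearing state $\hat{\veca}$ with Algorithm~\ref{alg:minClear} (Theorem~\ref{thm:alg}). Every clearing state satisfies $\veca \ge \hat{\veca}$. A necessary condition is that $\hat{a}_v \le \max I_v$ for all $v \in S$; if it fails we answer ``no''. The difficulty is the lower ends of the intervals: some banks must be raised into $I_v$ while others must not overshoot.

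The key structural step is to describe the clearing states in a fixed phase as a polytope. Fix a phase $\calP$, i.e.\ active edges and active intervals for every bank. Inside $\calP$, every $p_e$ is affine in $a_u$. So the asset axioms~\eqref{eqn:assetAx} (without default cost) become linear equalities $a_w = a^{(x)}_w + \sum_{(u,w)} (m_{(u,w)} a_u + c_{(u,w)})$, and membership in $\calP$ becomes linear inequalities $x_{e,i-1} \le a_u \le x_{e,i}$, using closures of the intervals; continuity of the $p_e$ makes the closures harmless. Adding the constraints $a_v \in I_v$ for $v \in S$ yields an LP feasibility problem of polynomial size. Trying all phases is exponential, so the next step is to restrict which phases need to be examined.

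For this I would use Corollary~\ref{cor:allClear}: every clearing state is reached from $\hat{\veca}$ by a sequence of flooding steps, each moving along the Perron eigenvector of a sink SCC of the current active graph. Once flooded, the assets of an SCC only increase, and the set of reachable clearing states forms a lattice. My plan is a monotone ``raise until needed'' procedure, which is a greatest-fixed-point style argument on the range constraints. Maintain the current clearing state $\vecb$, starting at $\hat{\veca}$. Compute the \emph{maximal} clearing state $\check{\vecb}$ subject to the upper bounds $a_v \le \max I_v$ for $v \in S$. I would argue this is obtained by flooding components greedily until a bank of $S$ hits its upper bound, after which that bank's component is frozen. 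Since floodings of different sink components are independent, and the upper-bound-feasible clearing states are closed under joins, this upper-feasible set has a unique maximum. The maximum can be computed with at most $O(n+k)$ flooding LPs, since each step either changes the phase or freezes a component.

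The answer is then ``yes'' iff this maximum satisfies all the lower bounds $a_v \ge \min I_v$. The sufficiency direction is immediate. For necessity, any range clearing state is upper-feasible and hence lies below the maximum; so if some lower bound fails at the maximum, it fails for every upper-feasible state.

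The main obstacle is proving that the set $\{\veca \text{ clearing} : a_v \le \max I_v\ \forall v\in S\}$ has a maximum, i.e.\ is closed under joins, and that greedy flooding reaches it. Closure of clearing states under joins holds by Tarski only as a lattice join, which need not be the pointwise maximum. So I would first try to show that, without default cost, the pointwise maximum of two clearing states is again a clearing state, using the circulation decomposition from Corollary~\ref{cor:allClear} and the fact that different sink SCCs flood independently. If this fails, the fallback is the phase-LP approach: enumerate only the polynomially many phases visited along flooding sequences, each checked by an LP.
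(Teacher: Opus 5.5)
Your plan is viable. It is the order-dual of the paper's argument. The paper starts at $\hat{\veca}$ and repeatedly floods the component of a bank $v \in S$ that is still below $I_v$, only until $v$ enters $I_v$ or the phase changes, and rejects as soon as another bank of $S$ overshoots. So it implicitly builds the least clearing state meeting the lower bounds and checks the upper ones. You build the greatest clearing state meeting the upper bounds and check the lower ones.

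As written, however, the proposal leaves open the two facts everything rests on:
(i) the pointwise maximum of two clearing states is a clearing state;
(ii) greedy flooding with freezing can only get stuck at that maximum.

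Your sketched route to (i), via the decomposition of Corollary~\ref{cor:allClear} and ``independent'' floods, does not deliver it, because two flooding sequences act on overlapping components in different phases. Your fallback of enumerating the phases visited along flooding sequences also gives no guarantee that a range clearing state lies in one of them.

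The missing idea is invariance of equity. Sum the asset axioms~\eqref{eqn:assetAx} (with all $\alpha_v=\beta_v=1$) and use the no-fraud axiom in~\eqref{eqn:payAx}. This gives $\sum_v \max\{a_v - L^+(v), 0\} = \sum_v a^{(x)}_v$ for every clearing state $\veca$. Every clearing state is $\ge \hat{\veca}$, and $\max\{a_v - L^+(v),0\}$ is non-decreasing in $a_v$. Hence every bank has the same equity in all clearing states.

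For (i), let $\mathbf{c} = \veca \vee \veca'$ for clearing states $\veca,\veca'$. Monotonicity gives $\Phi_{\veca^{(x)}}(\mathbf{c}) \ge \mathbf{c}$. The equity of $\mathbf{c}$ is again the common one, so $\sum_v \Phi_{\veca^{(x)}}(\mathbf{c})_v = \sum_v a^{(x)}_v + \sum_v \min\{c_v, L^+(v)\} = \sum_v c_v$. Hence $\Phi_{\veca^{(x)}}(\mathbf{c}) = \mathbf{c}$.

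For (ii), suppose the procedure is stuck at $\vecb$, and let $\veca'$ be any clearing state with $a'_w \le \max I_w$ for all $w \in S$. By (i), $\mathbf{c} = \vecb \vee \veca'$ is such a clearing state too. Suppose $\mathbf{c} \neq \vecb$ and let $T = \{x : c_x > b_x\}$.
\begin{itemize}
\item $T$ is closed under active out-edges of $G_{\vecb}$, since each such edge has positive slope at $b_x$.
\item $T$ contains no bank solvent in $\vecb$, since equal equity forces $c_x = b_x$ there.
\item Since $T \neq \emptyset$, it contains a sink SCC $C$ of $G_{\vecb}$. By~\eqref{eq:rowStochastic}, singleton sink SCCs are solvent, so $C$ is non-singleton.
\item For all $w \in C \cap S$ we have $b_w < c_w \le \max I_w$. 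So $C$ can still be flooded by a small positive amount, which contradicts being stuck.
\end{itemize}
Hence $\veca' \le \vecb$, and the answer is ``yes'' iff $\vecb$ meets all lower bounds. Assets only increase, so each flooding step ends at a new (bank, interval border) pair or at a newly saturated bank of $S$. This gives $O(n+k)$ steps.

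With these two arguments your route is complete. It is arguably cleaner than the paper's: admissible sink SCCs may be flooded in any order, and you never need to justify that each eigenvector step is a \emph{necessary} increase.
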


\begin{proof}
    To decide whether or not the instance allows a range clearing state, we first compute $\hat{\veca}$ using Algorithm~\ref{alg:minClear}. If $\hat{a}_v \in I_v$ for all $v \in S$, then $\hat{\veca}$ is a range clearing state. If there is a bank $v \in S$ such that $\hat{a}_v > I_v$, the minimal clearing state $\hat{\veca}$ exceeds the target interval for $v$. Hence, there is no range clearing state.
    
    In the remainder we consider the case $\hat{a}_v < I_v$ for at least one bank $v \in S$ (and $\hat{a}_w < I_w$ or $\hat{a}_w \in I_w$ for all other banks $w \in S$). For our argument we maintain the invariant that the clearing state $\veca$ yields $a_v \in I_v$ or $a_v < I_v$ for all $v \in S$. Consider any bank $v$ with $a_v < I_v$ and the SCC $C$ of the active graph $G_\veca$ that contains $v$. If $C$ is a sink, it is impossible to further increase the assets of $v$ in a clearing state. There is no target clearing state. Otherwise, we raise the assets within $C$ continuously by a multiple of eigenvector $\vecd$ until one of two events occurs: (1) $a_v + d_v \in I_v$ or (2) the active graph changes. Due to monotonicity and the structure of the payment function, $\vecd$ represents a necessary increase in the assets to achieve $a_v \in I_v$ (which is sufficient if event (1) occurs). Clearly, if this results in $a_w + d_w > I_w$ for another bank $w \in S \cap C$, then the interval conditions for $v$ and $w$ cannot be satisfied simultaneously and no target clearing state exists. Otherwise, we have computed a larger clearing state that maintains our invariant. The algorithm has to terminate after a finite number of iterations, after which it either reaches a range clearing state or verifies that none exists.

    The number of iterations required is linear in the total number of events, which is upper bounded by the total number of breakpoints and $|S| \le n$. As such, the asymptotic bound from Theorem~\ref{thm:alg} continues to hold. Clearly, each iteration can be performed in polynomial time (choose $v$, compute SCC $C$, compute increase $\vecd$, check conditions for all banks in $C \cap S$).
\end{proof}

\subsection{Characterization and Maximal Clearing States}
\label{sec:computeMax}

In this section, we briefly discuss how to compute the \emph{maximal} clearing state $\check{\veca}$. In particular, by computing $\hat{\veca}$ and then applying flooding steps greedily to the largest extent, the arguments in Corollary~\ref{cor:allClear} show that a maximal clearing state $\check{\veca}$ will be reached. Note that the upper bound on the running time given in Theorem~\ref{alg:minClear} applies to this extended procedure as well. 

Instead, in this section, we discuss a more direct approach. We show an equivalence of financial networks with general piecewise-linear payment functions to networks with \emph{priority-proportional} functions.

\begin{definition}
    For a bank $v \in V$, a collection of \emph{priority-proportional} payment functions $(p_e)_{e \in E^+(v)}$ with $p_e : \RR \to \RR$ is given by a partition of $E^+(v)$ into $k_v \ge 1$ sets $(E_1,\ldots,E_{k_v})$ such that $E_i \cap E_j = \emptyset$ for $i \neq j$, $\bigcup_i E_i = E^+(v)$, and for every $i=1,\ldots,k_v$ and $e \in E_i$,
    $$  p_e(a) = \begin{cases} 0 & \text{if } a < x_{v,i-1}\\
    \frac{\ell_e}{\sum_{e' \in E_i} \ell_{e'}} \cdot (a - x_{v,i-1}) & \text{if } a \in [x_{v,i-1}, x_{v,i}) \\
    \ell_e & \text{if } a \ge x_{v,i},\end{cases} $$
    where $x_{v,i} = \sum_{j \le i} \sum_{e' \in E_j} \ell_{e'}$ and $x_{v,k_v+1} = \infty$.
\end{definition}

Using priority-proportional payment functions, $v$ clusters its outgoing edges into sets of decreasing priority. It starts to pay all edges of class $E_1$ proportionally until they are fully paid. Then it uses any remaining assets to pay class $E_2$ proportionally, and so on. Our main insight is that in terms of clearing, the class of networks with priority-proportional functions is equivalent to the class of all networks with piecewise-linear ones.

\begin{lemma}
    \label{lem:piecewiseLinearIsPriorityProportional}
    For any financial network $\calF$ with piecewise-linear payment functions, there is a network $\calF'$ with priority-proportional payment functions such that $\veca$ is a clearing state in $\calF$ if and only if $\veca$ is a clearing state in $\calF'$.
\end{lemma}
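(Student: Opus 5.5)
The idea is to build $\calF'$ by inserting auxiliary intermediary banks on each outgoing edge, so that every piece of a piecewise-linear payment is carried by its own priority class. The network $\calF'$ contains all banks of $V$, and we compare assets only on $V$. Auxiliary banks get no external assets and are assigned the assets they receive, so their assets are determined by the assets on $V$.

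\textbf{Construction.} Fix a bank $u$ and let $0 = y_0 < y_1 < \ldots < y_r = L^+(u)$ be the common refinement of all interval borders $x_{e,i}$, $e \in E^+(u)$. On each piece $[y_{j-1}, y_j]$ every $p_e$ is linear, with slope $m_{e,j} \ge 0$. By \eqref{eq:rowStochastic} these slopes satisfy $\sum_e m_{e,j} = 1$.
\begin{enumerate}
\item For every pair $(e,j)$ with $m_{e,j} > 0$, create an auxiliary bank $h_{e,j}$ with no external assets.
\item Give $u$ an edge $(u, h_{e,j})$ with liability $m_{e,j}(y_j - y_{j-1})$, and put this edge into priority class $E_j$.
\item Give $h_{e,j}$ a single outgoing edge to the creditor $v$ of $e$, with the same liability.
\end{enumerate}
The liabilities in class $E_j$ then sum to $y_j - y_{j-1}$, so the class thresholds $x_{u,j}$ of the priority-proportional definition coincide with the $y_j$. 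Within class $j$ the proportional shares are exactly $m_{e,j}$. A single-edge bank is trivially priority-proportional. We avoid multi-edges by design, since each creditor is reached through a distinct intermediary. The paper also permits multi-edges anyway.

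\textbf{Payment identity.} Next we verify that for every $a \ge 0$ the total amount $u$ routes to $v$ through the intermediaries $h_{e,j}$ equals $p_e(a)$. This holds because both sides are continuous and piecewise linear with the same breakpoints $y_j$, have equal slopes on every piece, and vanish at $0$; the no-fraud axiom gives $p_e(0)=0$. Beyond $L^+(u)$ both sides are constant at $\ell_e$, since $\sum_j m_{e,j}(y_j-y_{j-1}) = p_e(L^+(u)) = \ell_e$.

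\textbf{Equivalence of clearing states.} An intermediary $h_{e,j}$ never receives more than its own outgoing liability, so it passes on everything it receives. Given a clearing state $\veca$ of $\calF$, set each $a_{h_{e,j}}$ to the payment it receives. This satisfies the axioms at the intermediaries, and by the payment identity every $v \in V$ receives exactly $\sum_e p_e(a_u)$ as before, so the axioms hold at $v$. Conversely, in any clearing state of $\calF'$ each intermediary's assets are forced to equal its incoming payment, which it forwards in full. Restricting to $V$ then gives a clearing state of $\calF$ by the same identity. Default costs are not at issue here; if desired, the intermediaries simply get $\alpha = \beta = 1$.

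The size grows by at most $O(k)$ banks and edges, since each $u$ has $r$ pieces and at most $|E^+(u)|$ active slopes per piece. The main obstacle is the bookkeeping at shared breakpoints: the refined thresholds must be exactly the priority thresholds, and zero-slope pieces must be handled by omitting the edge while keeping the class nonempty. A class cannot become empty, because the slopes in each piece sum to $1$.
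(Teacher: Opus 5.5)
Your proof is correct and uses essentially the same construction as the paper: refine all breakpoints of a bank into common pieces, make one priority class per piece with liabilities $m_{e,j}(y_j-y_{j-1})$, and route each piece through its own intermediary bank. Dropping zero-slope pieces and giving the intermediary's outgoing edge the same finite liability (the paper uses an infinite one) are harmless and slightly cleaner. One slip is in the size remark, which is not part of the lemma: your own count of $r$ pieces times up to $|E^+(u)|$ active slopes gives $O(km)$ auxiliary banks and edges, not $O(k)$, matching the paper's $n+O(km)$.
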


\begin{proof}
    Consider a bank $v$ in network $\calF$ and the set $B = \bigcup_{e \in E^+(v)} \{x_{e,i} \mid i=1\ldots,k_e\}$ of positive and finite interval borders (without repetition) of payment functions of $v$. Let $k_v = |B|$ and $0 = x_{v,0} < x_{v,1} < \ldots < x_{v,k_v} = L^+(v)$ be the ordered elements of $B$. We define $x_{v,k_v+1} = \infty$. 

    Consider an interval $[x_{v,j-1},x_{v,j})$ for $j = 1,\ldots,k_v$. By construction, for each edge $e \in E^+(v)$ and interval index $j$, there is a unique interval index $j_e \in \{1,\ldots,k_e+1\}$ such that $[x_{v,j-1},x_{v,j}) \subseteq [x_{e,j_e-1}, x_{e,j_e})$. This allows to subdivide the existing interval structure for each edge. We refine every piecewise-linear payment function $p_e$ to have exactly the intervals $[x_{v,j-1},x_{v,j})$ for $j = 1,\ldots,k_v$, as well as the interval $[x_{v,k_v}, x_{v,k_v+1}) = [L^+(v),\infty)$. More precisely,
    \[
        p_e(a) = m_{e,j_e} \cdot (a - p_e(x_{v,j-1})) + p_e(x_{v,j-1})  \qquad \text{ for any } a \in [x_{v,j-1}, x_{v,j}) \subseteq [x_{e,j_e-1}, x_{e,j_e}).
    \]

    We represent each edge $e = (v,w) \in E^+(v)$ in $\calF'$ by ``auxiliary'' edges $e_j$ between $v$ and $w$, for $j = 1,\ldots,k_v$. The set $E_j$ contains exactly the auxiliary edge $e_j$, for each $e \in E^+(v)$. The payment functions are exactly the marginal payments that $v$ would assign to $e$ in the interval $[x_{v,j-1},x_{v,j})$, i.e.,
    \[
        p_{e_j}(a) = 
        \begin{cases}
            0 & \text{if } a < x_{v,j-1} \\
            p_e(a)-p_e(x_{v,j-1}) & \text{if } a \in [x_{v,j-1},x_{v,j}) \\
            p_e(x_{v,j})-p_e(x_{v,j-1}) & \text{if } a \ge x_{v,j} \\
        \end{cases}
    \]
    For every asset value $a_v = a \in [0,L^+(v))$, and $j$ such that $a \in [x_{v,j-1}, x_{v,j})$, we see
    \begin{align*}
        p_e(a) &= (0 - p_e(x_{v,j-1})) + (p_e(a) - p_e(x_{v,j-1})) + 0 \\
        &= \sum_{i < j} (p_e(x_{v,i}) - p_e(x_{v,i-1})) + (p_e(a) - p_e(x_{v,j-1})) + \sum_{i > j} 0 \\
        &= \sum_{i = 1}^{k_v} p_{e_i}(a)
    \end{align*}
    i.e., bank $v$ sends the same payments to $w$ in both $\calF$ and $\calF'$. Clearly for $a_v \ge L^+(v)$ we also have $p_e(a) = L^+(v) = p_{e_{k_v}} - 0 = \sum_{i=1}^{k_v} p_{e_i}(a)$. Therefore, both networks have equivalent payment functions and they must also have the same clearing states.
    
    It remains to define appropriate liabilities $\ell_{e_j}$ such that $\sum_{j} \ell_{e_j} = \ell_e$ and $p_{e_j}$ become priority-proportional. We set 
    \[
        \ell_{e_j} = m_{e,j} \cdot (x_{v,j}-x_{v,j-1}).
    \]
    For $a \ge L^+(v)$ this gives 
    \[
         p_{e_i}(a) = p_e(x_{v,i}) - p_e(x_{v,i-1}) = m_{e,i} \cdot (x_{v,i} - x_{v,i-1}) = \ell_{e_i},
    \]
    for every $i=1,\ldots,k_v$, and, hence, 
    \[
        \ell_e = p_e(a) = \sum_{i=1}^{k_v} p_{e_i}(a) = \sum_{i=1}^{k_v} \ell_{e_i}.
    \] 
    Moreover, for $a \in [0,L^+(v))$
    \begin{align*}
        p_{e_j}(a) &= p_e(a) - p_e(x_{v,j-1}) \\
        &= m_{e,j} \cdot (a - x_{e,j-1}) \\
        &= \frac{m_{e,j}}{\sum_{e' \in E^+(v)} m_{e',j}}  \cdot (a - x_{e,j-1}) && \text{(by~\eqref{eq:rowStochastic}) and $a < L^+(v))$}\\
        &= \frac{\ell_{e_j}}{\sum_{e' \in E^+(v)} \ell_{e'_j}} \cdot (a - x_{e,j-1}) && \text{(since $0 < (x_{v,j} - x_{v,j-1}) < \infty$ for $j \le k_v$)}\\
        &= \frac{\ell_{e_j}}{\sum_{e'_j \in E_j} \ell_{e'_j}} \cdot (a - x_{e,j-1}) 
    \end{align*}
    Therefore, bank $v$ has priority-proportional payments in $\calF'$. 

    Finally, strictly speaking the network $\calF'$ is a \emph{multi-}graph with $k_v$ parallel edges for each $(v,w) \in E$. This can be avoided by further adjusting each edge $e_j$ as follows. Add an auxiliary bank $v_{e,j}$ and split $e_j$ into two edges $e_{j,1} = (v,v_{e,j})$ and $e_{j,2} = (v_{e,j},w)$. $v_{e,j}$ has no external assets and no default cost. $e_{j,1}$ has the same liability and payment function as $e_j$. The liability of $e_{j,2}$ is infinite. Since $e_{j,2}$ is the unique outgoing edge of $v_{e,j}$, the payment function must be $p_{e_{j,2}}(a) = a$. 
    
    It is straightforward to see that $p_e(a) = \sum_{i=1}^{k_v} p_{e_i}(a) = \sum_{i=1}^{k_v} p_{e_{i,1}}(a) = \sum_{i=1}^{k_v} p_{e_{i,2}}(a)$, i.e., given the same assets, $v$ pays exactly the same amount to $w$ as in $\calF$. Hence, the clearing states are equivalent to the ones in $\calF$ (augmented by asset values $a_{v_{e,j}} = p_{e_j}(a)$ for each auxiliary bank). 
\end{proof}

The lemma yields a simple polynomial-time transformation to obtain the network $\calF'$ with $n + O(km)$ banks and $O(km)$ edges. For $\calF'$ the algorithm from~\cite{KanellopoulosKZ24} computes $\check{\veca}$ using an extension of the standard fictitious default algorithm to compute $\check{\veca}$ for proportional clearing~\cite{RogersV13}. Note that the algorithm strictly speaking does not run in finite time since it involves computing payments in each iteration by an iterative procedure that repeatedly solves a system of non-linear equations and converges monotonically from above (see Algorithm 2 line 5). However, it is not difficult to see that the iterative procedure can be implemented in polynomial time by solving a sequence of feasibility LPs and a monotone descent into the priority classes per bank.

\begin{corollary}
    \label{cor:priorityProp}
    There is a polynomial-time algorithm to compute the maximal clearing state in every financial network with piecewise-linear (or priority-proportional) payment functions.
\end{corollary}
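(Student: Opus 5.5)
The plan is to reduce Corollary~\ref{cor:priorityProp} to two ingredients: the transformation of Lemma~\ref{lem:piecewiseLinearIsPriorityProportional}, and a polynomial-time implementation of the maximal-clearing algorithm of~\cite{KanellopoulosKZ24} for priority-proportional networks. Default cost is removed first using Lemma~\ref{lem:adjust}.

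\textbf{Reduction.} Given $\calF$ with piecewise-linear payments, I build $\calF'$ as in Lemma~\ref{lem:piecewiseLinearIsPriorityProportional}. It has $n+O(km)$ banks and $O(km)$ edges, and its liabilities $\ell_{e_j}=m_{e,j}(x_{v,j}-x_{v,j-1})$ are rationals of polynomial bit length. Since the two networks have exactly the same clearing states on $V$, they have the same coordinatewise order on them. Hence the maximal clearing state of $\calF'$, restricted to $V$, is $\check{\veca}$.

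\textbf{Default cost.} If default cost is present, I would not transform it into auxiliary banks. Adding sink banks $u_t$ would change which states are maximal only where a bank's solvency changes. Instead, I would keep default cost inside the fictitious-default framework, as~\cite{RogersV13,KanellopoulosKZ24} do. In this framework, the set of banks declared insolvent only grows. For a fixed insolvent set $I$, the candidate asset vector is the maximal solution of the asset axioms in which banks in $I$ use $a^{(\alpha,\beta)}$ and the others pay in full. The algorithm then adds to $I$ every bank whose incoming assets fall below $L^+$. There are at most $n$ rounds.

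\textbf{Main obstacle: each round in polynomial time.} The difficulty is computing, in each round, the maximal fixed point of the priority-proportional system for fixed $I$. In~\cite{KanellopoulosKZ24} this is only a limit process. I would replace it with a descent over ``priority-class profiles.'' For each bank $v$, guess the active class index $j_v$, meaning $a_v\in[x_{v,j_v-1},x_{v,j_v}]$. With all $j_v$ fixed, every payment is an affine function of the assets, so the fixed-point conditions together with the interval constraints form a linear feasibility system. Maximizing $\sum_v a_v$ over it is an LP.

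I would start from the top profile, in which every bank is in its last class or solvent. If the LP is infeasible, or its optimum violates a lower interval border, I lower $j_v$ by one at every violating bank. Monotonicity of $\Phi$ guarantees that the maximal fixed point lies below the current LP solution, so no class that is dropped is ever needed again. Each bank's index decreases at most $k_v$ times, which gives at most $O(k)$ LPs per round and $O(nk)$ overall.

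This monotone-descent argument is the step I expect to be delicate. One must show that the LP maximum over the current profile dominates $\check{\veca}$, and that a violated border really certifies $\check a_v<x_{v,j_v-1}$. I would prove both by induction, using that the LP optimum is a post-fixed point of $\Phi$ for the relaxed payments, together with Tarski. As a fallback, the first approach from the text of Section~\ref{sec:computeMax} also works: compute $\hat{\veca}$ via Theorem~\ref{thm:alg} and then flood greedily as in Corollary~\ref{cor:allClear}.
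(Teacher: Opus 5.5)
Your skeleton matches the paper's. You transform via Lemma~\ref{lem:piecewiseLinearIsPriorityProportional}, then descend monotonically through per-bank priority counters with one LP per profile, for at most $\sum_v k_v$ decrements. You wrap a Rogers--Veraart loop around the descent for default cost, whereas the paper folds solvency into the counter ($r_v = k_v$ means solvent, reduced assets otherwise); that difference is harmless.

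\textbf{The gap is the descent step itself.} That step is the whole content of the corollary, and you leave it as ``delicate''. Two readings of your LP are possible, and both fail.
\begin{itemize}
\item If the interval borders are hard LP constraints, every feasible point is a genuine fixed point. So feasibility at a profile dominating the target's already yields the maximal fixed point. All the work lies in the infeasible case, where the LP returns nothing, so ``lower $j_v$ at every violating bank'' is undefined. Since the descent never moves back up, demoting a wrong bank is fatal.
\item If you instead drop the lower borders and extend class-$j_v$ payments affinely, they become negative below $x_{v,j_v-1}$. This shifts money from class-$j_v$ creditors to higher-priority ones, so relaxed payments no longer dominate true ones.
\end{itemize}
An example of the second failure: let $u$ have $a^{(x)}_u = 1/2$, no incoming claims, and unit claims with $E_1=\{(u,w_1)\}$, $E_2=\{(u,w_2)\}$. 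While $u$ is at its top class, $w_2$ receives $-1/2$. Suppose $w_2$ has $a^{(x)}_{w_2}=1.2$ and a first priority class of total liability $1$. Then $a^*_{w_2}=0.7<1=x_{w_2,1}$ and $w_2$ is demoted, although $\check{a}_{w_2}=1.2$ lies beyond that border.

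Your Tarski argument also points the wrong way. Knowing that the LP optimum is a (post-)fixed point of some monotone map only bounds it \emph{from above} by that map's greatest fixed point. That cannot show the optimum dominates $\check{\veca}$.

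\textbf{The missing device is the paper's offset relaxation.} Replace $a_v \ge x_{v,r_v}$ by $a_v + d_v \ge x_{v,r_v}$ with $d_v \ge 0$. Let $v$ pay from $a_v + d_v$, so classes $1,\ldots,r_v$ are paid in full even when $a_v$ falls short, and minimize $\sum_v d_v$.
\begin{itemize}
\item This LP is always feasible.
\item As long as $\vecr$ dominates the counters of $\check{\veca}$, the relaxed payments dominate the true ones at $\check{\veca}$. So it is $\check{\veca}$ that is a post-fixed point of the relaxed map, which is what yields the invariant $\check{\veca} \le \veca^*+\vecd^*$.
\item After normalizing so that $d^*_v>0$ implies $a^*_v+d^*_v=x_{v,r_v}$, a positive offset certifies $\check{a}_v \le x_{v,r_v}$. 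Hence $r_v$ can be lowered safely.
\item Only once the optimal offset is $0$ does one maximize $\sum_v a_v$.
\end{itemize}

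\textbf{Two smaller points.} First, your opening appeal to Lemma~\ref{lem:adjust} contradicts your own default-cost paragraph. It would not work anyway, because that construction presupposes knowing which banks are solvent in the target state. Second, your flooding fallback covers only networks without default cost, since Theorem~\ref{thm:alg} and Corollary~\ref{cor:allClear} are established only in that setting.
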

\begin{proof}
    The procedure is an efficient implementation of the top-iteration. Suppose each bank $v$ has a priority-proportional payment function. We maintain a counter $r_v \in \{0,1,\ldots,k_v\}$ that indicates the highest edge class which is supposed to be completely paid for by bank $v$. Intuitively, we decide if there is a feasible clearing state for the current counters $\vecr$, i.e., such that for each bank $v$ classes $1,\ldots,r_v$ are completely paid, class $r_v + 1$ is paid proportionally using the remaining assets, and classes $r_v + 2,\ldots, k_v$ are not paid at all. We term this a \emph{clearing state at $\vecr$}. If we realize that this is impossible, we monotonically decrease some of the counters in the next iteration.

    Initially, we set $r_v = k_v$ for all $v \in V$, i.e., we assume all banks are solvent. More generally, consider an arbitrary iteration with a vector $\vecr$. We set $p_e(\veca) = \ell_e$ for all $1 \le i \le r_v$ and each $e \in E_i$. If $r_v = k_v$, then $v$ is meant to be solvent, so any clearing state at $\vecr$ must deliver sufficient assets to $v$, i.e., 
    \[
        a_v = a_v^{(x)} + \sum_{(u,v) \in E^+(v)} p_e(a_u) \ge x_{v,r_v} = L^+(v).
    \]
    If $r_v < k_v$, then for each $r_v+2 \le i \le k_v$ we set $p_e(\veca) = 0$ for each $e \in E_i$. For the remaining class $r_v+1 \le k_v$, there must be non-negative assets available, i.e., 
    \[
        a_v = \alpha_v a_v^{(x)} + \beta_v \sum_{(u,v) \in E^+(v)} p_e(a_u) \ge x_{v,r_v}.
    \]
    To check whether a clearing state at $\vecr$ exists, we relax these constraints by an offset $d_v \ge 0$ for each $v \in V$. With the relaxed constraints we compose the natural feasibility LP~\eqref{eq:FeasibleLP} with the objective to minimize the offsets $\sum_v d_v$. 
    
    If $\check{\veca}$ is a clearing state at $\vecr$, LP \eqref{eq:FeasibleLP} allows a solution without offsets and hence has an optimal value of 0. Otherwise, we maintain by induction the invariant that for any optimal solution $(\veca^*,\vecd^*)$ of LP~\eqref{eq:FeasibleLP} we have $\check{\veca} \le \veca^* + \vecd^*$ coordinate-wise with a strict inequality for at least one entry. Since there is no clearing state at $\vecr$, the available assets of at least one bank $v$ cannot suffice to pay the liabilities of classes $1,\ldots,r_v$ in full. Thus, we require some offset $d_v^* > 0$ to fulfill the corresponding constraint, and the optimal value of LP~\eqref{eq:FeasibleLP} becomes strictly positive.

    \begin{equation}
        \label{eq:FeasibleLP}
        \begin{aligned}
           \text{Min.\ } \; &d_v &\\
           \text{s.t.\ } \; & a_v = a_v^{(x)} + \sum_{(u,v) \in E^+(v)} p_e & \text{for all $v$   with $r_v = k_v$}\\
            &a_v = \alpha_v a_v^{(x)} + \beta_v \sum_{(u,v) \in E^+(v)} p_e & \text{for all $v$ with $r_v < k_v$}\\
            &a_v + d_v \ge x_{v,r_v} & \text{for all $v \in V$}\\    
            &p_e = 0  & \text{for all $e \in E_i$ with $i \ge r_v+2$}\\
            &p_e = \ell_e  & \text{for all $e \in E_i$ with $i \le r_v$}\\
            &p_e = \frac{\ell_e}{\sum_{e' \in E_{r_v+1}} \ell_{e'}} \cdot (a_v + d_v - x_{v,r_v}) & \text{for all $e \in E_{r_v+1}$}\\
            &p_e \ge 0 & \text{for all $e \in E$}\\
            &d_v \ge 0 & \text{for all $v \in V$}
        \end{aligned}
    \end{equation}

    When the optimal value is strictly positive, we must decrease $\vecr$. Consider any bank $v$ with strictly positive offset $d^*_v > 0$ in the optimal solution. Clearly, if $a_v^* + d_v^* > x_{v,r_v}$ we can reduce $\delta_v = a_v^* + d_v^* - x_{v,r_v}$ and raise the offsets of the out-neighbors in $E_{r_v+1}$ by the proportional amount $\delta_v \cdot \ell_e/(\sum_{e' \in E_{r_v+1}} \ell_{e'})$. This maintains feasibility and the objective function value. Iterating this argument, and using optimality of $(\veca^*,\vecd^*)$, we have w.l.o.g.\ that $d_v^* > 0$ implies $a_v^* + d_v^* = x_{v,r_v}$. Since by our invariant $\check{a}_v \le a_v^* + d_v^* = x_{v,r_v}$, it is possible (and potentially necessary) to decrease $r_v$ by (at least) 1. Hence, we decrease $r_v$ by 1 and start the next iteration. Note that this adjustment maintains the invariant.
    
    Thus, after at most $\sum_{v \in V} k_v$ iterations we reach $\vecr$ such that $\check{\veca}$ is a clearing state at $\vecr$. Then LP~\eqref{eq:FeasibleLP} has optimal value 0. We can then compute $\check{\veca}$ by solving an adjusted version of LP~\eqref{eq:FeasibleLP}. We remove all offset variables $d_v$ from the constraints and replace the objective function by maximizing $\sum_{v \in V} a_v$. Overall, the running time is polynomial.
\end{proof}

\section{Claims Trades for Minimal Clearing States}
\label{sec:trade}

In this section, we consider the properties of \emph{claims trading}, a decentralized network adjustment, when applied with \emph{minimal} clearing. In claims trading, a bank $w$ can buy an edge $(u,v)$ by transferring some of its external assets to $v$. Throughout this section, we focus on networks without default cost, i.e., $\alpha_v = \beta_v = 1$ for all $v \in V$.

We start by showing that in a claims trade, it is impossible that \emph{both} creditor $v$ and buyer $w$ \emph{strictly} improve their assets. We prove this result for financial networks in which all payment functions are \emph{strictly} monotone, i.e., $p_e(a_u) < p_e(a_u + \varepsilon)$ for every $\varepsilon \in (0,\ell_e - a_u]$. Our proof uses the following lemma, which states novel properties of uniqueness for clearing states in such networks.
\begin{lemma}
    \label{lem:characterize}
    Consider a bank $v$ in a financial network $\mathcal{F}$ without default cost and with strictly monotone payment functions. If the clearing states are not unique w.r.t.\ $v$ (i.e., $\hat{a}_v \neq \check{a}_v$), then $\hat{a}_v = 0$.
\end{lemma}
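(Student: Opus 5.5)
The plan is a flow-conservation argument on the set of banks whose assets differ between the two extreme clearing states, followed by a bottom-iteration argument. Write $\vecd = \check{\veca} - \hat{\veca} \ge \vecZero$ and let $S = \{u \in V \mid d_u > 0\}$. By assumption $v \in S$. I will show two things: (i) $S$ receives nothing from outside, neither external assets nor payments from banks in $V \setminus S$; (ii) minimality of $\hat{\veca}$ then forces $\hat{a}_u = 0$ for all $u \in S$, in particular $\hat{a}_v = 0$.

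For (i), I sum the asset axioms over $S$. For an edge $e=(u,w)$ let $\Delta p_e = p_e(\check{a}_u) - p_e(\hat{a}_u) \ge 0$. Then $\Delta p_e > 0$ is only possible when $u \in S$. External assets coincide in both states and there is no default cost. So for every $w \in S$,
\[
d_w = \sum_{e \in E^-(w)} \Delta p_e ,
\]
and only edges with debtor in $S$ contribute. On the other side, the no-fraud axiom gives
\[
\sum_{e \in E^+(u)} \Delta p_e = \min\{\check{a}_u, L^+(u)\} - \min\{\hat{a}_u, L^+(u)\} \le d_u \quad \text{for } u \in S.
\]
Summing over $S$, the total inflow increase into $S$ (all of which comes from edges inside $S$) equals $\sum_{u\in S} d_u$. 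This is at most the total outflow increase of $S$, which is at most $\sum_{u \in S} d_u$. Hence all inequalities are tight, which yields two conclusions:
\begin{itemize}
\item every $u \in S$ satisfies $\check{a}_u \le L^+(u)$, so $\hat{a}_u < L^+(u)$ and $u$ is insolvent in $\hat{\veca}$;
\item every edge leaving $S$ has $\Delta p_e = 0$.
\end{itemize}

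Next I use strict monotonicity, read as: each payment function of $u$ strictly increases as long as $a_u < L^+(u)$. An edge $e=(u,w)$ with $u \in S$ and $w \notin S$ would then have $\Delta p_e > 0$, because $\hat{a}_u < \check{a}_u$ and $\hat{a}_u < L^+(u)$. Therefore no edges leave $S$. Since every $u \in S$ is insolvent in $\hat{\veca}$, all of its assets are paid out along its edges, which all stay in $S$. Conservation for $S$ in the state $\hat{\veca}$ then reads
\[
\sum_{u \in S} \hat{a}_u = \sum_{u\in S} a^{(x)}_u + (\text{payments from } V\setminus S \text{ into } S) + \sum_{u \in S} \hat{a}_u .
\]
So the external assets of $S$ and all inflow from $V \setminus S$ into $S$ are zero. (The inflow from outside is identical in $\hat{\veca}$ and $\check{\veca}$ anyway, because those edges have $\Delta p_e = 0$.)

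For (ii), consider the vector $\veca'$ that equals $\hat{\veca}$ on $V \setminus S$ and $0$ on $S$. I check that it is a clearing state. Banks in $S$ receive no external assets and no payments from outside, and all their internal in-edges come from banks with assets $0$, so their asset axiom holds with value $0$. Banks outside $S$ receive nothing from $S$, because $S$ has no outgoing edges to $V \setminus S$, so their assets are unchanged. Since $\veca' \le \hat{\veca}$, minimality gives $\veca' = \hat{\veca}$, hence $\hat{a}_v = 0$. Equivalently, a bottom iteration started at $\vecZero$ never moves any coordinate in $S$.

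The main obstacle is the step that rules out edges leaving $S$. It relies on the exact meaning of strict monotonicity, in particular that an edge cannot be fully paid while its debtor is insolvent. I will state that reading explicitly. If only a weaker form holds, I would instead argue that edges leaving $S$ are constant at $\hat{\veca}$, and use tightness at the no-fraud step to show they carry zero payment.
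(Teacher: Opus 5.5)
Your proof is correct, and it takes a genuinely different route from the paper.

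\textbf{The paper's route.} The paper argues through the run of Algorithm~\ref{alg:minClear}:
\begin{itemize}
\item since $\check{\veca}-\hat{\veca}$ is a circulation, $v$ sits in a non-singleton sink component of the active graph at the end;
\item edges only become inactive through solvency, and without default cost sinks only accumulate, so $v$ was in such a component throughout the run;
\item any increase of $v$'s assets would have forced that component to be flooded, so this is only possible with $\hat a_v = 0$.
\end{itemize}

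\textbf{Your route.} You compare the two extreme fixed points directly:
\begin{itemize}
\item summing the asset axioms over $S$ and using tightness of the no-fraud bound gives insolvency of $S$ in $\hat{\veca}$ and zero payment increase on claims leaving $S$;
\item strict monotonicity turns this into closedness of $S$;
\item conservation removes all external assets and inflow of $S$;
\item zeroing $S$ yields a fixed point below $\hat{\veca}$.
\end{itemize}

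\textbf{What each buys.} Your argument is self-contained. It needs neither the correctness nor the phase structure of the algorithm, so it also covers strictly monotone functions that are not piecewise-linear. It also makes rigorous the step the paper leaves informal, namely why persisting in a flooded component forces zero assets. It proves more as well: all of $S$ has zero assets in $\hat{\veca}$, and $S$ has no external assets, no inflow and no claims towards $V\setminus S$. Hence a bank with positive external assets lies outside $S$ and is unreachable from $v$ in the whole claims graph. This is the kind of fact the subsequent theorem uses. The paper's version, in exchange, ties the lemma to its flooding picture.

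\textbf{On your reading of strict monotonicity.} Your explicit reading is essential: each claim of $u$ is strictly increasing on $[0,L^+(u)]$, so no claim is fully paid while its debtor is insolvent. It is also what the paper uses implicitly. Drop the fallback in your last paragraph, though. Under the weaker reading, where a claim need only increase strictly until it is fully paid, the statement is false, so no argument can recover it. Take $a^{(x)}_u = 1$ and claims $(u,x)$ with liability $1$, $(u,y)$ with liability $2$, and $(y,u)$ with liability $2$. Let $u$ pay $a/2$ on each of its claims for $a \le 2$ and everything beyond $2$ to $(u,y)$. Every $a_u \in [2,3]$ with $a_y = a_u - 1$ is a clearing state, so $\hat a_u = 2 \neq 3 = \check a_u$. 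The claim $(u,x)$ leaving $S=\{u,y\}$ is fully paid in both states and carries payment $1$, so tightness does not make it vanish. This example even satisfies the definition as printed in the paper, with $\varepsilon \in (0,\ell_e - a_u]$, so stating your reading explicitly is not a mere formality.
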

\begin{proof}
Let $d_u = \check{a}_u - \hat{a}_u$ be the difference in payments between the greatest and least fixed point. Moreover, let $d_e = p_e(\check{a}_u) - p_e(\hat{a}_u)$ for every edge $e = (u,w) \in E$. Consider any bank $v$ with $d_v > 0$. Then every node $u$ reachable from $v$ in the active graph $G_{\hat{\veca}}$ must have $d_u > 0$. Moreover, $d$ represents a circulation, so $\sum_{e \in E^+(u)} d_e = d_u = \sum_{e \in E^-(u)} d_e$ for every bank $u \in V$. This shows that at the end of Algorithm~\ref{alg:minClear}, $v$ must be part of a non-sink SCC. Since all payment functions are \emph{strictly} monotone, the active graph $G_\vecb$ is monotonically getting sparser during the execution of Algorithm~\ref{alg:minClear} due to solvency of banks. Moreover, since there is no default cost, the set of sinks in the network is monotonically growing. Thus, if $v$ is part of a non-sink SCC in the end, it must be part of a non-sink SCC throughout the entire execution of the algorithm. However, once assets of $v$ get raised, all non-sink SCCs reachable from $v$ must be flooded. As a consequence, $v$ can only be part of such an SCC at the end of the algorithm when $\hat{a}_v = 0$.
\end{proof}
%
%
\begin{theorem}
    Consider a financial network $\mathcal{F}$ without default cost and with strictly monotone payment functions. There exists no claims trade that strictly improves the assets of \emph{both} a buyer~$w$ and a creditor~$v$ w.r.t.\ the minimal clearing state.
\end{theorem}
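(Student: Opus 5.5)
The plan is to reduce the statement to the corresponding impossibility for \emph{maximal} clearing, where a claims trade is a debt swap~\cite{HoeferVW24,PappW21EC}. Lemma~\ref{lem:characterize} is the bridge between minimal and maximal clearing. Fix a claim $e=(u,v)$, a buyer $w$ and a return $\rho$. Let $\hat{\veca}$, $\check{\veca}$ be the minimal and maximal clearing states before the trade and $\hat{\veca}'$, $\check{\veca}'$ those after. Suppose for contradiction that $\hat{a}'_v > \hat{a}_v$ and $\hat{a}'_w > \hat{a}_w$.

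First step: both $\hat{a}'_v$ and $\hat{a}'_w$ are strictly positive. The post-trade network still has no default cost and strictly monotone payment functions, so Lemma~\ref{lem:characterize} applies and gives $\hat{a}'_v = \check{a}'_v$ and $\hat{a}'_w = \check{a}'_w$. Hence after the trade the relevant coordinates are determined by the maximal clearing state.

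Second step: compare with the pre-trade maximal state. If $\hat{a}_v = \check{a}_v$ and $\hat{a}_w = \check{a}_w$, the assumption becomes $\check{a}'_v > \check{a}_v$ and $\check{a}'_w > \check{a}_w$. To contradict this I would prove directly the maximal-clearing version of the statement, using that $\check{\veca}$ is the supremum of all post-fixed points $\{\veca : \Phi(\veca) \ge \veca\}$ of the pre-trade map. From $\check{\veca}'$ I would build a vector $\veca$ for the original network. It keeps all coordinates of $\check{\veca}'$ except that the payment $p_e(\check{a}'_u)$ and the return $\rho$ are moved back between $v$ and $w$: set $a_v = \check{a}'_v - \rho + p_e(\check{a}'_u)$ and $a_w = \check{a}'_w + \rho - p_e(\check{a}'_u)$. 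Depending on the sign of $p_e(\check{a}'_u) - \rho$, one of $v,w$ receives at least as much as after the trade. I would then verify, possibly after lowering the other coordinate to its pre-trade value, that the modified vector is a post-fixed point. If so, its coordinates are bounded by those of $\check{\veca}$, which contradicts the strict improvement of the bank that did not lose. The no-fraud axiom and monotonicity of all $p_e$ should make $\Phi(\veca)\ge\veca$ checkable coordinatewise.

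Third step: the degenerate cases $\hat{a}_v \neq \check{a}_v$ or $\hat{a}_w \neq \check{a}_w$. By Lemma~\ref{lem:characterize} the bank in question then has $\hat{a}=0$ and lies in a non-sink SCC of $G_{\hat{\veca}}$ consisting of banks with zero minimal assets. Here I would argue via Algorithm~\ref{alg:minClear}. In the post-trade network $v$ holds external assets at least $\rho$, and $w$ now holds the claim $e$, so any strict gain of one of them is injected into its reachable SCCs. These SCCs must be flooded, which, exactly as in the proof of Lemma~\ref{lem:characterize}, pushes the circulation back to the other bank through the same SCC. I would try to show that the net flow between $v$ and $w$ generated this way cannot be positive in both directions.

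The main obstacle is the second step: the swapped vector need not be a post-fixed point because $v$ and $w$ move in opposite directions. If it fails, I would use monotonicity of the maximal clearing state in external assets together with the conservation identity $\sum_x a^{(x)}_x$ along the unique clearing in the relevant component: the total assets of $v$ and $w$ change only by $p_e(\cdot) - p_e(\cdot)$ terms, which cannot be strictly positive for both banks simultaneously.
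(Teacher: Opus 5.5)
Your case split matches the paper's, but your third step is only a statement of intent. The mechanism you have in mind there, flooding that pushes the circulation back to the other bank through a shared SCC, cannot occur. Two observations are missing.

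First, $w$ is never degenerate. It pays $\rho>0$ out of its external assets, so $\hat a_w\ge a^{(x)}_w>0$, and Lemma~\ref{lem:characterize} gives $\hat a_w=\check a_w$. Only $v$ needs treatment.

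Second, suppose $\hat a_v<\check a_v$. As in the proof of Lemma~\ref{lem:characterize}, the set $R=\{x:\check a_x>\hat a_x\}$ contains $v$, consists of banks insolvent in $\hat{\veca}$, and is closed under outgoing claims, because strict monotonicity forwards every increase along all of them. Having zero equity and no outflow, $R$ has no external assets and no inflow. Hence $\hat a_x=0$ on $R$, and $w\notin R$. So $v$ and $w$ never share an SCC, before or after the trade, and nothing flooded inside $R$ leaks out. This separation is the paper's argument: $w$ is unreachable from $v$ in $G_{\hat{\veca}}$ and stays so as the active graph only thins out.

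What you still have to control is the one new route the trade creates, the redirected claim $e=(u,w)$. If $u\in R$ (e.g.\ $u,v$ form a 2-cycle), $w$ can get the entire return back through it, so separation alone does not finish the case. Here is how to close it.
\begin{itemize}
\item The banks paying into $R$ have zero assets. Hence they are not reachable from $w$ and are unaffected by the trade.
\item In every clearing state, total equity equals total external assets. Applying this to $R$ after the trade shows that at most $\rho$ leaves $R$ along $e$.
\item The part of the network reachable from $w$ therefore sees only the net change $p_e(\hat a'_u)-\rho\le 0$ at $w$.
\item Monotonicity of the minimal clearing state in external assets then yields $\hat a'_w\le\hat a_w$.
\end{itemize}

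Your second step is the paper's second case, and there the paper simply invokes the impossibility for maximal clearing~\cite{HoeferVW24}. Your first step is not needed for this: $\check{\veca}'\ge\hat{\veca}'$ already turns a strict gain of both banks under minimal clearing into one under maximal clearing. You should cite that result rather than re-prove it as planned. Your swapped vector is in general not a post-fixed point of the pre-trade map. Whichever of $v,w$ ends up below its post-trade value pays less to its out-neighbours, whose coordinates you keep at their post-trade values, and lowering further coordinates only propagates the deficit. The conservation fallback does not help either, since the assets of $v$ and $w$ change through all their incoming claims, not only through $e$.
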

\begin{proof}   
    Consider a claims trade with claim $(u,v)$ and buyer $w$. In order to pay any return, $w$ must have external assets $a^{(x)}_w > 0$. By Lemma~\ref{lem:characterize}, this implies $\hat{a}_w = \check{a}_w$. 
    
    First, suppose that $\hat{a}_v < \check{a}_v$. Then, Lemma~\ref{lem:characterize} implies $\hat{a}_v = 0$. As discussed in the proof of the lemma, $v$ is part of a non-sink SCC in $G_{\hat{\veca}}$. Moreover, $w$ must be unreachable from $v$ in $G_{\hat{\veca}}$. Since all payment functions are strictly monotonic, the active graph becomes only sparser for larger payments. Hence, $w$ remains unreachable from $v$ even when $v$ has higher assets. Consequently, it is impossible for $w$ to recover any portion of the return $\rho$ paid to $v$ by larger incoming payments. Thus, there cannot be a creditor-positive trade when $\hat{a}_v < \check{a}_v$.

    Second, suppose that $\hat{a}_v = \check{a}_v$. Now, if there was such a trade, w.r.t. $\hat{\veca}$, this trade would also strictly improve the assets of both parties w.r.t.\ $\check{\veca}$. This is impossible~\cite{HoeferVW24}.
\end{proof}

\subsection{Computing Optimal Creditor-Positive Trades}

We focus on computing \emph{creditor-positive} trades in this subsection. The main result is that existence of such trades can be decided in polynomial time for (even non-strictly) monotone, piecewise-linear payment functions. Moreover, an optimal creditor-positive return can be computed in polynomial time. 

Even if there is a creditor-positive trades, it is not directly obvious that an \emph{optimal} such return must exist as well. For example, consider the related problem of \emph{cash injection}. The goal is to allocate $M$ external assets in the network to maximize the total assets. For networks with proportional payments, an optimal solution can be computed in polynomial time when the network is evaluated by the \emph{maximal} clearing state~\cite{KanellopoulosKZ25}. For \emph{minimal} clearing states, it is easy to see that there are simple networks where no optimal solution exists\footnote{Suppose the financial network has two components; a cycle of 2 banks and a path of $n-2$ banks. All edges have liability $1$, all banks have no external assets. Suppose we want to inject $M=1$. Initially, $\hat{\veca} = \vecZero$. Assigning $1-\varepsilon$ to the head of the path and $\epsilon$ to the banks in the cycle yields total assets of $(n-2)(1-\varepsilon) + 2 + \varepsilon$. This expression is maximized for $\varepsilon = 0$, but it applies only when $\varepsilon \in (0,1]$. For $\varepsilon = 0$, the total assets in $\hat{\veca}$ drop to $n-2$. As such, there is no optimal solution.}

We first analyze the structure of the set of creditor-positive returns. Let $\hat{\veca}$ be the pre-trade minimal clearing state and $\rho_{\min} = p_e(\hat{a}_u)$ be the pre-trade payment on claim $e = (u,v)$. 
\begin{lemma}
    \label{lem:tradeStructure}
    The set of creditor-positive returns forms a (possibly empty) interval $(\rho_{\min}, \rho^*]$.
\end{lemma}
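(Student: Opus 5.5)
We analyse the post-trade minimal clearing state as a function of the return $\rho$. Fix the claim $e=(u,v)$ and buyer $w$, and write $\hat{\veca}(\rho)$ for the minimal clearing state of the traded network $\calF_\rho$. In $\calF_\rho$ the creditor of $e$ is $w$, $w$ has external assets $a^{(x)}_w-\rho$, and $v$ has $a^{(x)}_v+\rho$. The plan has three parts: show that $\rho\le\rho_{\min}$ is never creditor-positive, show that $\hat a_w(\rho)$ is non-increasing in $\rho$ while $\hat a_v(\rho)-\rho$ is non-increasing as well, and derive the interval shape from these monotonicity facts.

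\textbf{Lower end.} The trade can be viewed as adding $\rho$ to $v$ and removing the incoming payment $p_e(\hat a_u(\rho))$ from $v$. First I show that for $\rho\le \rho_{\min}$ the creditor does not strictly gain. Suppose $\hat a_v(\rho)>\hat a_v$. Consider $\vecb=\hat{\veca}\wedge\hat{\veca}(\rho)$ in the original network. The minimal state arises from the bottom iteration, which only uses the payment on $e$ as an incoming amount to $v$. I compare the two networks through the ``external-asset'' view of Lemma~\ref{lem:monotone}. In $\calF$, the payment into $v$ along $e$ at $\vecb$ is at most $\rho_{\min}\ge\rho$. Thus, mimicking the proof of Lemma~\ref{lem:monotone}, I will show $\Phi_{\calF_\rho}(\vecb)\le$ something dominated by $\hat{\veca}$. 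This yields a post-fixed point below $\hat{\veca}$ and hence a contradiction with minimality of $\hat{\veca}(\rho)$. Symmetrically, $w$ loses $\rho$ and gains $p_e(\hat a_u(\rho))$, so by the same comparison $w$ is weakly fine. So creditor-positivity requires $\rho>\rho_{\min}$. Conversely, for $\rho$ slightly above $\rho_{\min}$, $v$ strictly gains by Lemma~\ref{lem:monotone} applied to the extra external assets.

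\textbf{Monotonicity in $\rho$.} For $\rho<\rho'$, $\calF_{\rho'}$ differs from $\calF_\rho$ by moving $\rho'-\rho$ of external assets from $w$ to $v$. Banks not reachable from $w$ are affected only via $v$'s increase, and by Lemma~\ref{lem:monotone} and the bottom-iteration argument their minimal assets weakly increase in $\rho$ except through the loss at $w$. I will establish two claims. First, $\hat a_v(\rho)$ is non-decreasing: the decrease at $w$ can return to $v$ only via paths, and since payments are 1-Lipschitz, the total loss propagated to $v$ is at most $\rho'-\rho$. Second, $\hat a_w(\rho)$ is non-increasing, by the symmetric argument. Then $\{\rho: \hat a_w(\rho)\ge\hat a_w\}$ is a down-closed set $[0,\rho^*]$ intersected with the feasible range. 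Closedness at $\rho^*$ follows from continuity of $\hat a_w$ in $\rho$ from the right, or at least from upper semicontinuity. Combined with the lower end, the set of creditor-positive returns is $(\rho_{\min},\rho^*]$, or empty if $\rho^*\le\rho_{\min}$.

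\textbf{Main obstacle.} The hard step is the continuity/closedness at $\rho^*$. Minimal clearing states can jump, as in the cash-injection footnote, where flooding a cycle suddenly becomes necessary. I expect $\hat{\veca}(\rho)$ to be left-continuous but not right-continuous. For closedness at the upper end, what is needed is that $\{\rho:\hat a_w(\rho)\ge\hat a_w\}$ contains its supremum. I would argue this via the phase structure from Algorithm~\ref{alg:minClear}: on each phase, $\hat{\veca}(\rho)$ is affine in $\rho$ with slopes $\vecs=(\matI-\matM)^{-1}\vece$ as in Lemma~\ref{lem:increase}. Jumps occur only when $\rho$ first becomes positive into a newly reachable flood region. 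Such a jump only increases assets, and increases at $w$ help rather than hurt. Hence the supremum is attained. Making this rigorous, specifically that jumps are upward and occur exactly at left endpoints, is where I would expect the most care to be needed.
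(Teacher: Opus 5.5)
Your three-part skeleton (lower end, monotonicity in $\rho$, closedness at $\rho^*$) matches the paper's, but the step you flag yourself, closedness at $\rho^*$, has a genuine gap. For the non-increasing map $\rho\mapsto\hat a_w(\rho)$ you need continuity from the \emph{left} at $\rho^*$ (equivalently, upper semicontinuity), not from the right. More seriously, your justification is false: you claim $\hat{\veca}(\rho)$ is left-continuous because jumps occur only when a new flood region is reached and hence only increase assets. But raising $\rho$ also \emph{withdraws} external assets from $w$, and a sink SCC fed only through $w$ un-floods when that feed reaches $0$. For instance, let $a^{(x)}_w=1$, let the only out-edge $(w,c)$ of $w$ (liability $1$) enter a closed cycle $c\leftrightarrow d$ with liabilities $5$, and take $e=(u,v)$ with $\ell_e\ge 1$ and $u$ without assets. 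Then the post-trade minimal state has $\hat a_c(\rho)=6-\rho$ for $\rho<1$ but $\hat a_c(1)=0$. This is a downward jump at a left limit, and nothing in your argument prevents such a jump from occurring at $w$ at $\rho^*$. The paper closes this step with an idea your proposal lacks. On the creditor-positive range the assets of $w$ are pinned at $\hat a_w$: at $\rho_{\min}$ they are at most $\hat a_w$, since $\hat{\veca}$ is still a clearing state there, they are non-increasing in $\rho$, and creditor-positivity requires at least $\hat a_w$. Hence the outgoing payments of $w$ are frozen. Along an increasing sequence of creditor-positive returns the post-trade states are then coordinatewise non-decreasing, and every additionally shifted amount must travel from $v$ back to $w$. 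The paper carries this property to the limit $\rho^*$ via continuity of the payment functions.

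The other two parts also need repair. In the lower-end step with $\rho\le\rho_{\min}$, $\vecb=\hat{\veca}\wedge\hat{\veca}(\rho)$ does satisfy $\Phi_{\calF_\rho}(\vecb)\le\vecb$ at every bank other than $w$. At $w$, however, when $b_w=\hat a_w$ you only get $\Phi_{\calF_\rho}(\vecb)_w\le\hat a_w+\rho_{\min}-\rho$, because $w$ loses only $\rho$ externally but may receive up to $\rho_{\min}$ along $e$. So the argument goes through only if $\hat a_w(\rho)\le\hat a_w$, and it leaves open the case in which $v$ and $w$ both gain strictly. Your converse (``for $\rho$ slightly above $\rho_{\min}$, $v$ strictly gains by Lemma~\ref{lem:monotone}'') misapplies that lemma, which concerns a pure increase of external assets, whereas the trade simultaneously withdraws $\rho$ from $w$ and reroutes $e$. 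The claim is false in general. Take $a^{(x)}_w=1$, proportional edges $(w,u)$ and $(w,v)$ of liability $1$, $e=(u,v)$ of liability $1$, $a^{(x)}_u=a^{(x)}_v=0$, and no out-edges at $v$. Then $\rho_{\min}=1/2$, and after the trade $\hat a_w(\rho)=2-2\rho$ and $\hat a_v(\rho)=\rho+(1-\rho)=1=\hat a_v$ for every $\rho$. What does hold is that raising $\rho$ while the assets of $w$ stay fixed raises those of $v$ by at least the shifted amount; this is again where pinning $w$ enters. Finally, in the monotonicity step your bound that the loss propagated from $w$ to $v$ is at most $\rho'-\rho$ can fail when $v$ lies on a cycle. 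That loss and the gain from the direct injection at $v$ are amplified by the same cycle, so what has to be compared is loss against gain at $v$. The same amplification also refutes the claim in your plan that $\hat a_v(\rho)-\rho$ is non-increasing.
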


\begin{proof}
    We denote by $\hat{\vecb}$ the post-trade minimal clearing state. Suppose there exists any creditor-positive return $\rho$.
    
    We first show that $\rho > \rho_{\min}$. Consider any creditor-positive $\rho$. For convenience, we slightly abuse notation with $p'_{e'} = p_{e'}(\hat{b}_y)$ for the post-trade payments on any edge $e' = (x,y)$ and $p_{e'}$ for the pre-trade payments. We know that $\hat{b}_v > \hat{a}_v$. Also, since any creditor-positive trade represents a Pareto-improvement, the post-trade payments $\vecp' \ge \vecp$ coordinatewise. This implies
    \begin{equation*}
        \hat{b}_v = a_v^{(x)} + \rho + \sum_{e' \in E^-(v) \setminus \{e\}} p'_{e'} \ge a_v^{(x)} + \rho + \sum_{e' \in E^-(v) \setminus \{e\}} p_{e'} = \hat{a}_v + \rho - \rho_{\min},
    \end{equation*}
    so $\rho > \rho_{\min}$.

    For returns $\rho > \rho_{\min}$, consider the post-trade network with a return $\rho_{\min}$ and shift additional external assets from $w$ to $v$. Investing an additional amount of assets directly at $v$ instead of $v$ receiving (parts of) it partly after investment at $w$ can never hurt the assets of $v$ in $\hat{\vecb}$. Consequently, the post-trade assets of $v$ are non-decreasing in $\rho$ over the interval $[\rho_{\min}, \infty)$. By the same argument, the post-trade assets of $w$ are non-increasing in $\rho$ over the interval $[\rho_{\min}, \infty)$. Therefore, all creditor-positive returns must form a consecutive interval.

    For $\rho = \rho_{\min}$ it holds $\hat{\vecb} = \hat{\veca}$, since we simply exchange the same amount of incoming/external assets of $v$ and $w$, respectively. As such, the interval is open on the left. Let us argue that the interval is closed on the right, i.e., it is $(\rho_{\min}, \rho^*]$ for some $\rho^*$ (if non-empty). Thus, if there is any creditor-positive return, an optimal return $\rho^*$ exists. 
    
    Consider an increasing, converging sequence $\lim_{i\to \infty} \rho^{(i)} = \rho^*$. All returns $\rho^{(i)}$ are creditor-positive. Let $\hat{\vecb}^{(i)}$ be the resulting post-trade assets. Note that $\hat{\vecb^{(i)}}$ is coordinate-wise non-decreasing -- the assets of $w$ remain at $\hat{a}_w$, the assets of $v$ are non-decreasing. As such, the assets in the entire network are non-decreasing. Upon shifting an additional amount of $\varepsilon^{(i)} = \rho^{(i)} - \rho^{(i-1)}$ of external assets to $v$, we decrease the external assets of $w$ by this amount. Since both returns are creditor-positive, $w$ must receive additional incoming payments of $\varepsilon^{(i)}$ in the network. The total assets shall remain at $\hat{a}_w$, which keeps the outgoing assets fixed at $\min\{\hat{a}_w, L^+(w)\}$. Hence, $w$ cannot be part of additional flooded components, so the additional incoming payments of $\varepsilon^{(i)}$ must originate from the $\varepsilon^{(i)}$ external assets invested at $v$. Since all payment functions are continuous and non-decreasing, there must be a unique largest value $\rho^*$ for which the additional external assets invested at $v$ arrive completely at $w$. This is the limit $\rho^*$, and it is also a creditor-positive return. 
\end{proof}
Based on this structural insight, we proceed to show the main result of this section.
\begin{theorem}
Consider a financial network $\calF$ without default cost and with piecewise-linear payment functions. For a given claim $e = (u,v)$ and a buyer $w$, it can be decided in polynomial-time if a creditor-positive trade exists. If the trade exists, the optimal creditor-positive return can be computed in polynomial time. 
\end{theorem}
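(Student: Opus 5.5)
By Lemma~\ref{lem:tradeStructure}, the set of creditor-positive returns is an interval $(\rho_{\min}, \rho^*]$. The plan is therefore to compute the candidate endpoint $\rho^*$ directly and then certify it. First, compute the pre-trade minimal clearing state $\hat{\veca}$ with Algorithm~\ref{alg:minClear}, and set $\rho_{\min} = p_e(\hat{a}_u)$. For a given return $\rho$, the post-trade network $\calF_\rho$ has external assets $a^{(x)}_w - \rho$ at $w$ and $a^{(x)}_v + \rho$ at $v$, and the creditor of $e$ is changed to $w$. For any fixed $\rho$ we can compute its minimal clearing state $\hat{\vecb}(\rho)$ in polynomial time by Theorem~\ref{thm:alg}, and then test whether $\hat{b}_v(\rho) > \hat{a}_v$ and $\hat{b}_w(\rho) \ge \hat{a}_w$. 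The remaining work is to locate $\rho^*$ exactly, since $\rho$ ranges over a continuum.

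To do this, we start from the post-trade network at $\rho = \rho_{\min}$, where $\hat{\vecb}(\rho_{\min}) = \hat{\veca}$. We then view raising $\rho$ as the simultaneous injection of $+1$ unit at $v$ and $-1$ unit at $w$, mirroring the Increase-LP analysis of Lemma~\ref{lem:increase}. Within one phase $\calP$ of the active graph, the proof of Lemma~\ref{lem:increase} shows that the minimal clearing state depends linearly on the external assets, provided the relevant $\matI-\matM$ is invertible. We therefore solve $\vecs(\matI - \matM) = \vece_v - \vece_w$ to obtain slopes $\vecs$. Along the current piece, $w$ stays at $\hat{a}_w$ iff $s_w \ge 0$ (indeed $s_w = 0$ is forced for a creditor-positive trade beyond the first piece, since the assets of $w$ are non-increasing in $\rho$). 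The assets of $v$ increase iff $s_v > 0$.

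The piece ends either at a phase boundary, computed via the minimum in~\eqref{eq:min} applied to $\vecs$, or when $w$'s assets would drop below $\hat{a}_w$. When a phase boundary is reached, the flooding issue of Lemma~\ref{lem:flood} appears: if the vertices receiving positive slope reach a non-singleton sink SCC, we flood it via LP~\eqref{eq:FloodLP} before continuing, exactly as in Algorithm~\ref{alg:minClear}. The procedure is thus a parametric version of Algorithm~\ref{alg:minClear} in the single parameter $\rho$, and every step either changes the phase or terminates. This gives $O(n+k)$ linear pieces, a bound that comes from the same counting as in Theorem~\ref{thm:alg}. On each piece the constraint $\hat{b}_w(\rho) \ge \hat{a}_w$ is linear in $\rho$, so $\rho^*$ is the right endpoint of the last piece on which it holds; equivalently, it is the optimum of a small LP maximizing $\rho$ subject to the linear constraints of that phase. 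Existence of a creditor-positive trade is then decided by checking whether $\hat{b}_v(\rho^*) > \hat{a}_v$ with $\rho^* \le \min\{a^{(x)}_w,\ell_e\}$. Alternatively, binary search over the polynomially many breakpoints combined with one LP per phase also works, in line with the paper's remark.

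The main obstacle is the negative injection at $w$. Algorithm~\ref{alg:minClear} only handles monotone increases, and the minimal clearing state is not continuous when assets are removed from a flooded component. My first step would be to avoid negative injections altogether. I would reparametrize by moving to $\rho_{\min}$ and then treat $w$'s external-asset decrease as compensated by its incoming payments. By the argument in Lemma~\ref{lem:tradeStructure}, within the creditor-positive range $w$ never lies in newly flooded components and its outgoing payments stay fixed. So we may fix $w$'s outgoing payments at $\min\{\hat{a}_w, L^+(w)\}$, cut $w$ off as a sink that must absorb exactly the injected mass, and run the monotone parametric procedure only on increases originating at $v$. The stopping rule is the first moment at which less than the full increment arrives at $w$; this is a linear condition $s_w = 1$ within each phase.
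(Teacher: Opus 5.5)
Your plan follows the paper's proof almost step for step: slopes from $\vecs(\matI-\matM)=\vece_v-\vece_w$, flooding at phase changes, and stopping once $w$ would lose assets. It therefore inherits the same unjustified step, which the paper also asserts in Lemma~\ref{lem:tradeStructure} and at the start of its proof. That step is the claim that the post-trade minimal clearing state at $\rho_{\min}$ is $\hat{\veca}$ and can be continued from there. Your repair in the last paragraph, freezing $w$'s outgoing payments at $\min\{\hat a_w, L^+(w)\}$ and treating $w$ as a sink, has the same defect. Its justification is also circular, because it presupposes that $\rho$ already lies in the creditor-positive range. Both constructions only produce \emph{some} clearing state of the post-trade network, i.e., an upper bound on its minimal clearing state $\hat{\vecb}(\rho)$, so they cannot certify $\hat b_w(\rho)\ge \hat a_w$. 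The trade can close a cycle $w\to\cdots\to u\to w$ whose flow in $\hat{\veca}$ was financed by external assets of $w$. After the trade this flow is self-referential, and the minimal clearing state discards it.

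Concretely, take edge-ranking payments, which are piecewise-linear. Let $a^{(x)}_w=2$ and all other external assets be $0$. There are edges $(w,z)$ and $(w,u)$ with liability $1$, where $w$ pays $(w,z)$ first; the claim $e=(u,v)$ with liability $2$; and edges $(v,y)$ with liability $1$ and $(v,w)$ with liability $10$, where $v$ pays $(v,y)$ first. The clearing state is unique with $\hat a_w=2$ and $\hat a_u=\hat a_v=1$, so $\rho_{\min}=1$ and $\rho\le 2$. Evaluate the post-trade network at $\hat{\veca}$: $w$ is solvent and the active edges are $(u,w)$ and $(v,w)$. Hence $s_v=1>0=s_w$, and in your cut-off version all of $v$'s increment reaches $w$. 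No flooding is triggered, and you would output $\rho^*=2$. However, for every $\rho\in[1,2]$ the post-trade minimal clearing state has $v$ paying $1$ to $y$ and $\rho-1$ to $w$. So $w$ holds exactly $1$ and spends it on $(w,z)$, while the cycle $w\to u\to w$ carries nothing: $\hat b_w(\rho)=1<2$. In particular $\hat{\vecb}(\rho_{\min})\neq\hat{\veca}$. For $\rho<1$ we get $\hat b_v(\rho)=\rho<1$. Thus no creditor-positive trade exists, and your procedure answers incorrectly. A correct algorithm must work with the genuine post-trade minimal clearing state, e.g., computed by Algorithm~\ref{alg:minClear} on the post-trade network with external assets parametrized by $\rho$. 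It also needs an argument controlling how that state depends on $\rho$, which may be discontinuous, since lowering $a^{(x)}_w$ can make it jump down.
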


\begin{proof}
    Suppose we use a return of $\rho_{\min}$. As observed above, the resulting minimal clearing state is $\hat{\vecb} = \hat{\veca}$. Consider the active graph $G_{\hat{\vecb}}$. Let us increase the return $\rho$ by a sufficiently small value $\delta$ and denote the resulting minimal clearing state by $\hat{\vecb}^{(\delta)}$. 
    
    First, suppose we do not pass an interval border on any edge, i.e., $G_{\hat{\vecb}^{(\delta)}}' = G_{\hat{\vecb}}$. Then the effect on the minimal clearing state must be linear, i.e., the change in the assets is given by
    \[
        \vecd =  \hat{\vecb}^{(\delta)} - \hat{\vecb} = \delta \cdot \vece_v - \delta \cdot \vece_w + \vecd \; \matM,
    \]
    where $\vece_v$ and $\vece_w$ are unit vectors with an entry of 1 for $v$ and $w$, respectively, and $\matM$ is the matrix of all slopes of edges in $G_{\hat{\vecb}}$ (c.f.\ \eqref{eq:linearIncrease} above). $\vecd$ is linear in $\delta$ with slopes
    \[
        \vecs = (\matI - \matM)^{-1} \cdot (\vece_v - \vece_w).
    \]
    As observed in the proof of Lemma~\ref{lem:tradeStructure}, we have $s_w \le 0 \le s_v$. If $s_w = 0$, we can increase the return and it will stay creditor-positive. Given that the assets of $w$ remain $a_w$, this Pareto-improve all assets in the network (and, thus, $\vecs \ge \vecZero$ coordinate-wise).
    
    An increase can be implemented very similarly as in Algorithm~\ref{alg:minClear}. We adjust all payments linearly until we reach an interval border for some payment function. 

    Second, suppose $G_{\hat{\vecb}}$ passes an interval border upon increase of $\rho$. More precisely, there is a positive slope of $v$ and slope $0$ for $w$, and raising the return requires a change in the active graph. If any larger creditor-positive return exists, it would further raise the assets of $v$ (and keep the assets of $w$ at $a_w$). This means that we first have to apply the flooding operation on all non-sink SCCs reachable from $v$ as in Algorithm~\ref{alg:minClear} above. Then, we again determine the slopes of further increase as before and check if the slope of $w$ remains 0 or becomes negative. 

    We check the existence of a creditor-positive return as follows: Compute the slopes $\vecs$ for $G_{\hat{\vecb}}$. The initial slopes must be $s_v > 0 = s_w$, otherwise no creditor-positive return exists. If they are, and $G_{\hat{\vecb}}$ is located at an interval border, flood the appropriate SCCs and check the slopes again. If they continue to be $s_v > 0 = s_w$, a creditor-positive return exists; otherwise not.

    If there is a creditor-positive return, we can search iteratively by increasing the return, changing the active graph, and flooding components as long as the resulting slopes are $s_v > 0 = s_w$. It requires repeatedly solving systems of linear equations. The approach is very similar to Algorithm~\ref{alg:minClear}, and we obtain the same asymptotic upper bound on the running time. Alternatively, we can binary search over the interval $[\rho_{\min},\min\{a_v^{(x)}, \ell_e\}]$. Once we find a return that is creditor-positive, we refine it by computing the slopes and increasing the return until the active graph hits the next interval border. This approach is faster if $\rho^*$ is large and there are many interval borders for small creditor-positive returns. It can be slower if there are few interval borders and $\rho^* \ll \min\{a_v^{(x)}, \ell_e\}$.
\end{proof}

\bibliographystyle{abbrv} 
\bibliography{financialBib}

\clearpage

\appendix
\section{Examples}
\label{app:examples}

\begin{example}
    \label{exmpl:bottom iteration} \rm
    The bottom iteration has a rather intuitive meaning.
    It corresponds to a process in which banks begin to pay off their debt with their external assets and, in each step, use those new assets that they just received.
    Consider the following network\\

    \begin{minipage}{0.34\textwidth}
    \begin{tikzpicture}
        \node[draw, circle, label={above:\(1\)}] (u) at (0,0) {\(u\)};
        \node[draw, circle, label={above:\(0\)}] (v) at (2,1) {\(v\)};
        \node[draw, circle, label={below:\(0\)}] (w) at (2,-1) {\(w\)};

        \path[->]
            (u) edge[bend left=20] node[pos=0.6,above] {\footnotesize\(0/1\)} (w)
            (w) edge[bend left=20] node[pos=0.5,below] {\footnotesize\(0/1\)} (u)
            (u) edge node[pos=0.5,above] {\footnotesize\(0/1\)} (v);

        \node (label) at (1,-2) {\footnotesize step 0};
    \end{tikzpicture}
    \end{minipage}
    \begin{minipage}{0.34\textwidth}
    \begin{tikzpicture}
        \node[draw, circle, label={above:\(1\)}] (u) at (0,0) {\(u\)};
        \node[draw, circle, label={above:\(0\)}] (v) at (2,1) {\(v\)};
        \node[draw, circle, label={below:\(0\)}] (w) at (2,-1) {\(w\)};

        \path[->]
            (u) edge[bend left=20] node[pos=0.6,above] {\footnotesize\({\color{Maroon}\frac{1}{2}}/1\)} (w)
            (w) edge[bend left=20] node[pos=0.5,below] {\footnotesize\(0/1\)} (u)
            (u) edge node[pos=0.5,above] {\footnotesize\({\color{Maroon}\frac{1}{2}}/1\)} (v);

        \node (label) at (1,-2) {\footnotesize step 1};
    \end{tikzpicture}
    \end{minipage}
    \begin{minipage}{0.34\textwidth}
    \begin{tikzpicture}
        \node[draw, circle, label={above:\(1\)}] (u) at (0,0) {\(u\)};
        \node[draw, circle, label={above:\(0\)}] (v) at (2,1) {\(v\)};
        \node[draw, circle, label={below:\(0\)}] (w) at (2,-1) {\(w\)};

        \path[->]
            (u) edge[bend left=20] node[pos=0.6,above] {\footnotesize\(\frac{1}{2}/1\)} (w)
            (w) edge[bend left=20] node[pos=0.5,below] {\footnotesize\({\color{Maroon}\frac{1}{2}}/1\)} (u)
            (u) edge node[pos=0.5,above] {\footnotesize\(\frac{1}{2}/1\)} (v);

        \node (label) at (1,-2) {\footnotesize step 2};
    \end{tikzpicture}
    \end{minipage}

    \begin{minipage}{0.34\textwidth}
    \begin{tikzpicture}
        \node[draw, circle, label={above:\(1\)}] (u) at (0,0) {\(u\)};
        \node[draw, circle, label={above:\(0\)}] (v) at (2,1) {\(v\)};
        \node[draw, circle, label={below:\(0\)}] (w) at (2,-1) {\(w\)};

        \path[->]
            (u) edge[bend left=20] node[pos=0.6,above] {\footnotesize\({\color{Maroon}\frac{3}{4}}/1\)} (w)
            (w) edge[bend left=20] node[pos=0.5,below] {\footnotesize\(\frac{1}{2}/1\)} (u)
            (u) edge node[pos=0.5,above] {\footnotesize\({\color{Maroon}\frac{3}{4}}/1\)} (v);

        \node (label) at (1,-2) {\footnotesize step 3};
    \end{tikzpicture}
    \end{minipage}
    \begin{minipage}{0.34\textwidth}
    \begin{tikzpicture}
        \node[draw, circle, label={above:\(1\)}] (u) at (0,0) {\(u\)};
        \node[draw, circle, label={above:\(0\)}] (v) at (2,1) {\(v\)};
        \node[draw, circle, label={below:\(0\)}] (w) at (2,-1) {\(w\)};

        \path[->]
            (u) edge[bend left=20] node[pos=0.6,above] {\footnotesize\({\color{Maroon}1}/1\)} (w)
            (w) edge[bend left=20] node[pos=0.5,below] {\footnotesize\({\color{Maroon}1}/1\)} (u)
            (u) edge node[pos=0.5,above] {\footnotesize\({\color{Maroon}1}/1\)} (v);

        \node (label) at (1,-2) {\footnotesize step $i \to \infty$};
    \end{tikzpicture}
    \end{minipage}
    %
    %
    \vspace{.5cm}\\
    In this case each edge will see a payment of
    $\sum_{i = 0}^n \left( \frac{1}{2} \right)^i$ in step \(2n\).
    This is a geometric series, so all edges will have payments of \(1\) in the minimal clearing state.
\end{example}

\begin{example} 
    \label{exmpl:bottom iteration 2} \rm
    We consider the bottom iteration in an example with default rates, where in the limit we do not reach the minimal clearing state. Consider the following network, and let $\alpha_v = \alpha_w = \beta_v = \beta_w = 1/2$.
    
    \bigskip
    \begin{minipage}{0.34\textwidth}
    \begin{tikzpicture}
        \node[draw, circle, label={above:\(1\)}] (v) at (0,0) {\(v\)};
        \node[draw, circle, label={above:\(1\)}] (w) at (2,0) {\(w\)};

        \path[->]
            (v) edge[bend left=20] node[pos=0.6,above] {\footnotesize\(0/2\)} (w)
            (w) edge[bend left=20] node[pos=0.5,below] {\footnotesize\(0/2\)} (v);

        \node (label) at (1,-1.5) {\footnotesize step 0};
    \end{tikzpicture}
    \end{minipage}
    \begin{minipage}{0.34\textwidth}
    \begin{tikzpicture}
        \node[draw, circle, label={above:\(1\)}] (v) at (0,0) {\(v\)};
        \node[draw, circle, label={above:\(1\)}] (w) at (2,0) {\(w\)};

        \path[->]
            (v) edge[bend left=20] node[pos=0.6,above] {\footnotesize\({\color{Maroon}\frac{1}{2}}/2\)} (w)
            (w) edge[bend left=20] node[pos=0.5,below] {\footnotesize\({\color{Maroon}\frac{1}{2}}/2\)} (v);

        \node (label) at (1,-1.5) {\footnotesize step 1};
    \end{tikzpicture}
    \end{minipage}
    \begin{minipage}{0.34\textwidth}
    \begin{tikzpicture}
        \node[draw, circle, label={above:\(1\)}] (v) at (0,0) {\(v\)};
        \node[draw, circle, label={above:\(1\)}] (w) at (2,0) {\(w\)};

        \path[->]
            (v) edge[bend left=20] node[pos=0.6,above] {\footnotesize\({\color{Maroon}\frac{3}{4}}/2\)} (w)
            (w) edge[bend left=20] node[pos=0.5,below] {\footnotesize\({\color{Maroon}\frac{3}{4}}/2\)} (v);

        \node (label) at (1,-1.5) {\footnotesize step 2};
    \end{tikzpicture}
    \end{minipage}

    \medskip
    \begin{minipage}{0.34\textwidth}
    \begin{tikzpicture}
        \node[draw, circle, label={above:\(1\)}] (v) at (0,0) {\(v\)};
        \node[draw, circle, label={above:\(1\)}] (w) at (2,0) {\(w\)};

        \path[->]
            (v) edge[bend left=20] node[pos=0.6,above] {\footnotesize\({\color{Maroon} 1}/2\)} (w)
            (w) edge[bend left=20] node[pos=0.5,below] {\footnotesize\({\color{Maroon} 1}/2\)} (v);

        \node (label) at (1,-1.5) {\footnotesize step $i \to \infty$};
    \end{tikzpicture}
    \end{minipage}
    \begin{minipage}{0.34\textwidth}
    \begin{tikzpicture}
        \node[draw, circle, label={above:\(1\)}] (v) at (0,0) {\(v\)};
        \node[draw, circle, label={above:\(1\)}] (w) at (2,0) {\(w\)};

        \path[->]
            (v) edge[bend left=20] node[pos=0.6,above] {\footnotesize\({\color{Maroon} 2}/2\)} (w)
            (w) edge[bend left=20] node[pos=0.5,below] {\footnotesize\({\color{Maroon} 2}/2\)} (v);

        \node (label) at (1,-1.5) {\footnotesize Minimal clearing state};
    \end{tikzpicture}
    \end{minipage}
    \vspace{.5cm}\\
    Throughout the iteration, both $v$ and $w$ remain insolvent, the payments on each of their incoming edges remain strictly below 1. In the limit, these payments become 1, and exactly at this point both banks become solvent. Consequently, in the minimal clearing state they are both solvent and clear all debt with payments of 2 per edge. Consequently, the limit of the iteration is not the minimal clearing state.
\end{example}

\begin{example}
    \label{exmpl:runthrough} \rm
    We will apply our algorithm to compute a minimal clearing state in the following network with edge-ranking payment functions. For bank \(v\) the edge to \(w\) has higher priority than the one to \(y\). Each bank $v$ is labeled with $b^{(x)}_v / a^{(x)}_v$ and each edge $e = (u,v)$ with $p_e(b_u) / \ell_e$.

\begin{minipage}{0.33\textwidth}
\begin{tikzpicture}\
    \node[draw, circle, label={above:\({\scriptstyle 0/}1\)}] (u) at (0,1.5) {\(u\)};
    \node[draw, circle, label={below:\({\scriptstyle 0/}2\)}] (v) at (0,0) {\(v\)};
    \node[draw, circle, label={below:\({\scriptstyle 0/}0\)}] (w) at (1.5,0) {\(w\)};
    \node[draw, circle, label={below:\({\scriptstyle 0/}0\)}] (y) at (-1.5,0) {\(y\)};

    \draw[->] (u) -- (v) node[midway, label={right:\(\scriptstyle0/2\)}] {};
    \draw[->] (v) -- (w) node[midway, label={above:\(\scriptstyle0/2\)}] {};
    \draw[->] (v) edge[bend left] (y) node[label={right:\(\scriptstyle0/2\)}] at (-1.25,0.5) {};
    \draw[->] (y) edge[bend left] (v) node[label={right:\(\scriptstyle0/2\)}] at (-1.25,-0.5) {};
\end{tikzpicture}
\end{minipage}
\begin{minipage}{0.33\textwidth}
\begin{tikzpicture}
    \draw[rounded corners,fill=gray!20, dashed] (-0.5,-0.5)--(-0.5,0.5)--(0.5,0.5)--(0.5,-0.5)--cycle;
    \draw[rounded corners,fill=gray!20, dashed] (-2,-0.5)--(-2,0.5)--(-1,0.5)--(-1,-0.5)--cycle;
    \draw[rounded corners,fill=gray!20, dashed] (-0.5,1)--(-0.5,2)--(0.5,2)--(0.5,1)--cycle;
    \draw[rounded corners,fill=gray!20, dashed] (2,-0.5)--(2,0.5)--(1,0.5)--(1,-0.5)--cycle;
    
    \node[draw, circle] (u) at (0,1.5) {\(u\)};
    \node[draw, circle] (v) at (0,0) {\(v\)};
    \node[draw, circle] (w) at (1.5,0) {\(w\)};
    \node[draw, circle] (y) at (-1.5,0) {\(y\)};

    \draw[->] (u) -- (v) node {};
    \draw[->] (v) -- (w) node {};
    \draw[->, dashed] (v) edge[bend left] (y) node at (-1.25,0.5) {};
    \draw[->] (y) edge[bend left] (v) node at (-1.25,-0.5) {};
\end{tikzpicture}
\end{minipage}
\begin{minipage}{0.33\textwidth}
\begin{tikzpicture}
    \node[draw, circle, label={above:\({\scriptstyle 0{\color{Maroon} + 1}/}1\)}] (u) at (0,1.5) {\(u\)};
    \node[draw, circle, label={below:\({\scriptstyle 0/}2\)}] (v) at (0,0) {\(v\)};
    \node[draw, circle, label={below:\({\scriptstyle 0/}0\)}] (w) at (1.5,0) {\(w\)};
    \node[draw, circle, label={below:\({\scriptstyle 0/}0\)}] (y) at (-1.5,0) {\(y\)};

    \draw[->] (u) -- (v) node[midway, label={right:\(\scriptstyle0{\color{Maroon} + 1}/2\)}] {};
    \draw[->] (v) -- (w) node[midway, label={above:\(\scriptstyle0{\color{Maroon} + 1}/2\)}] {};
    \draw[->] (v) edge[bend left] (y) node[label={right:\(\scriptstyle0/2\)}] at (-1.25,0.5) {};
    \draw[->] (y) edge[bend left] (v) node[label={right:\(\scriptstyle0/2\)}] at (-1.25,-0.5) {};
\end{tikzpicture}
\end{minipage}

    \vspace{.25cm}
    \noindent We choose \(u\) as the first bank to insert assets.
    The middle figure shows the strongly connected components.
    The right figure shows the increase of payments and external assets with $\delta^* = 1$.

\begin{minipage}{0.33\textwidth}
\begin{tikzpicture}
    \node[draw, circle, label={above:\({\scriptstyle 1/}1\)}] (u) at (0,1.5) {\(u\)};
    \node[draw, circle, label={below:\({\scriptstyle 0/}2\)}] (v) at (0,0) {\(v\)};
    \node[draw, circle, label={below:\({\scriptstyle 0/}0\)}] (w) at (1.5,0) {\(w\)};
    \node[draw, circle, label={below:\({\scriptstyle 0/}0\)}] (y) at (-1.5,0) {\(y\)};

    \draw[->] (u) -- (v) node[midway, label={right:\(\scriptstyle1/2\)}] {};
    \draw[->] (v) -- (w) node[midway, label={above:\(\scriptstyle1/2\)}] {};
    \draw[->] (v) edge[bend left] (y) node[label={right:\(\scriptstyle0/2\)}] at (-1.25,0.5) {};
    \draw[->] (y) edge[bend left] (v) node[label={right:\(\scriptstyle0/2\)}] at (-1.25,-0.5) {};
\end{tikzpicture}
\end{minipage}
\begin{minipage}{0.33\textwidth}
\begin{tikzpicture}
    \draw[rounded corners,fill=gray!20, dashed] (-0.5,-0.5)--(-0.5,0.5)--(0.5,0.5)--(0.5,-0.5)--cycle;
    \draw[rounded corners,fill=gray!20, dashed] (-2,-0.5)--(-2,0.5)--(-1,0.5)--(-1,-0.5)--cycle;
    \draw[rounded corners,fill=gray!20, dashed] (-0.5,1)--(-0.5,2)--(0.5,2)--(0.5,1)--cycle;
    \draw[rounded corners,fill=gray!20, dashed] (2,-0.5)--(2,0.5)--(1,0.5)--(1,-0.5)--cycle;
    
    \node[draw, circle] (u) at (0,1.5) {\(u\)};
    \node[draw, circle] (v) at (0,0) {\(v\)};
    \node[draw, circle] (w) at (1.5,0) {\(w\)};
    \node[draw, circle] (y) at (-1.5,0) {\(y\)};

    \draw[->] (u) -- (v) node {};
    \draw[->] (v) -- (w) node {};
    \draw[->, dashed] (v) edge[bend left] (y) node at (-1.25,0.5) {};
    \draw[->] (y) edge[bend left] (v) node at (-1.25,-0.5) {};
\end{tikzpicture}
\end{minipage}
\begin{minipage}{0.33\textwidth}
\begin{tikzpicture}
    \node[draw, circle, label={above:\({\scriptstyle 1/}1\)}] (u) at (0,1.5) {\(u\)};
    \node[draw, circle, label={below:\({\scriptstyle 0{\color{Maroon} + 1}/}2\)}] (v) at (0,0) {\(v\)};
    \node[draw, circle, label={below:\({\scriptstyle 0/}0\)}] (w) at (1.5,0) {\(w\)};
    \node[draw, circle, label={below:\({\scriptstyle 0/}0\)}] (y) at (-1.5,0) {\(y\)};

    \draw[->] (u) -- (v) node[midway, label={right:\(\scriptstyle1/2\)}] {};
    \draw[->] (v) -- (w) node[midway, label={above:\(\scriptstyle1{\color{Maroon} + 1}/2\)}] {};
    \draw[->] (v) edge[bend left] (y) node[label={right:\(\scriptstyle0/2\)}] at (-1.25,0.5) {};
    \draw[->] (y) edge[bend left] (v) node[label={right:\(\scriptstyle0/2\)}] at (-1.25,-0.5) {};
\end{tikzpicture}
\end{minipage}

    \vspace{.25cm}
    \noindent The next and last bank to have its assets inserted will be \(v\).
    The strongly connected components have not changed.
    This time we can see that inserting all of \(v\)'s external assets at once would cross the breakpoint of \(v\)'s edge to \(w\).
    Hence we have \(\delta^* = 1\).

\begin{minipage}{0.33\textwidth}
\begin{tikzpicture}
    \node[draw, circle, label={above:\({\scriptstyle 1/}1\)}] (u) at (0,1.5) {\(u\)};
    \node[draw, circle, label={below:\({\scriptstyle 1/}2\)}] (v) at (0,0) {\(v\)};
    \node[draw, circle, label={below:\({\scriptstyle 0/}0\)}] (w) at (1.5,0) {\(w\)};
    \node[draw, circle, label={below:\({\scriptstyle 0/}0\)}] (y) at (-1.5,0) {\(y\)};

    \draw[->] (u) -- (v) node[midway, label={right:\(\scriptstyle1/2\)}] {};
    \draw[->] (v) -- (w) node[midway, label={above:\(\scriptstyle2/2\)}] {};
    \draw[->] (v) edge[bend left] (y) node[label={right:\(\scriptstyle0/2\)}] at (-1.25,0.5) {};
    \draw[->] (y) edge[bend left] (v) node[label={right:\(\scriptstyle0/2\)}] at (-1.25,-0.5) {};
\end{tikzpicture}
\end{minipage}
\begin{minipage}{0.33\textwidth}
\begin{tikzpicture}
    \draw[rounded corners,fill=gray!20, dashed] (-2,-0.5)--(-2,0.5)--(0.5,0.5)--(0.5,-0.5)--cycle;
    \draw[rounded corners,fill=gray!20, dashed] (-0.5,1)--(-0.5,2)--(0.5,2)--(0.5,1)--cycle;
    \draw[rounded corners,fill=gray!20, dashed] (2,-0.5)--(2,0.5)--(1,0.5)--(1,-0.5)--cycle;
    
    \node[draw, circle] (u) at (0,1.5) {\(u\)};
    \node[draw, circle] (v) at (0,0) {\(v\)};
    \node[draw, circle] (w) at (1.5,0) {\(w\)};
    \node[draw, circle] (y) at (-1.5,0) {\(y\)};

    \draw[->] (u) -- (v) node {};
    \draw[->, dashed] (v) -- (w) node {};
    \draw[->] (v) edge[bend left] (y) node at (-1.25,0.5) {};
    \draw[->] (y) edge[bend left] (v) node at (-1.25,-0.5) {};
\end{tikzpicture}
\end{minipage}
\begin{minipage}{0.33\textwidth}
\begin{tikzpicture}
    \node[draw, circle, label={above:\({\scriptstyle 1/}1\)}] (u) at (0,1.5) {\(u\)};
    \node[draw, circle, label={below:\({\scriptstyle 1/}2\)}] (v) at (0,0) {\(v\)};
    \node[draw, circle, label={below:\({\scriptstyle 0/}0\)}] (w) at (1.5,0) {\(w\)};
    \node[draw, circle, label={below:\({\scriptstyle 0/}0\)}] (y) at (-1.5,0) {\(y\)};

    \draw[->] (u) -- (v) node[midway, label={right:\(\scriptstyle1/2\)}] {};
    \draw[->] (v) -- (w) node[midway, label={above:\(\scriptstyle2/2\)}] {};
    \draw[->] (v) edge[bend left] (y) node[label={right:\(\scriptstyle0{\color{Blue} + 2}/2\)}] at (-1.47,0.5) {};
    \draw[->] (y) edge[bend left] (v) node[label={right:\(\scriptstyle0{\color{Blue} + 2}/2\)}] at (-1.47,-0.5) {};
\end{tikzpicture}
\end{minipage}

    \vspace{.25cm}
    \noindent We can see that the active edges and hence the strongly connected components have changed.
    There is now a flooded region $\{v,y\}$.
    We solve the corresponding LP and increase the payments.
    Afterwards we need to update the strongly connected components.

\begin{minipage}{0.33\textwidth}
\begin{tikzpicture}
    \node[draw, circle, label={above:\({\scriptstyle 1/}1\)}] (u) at (0,1.5) {\(u\)};
    \node[draw, circle, label={below:\({\scriptstyle 1/}2\)}] (v) at (0,0) {\(v\)};
    \node[draw, circle, label={below:\({\scriptstyle 0/}0\)}] (w) at (1.5,0) {\(w\)};
    \node[draw, circle, label={below:\({\scriptstyle 0/}0\)}] (y) at (-1.5,0) {\(y\)};

    \draw[->] (u) -- (v) node[midway, label={right:\(\scriptstyle1/2\)}] {};
    \draw[->] (v) -- (w) node[midway, label={above:\(\scriptstyle2/2\)}] {};
    \draw[->] (v) edge[bend left] (y) node[label={right:\(\scriptstyle2/2\)}] at (-1.25,0.5) {};
    \draw[->] (y) edge[bend left] (v) node[label={right:\(\scriptstyle2/2\)}] at (-1.25,-0.5) {};
\end{tikzpicture}
\end{minipage}
\begin{minipage}{0.33\textwidth}
\begin{tikzpicture}
    \draw[rounded corners,fill=gray!20, dashed] (-0.5,-0.5)--(-0.5,0.5)--(0.5,0.5)--(0.5,-0.5)--cycle;
    \draw[rounded corners,fill=gray!20, dashed] (-2,-0.5)--(-2,0.5)--(-1,0.5)--(-1,-0.5)--cycle;
    \draw[rounded corners,fill=gray!20, dashed] (-0.5,1)--(-0.5,2)--(0.5,2)--(0.5,1)--cycle;
    \draw[rounded corners,fill=gray!20, dashed] (2,-0.5)--(2,0.5)--(1,0.5)--(1,-0.5)--cycle;
    
    \node[draw, circle] (u) at (0,1.5) {\(u\)};
    \node[draw, circle] (v) at (0,0) {\(v\)};
    \node[draw, circle] (w) at (1.5,0) {\(w\)};
    \node[draw, circle] (y) at (-1.5,0) {\(y\)};

    \draw[->] (u) -- (v) node {};
    \draw[->, dashed] (v) -- (w) node {};
    \draw[->, dashed] (v) edge[bend left] (y) node at (-1.25,0.5) {};
    \draw[->, dashed] (y) edge[bend left] (v) node at (-1.25,-0.5) {};
\end{tikzpicture}
\end{minipage}
\begin{minipage}{0.33\textwidth}
\begin{tikzpicture}
    \node[draw, circle, label={above:\({\scriptstyle 1/}1\)}] (u) at (0,1.5) {\(u\)};
    \node[draw, circle, label={below:\({\scriptstyle 1{\color{Maroon} + 1}/}2\)}] (v) at (0,0) {\(v\)};
    \node[draw, circle, label={below:\({\scriptstyle 0/}0\)}] (w) at (1.5,0) {\(w\)};
    \node[draw, circle, label={below:\({\scriptstyle 0/}0\)}] (y) at (-1.5,0) {\(y\)};

    \draw[->] (u) -- (v) node[midway, label={right:\(\scriptstyle1/2\)}] {};
    \draw[->] (v) -- (w) node[midway, label={above:\(\scriptstyle2/2\)}] {};
    \draw[->] (v) edge[bend left] (y) node[label={right:\(\scriptstyle2/2\)}] at (-1.25,0.5) {};
    \draw[->] (y) edge[bend left] (v) node[label={right:\(\scriptstyle2/2\)}] at (-1.25,-0.5) {};
\end{tikzpicture}
\end{minipage}

    \vspace{.25cm}
    \noindent Since there are no outgoing active edges from $v$ any more, we get $\vecd = \mathbf{0}$ and $\delta^* = 1$.

\begin{minipage}{0.33\textwidth}
\begin{tikzpicture}
    \node[draw, circle, label={above:\({\scriptstyle 1/}1\)}] (u) at (0,1.5) {\(u\)};
    \node[draw, circle, label={below:\({\scriptstyle 2/}2\)}] (v) at (0,0) {\(v\)};
    \node[draw, circle, label={below:\({\scriptstyle 0/}0\)}] (w) at (1.5,0) {\(w\)};
    \node[draw, circle, label={below:\({\scriptstyle 0/}0\)}] (y) at (-1.5,0) {\(y\)};

    \draw[->] (u) -- (v) node[midway, label={right:\(\scriptstyle1/2\)}] {};
    \draw[->] (v) -- (w) node[midway, label={above:\(\scriptstyle2/2\)}] {};
    \draw[->] (v) edge[bend left] (y) node[label={right:\(\scriptstyle2/2\)}] at (-1.25,0.5) {};
    \draw[->] (y) edge[bend left] (v) node[label={right:\(\scriptstyle2/2\)}] at (-1.25,-0.5) {};
\end{tikzpicture}
\end{minipage}

    \vspace{.25cm}
    \noindent At this point all external assets have been inserted, and we have successfully computed the minimal clearing state.
\end{example}
\end{document}